\pgfplotsset{compat=1.12}
\definecolor{DarkGreen}{rgb}{0.1,0.5,0.1}
\definecolor{DarkRed}{rgb}{0.5,0.1,0.1}
\definecolor{DarkBlue}{rgb}{0.1,0.1,0.5}
\definecolor{Gray}{rgb}{0.2,0.2,0.2}
\definecolor{c1}{RGB}{38, 70, 83}
\definecolor{c2}{RGB}{42, 157, 143}
\definecolor{c3}{RGB}{233, 196, 106}
\definecolor{c5}{RGB}{231, 111, 81}
\definecolor{c4}{RGB}{244, 162, 97}
\definecolor{c1}{RGB}{38, 70, 83}
\definecolor{c2}{RGB}{42, 157, 143}
\definecolor{c3}{RGB}{233, 196, 106}
\definecolor{c5}{RGB}{231, 111, 81}
\definecolor{c4}{RGB}{244, 162, 97}
\lstdefinestyle{mystyle}{
    commentstyle=\color{DarkBlue},
    keywordstyle=\color{DarkRed},
    numberstyle=\tiny\color{Gray},
    stringstyle=\color{DarkGreen},
    basicstyle=\footnotesize,
    breakatwhitespace=false,         
    breaklines=true,                 
    captionpos=b,                    
    keepspaces=true,                 
    numbers=left,                    
    numbersep=5pt,                  
    showspaces=false,                
    showstringspaces=false,
    showtabs=false,                  
    tabsize=2
}
\def\draft{1}
\def\submit{0}
    \def\ShowAuthNotes{1}
    \def\ShowAuthNotes{0}
\newcommand{\forsubmit}[1]{#1}
\newcommand{\forreals}[1]{}
\newcommand{\forreals}[1]{#1}
\newcommand{\forsubmit}[1]{}
\newcommand{\authnote}[2]{{ \footnotesize \bf{\color{DarkRed}[#1's Note:
{\color{DarkBlue}#2}]}}}
\newcommand{\authnote}[2]{}
\newtheorem{theorem}{Theorem}[section]
\newtheorem{lemma}[theorem]{Lemma}
\newtheorem{proposition}{Proposition}
\newtheorem{definition}[theorem]{Definition}
\newtheorem*{definition*}{Definition}
\newtheorem*{proposition*}{Proposition}
\theoremstyle{definition}
\newtheorem*{example*}{Example}
\newtheoremstyle{example_contd}
{\topsep} {\topsep}%
{}
{}
{\bfseries}
{.}
{1em}
{\thmname{#1} \thmnumber{ #2}\thmnote{#3} (continued)}
\theoremstyle{example_contd}
\newcommand{\chapterref}[1]{\hyperref[ch:#1]{Chapter~\ref{ch:#1}}}
\newcommand{\claimref}[1]{\hyperref[claim:#1]{Claim~\ref{claim:#1}}}
\newcommand{\corollaryref}[1]{\hyperref[cor:#1]{Corollary~\ref{cor:#1}}}
\newcommand{\definitionlabel}[1]{\label{def:#1}}
\newcommand{\definitionref}[1]{\hyperref[def:#1]{Definition~\ref{def:#1}}}
\newcommand{\equationlabel}[1]{\label{eq:#1}}
\newcommand{\equationref}[1]{\hyperref[eq:#1]{Equation~\ref{eq:#1}}}
\newcommand{\factref}[1]{\hyperref[fact:#1]{Fact~\ref{fact:#1}}}
\newcommand{\figurelabel}[1]{\label{fig:#1}}
\newcommand{\figureref}[1]{\hyperref[fig:#1]{Figure~\ref{fig:#1}}}
\newcommand{\tableref}[1]{\hyperref[tab:#1]{Table~\ref{tab:#1}}}
\newcommand{\itemref}[1]{\hyperref[item:#1]{Item~(\ref{item:#1})}}
\newcommand{\lemmalabel}[1]{\label{lem:#1}}
\newcommand{\lemmaref}[1]{\hyperref[lem:#1]{Lemma~\ref{lem:#1}}}
\newcommand{\propref}[1]{\hyperref[prop:#1]{Proposition~\ref{prop:#1}}}
\newcommand{\expref}[1]{\hyperref[exp:#1]{Example~\ref{exp:#1}}}
\newcommand{\propositionlabel}[1]{\label{prop:#1}}
\newcommand{\propositionref}[1]{\hyperref[prop:#1]{Proposition~\ref{prop:#1}}}
\newcommand{\remarkref}[1]{\hyperref[rem:#1]{Remark~\ref{rem:#1}}}
\newcommand{\sectionlabel}[1]{\label{sec:#1}}
\newcommand{\sectionref}[1]{\hyperref[sec:#1]{Section~\ref{sec:#1}}}
\newcommand{\theoremlabel}[1]{\label{thm:#1}}
\newcommand{\theoremref}[1]{\hyperref[thm:#1]{Theorem~\ref{thm:#1}}}
\newcommand{\assumptionref}[1]{\hyperref[ass:#1]{Assumption~\ref{ass:#1}}}
\newcommand{\Esymb}{\mathbb{E}}
\DeclareMathOperator*{\E}{\Esymb}
\DeclareMathOperator*{\Var}{\mathrm{Var}}
\renewcommand{\Pr}{\mathrm{Pr}}
\newcommand{\prob}[1]{\Pr\big\{ #1 \big\}}
\newcommand{\Prob}[1]{\Pr[ #1 ]}
\newcommand{\Ex}[1]{\E\left[#1\right]}
\newcommand{\cD}{{\cal D}}
\newcommand{\cN}{{\cal N}}
\newcommand{\cX}{{\cal X}}
\renewcommand{\leq}{\leqslant}
\renewcommand{\geq}{\geqslant}
\newcommand{\from}{\colon}
\newcommand{\rd}{\,\mathrm{d}}
\newcommand{\bits}{\{0,1\}}
\newcommand{\R}{\mathbb{R}}
\newcommand{\ignore}[1]{}
\DeclareMathOperator*{\argmax}{arg\,max}
\renewcommand{\epsilon}{\varepsilon}
\newcommand{\remove}[1]{}
\newcommand{\Yhat}{\hat{Y}}
\renewcommand{\cX}{\mathcal{X}}
\newcommand{\Y}{Y}
\newcommand{\quant}[2]{F^{-1}_{#1}(#2)}
\newcommand{\cdf}[1]{\Phi\left(#1\right)}
\newcommand{\pdf}[1]{\phi\left(#1\right)}
\newcommand{\jcdf}[1]{\Phi_2\left(#1\right)}
\newcommand{\jpdf}[1]{\phi_2\left(#1\right)}
\newcommand{\nquant}[1]{\Phi^{-1}\left(#1\right)}
\newcommand{\rsq}{R^2}
\renewcommand{\min}[2]{\text{min} \left(#1, #2\right)}
\newcommand{\pder}[2][]{\frac{\partial#1}{\partial#2}}
\newcommand{\Cov}{\mathrm{Cov}}
\title{The Value of Prediction in Identifying the Worst-Off} 
\author{Unai Fischer-Abaigar\textsuperscript{1,2}\and Christoph Kern\textsuperscript{1,2}\and Juan Carlos Perdomo\textsuperscript{3}\thanks{Supported by the Center for Research on Computation and Society (CRCS) at Harvard University.}}
\date{}
\begin{document}

\maketitle

\begin{center}
\textsuperscript{1}LMU Munich
\hspace{1cm}
\textsuperscript{2}Munich Center for Machine Learning \hspace{1cm}
\textsuperscript{3}Harvard University
\end{center}
\begin{abstract}
Machine learning is increasingly used in government programs to identify and support the most vulnerable individuals, prioritizing assistance for those at greatest risk over optimizing aggregate outcomes. This paper examines the welfare impacts of prediction in equity-driven contexts, and how they compare to other policy levers, such as expanding bureaucratic capacity. Through mathematical models and a real-world case study on long-term unemployment amongst German residents, we develop a comprehensive understanding of the relative effectiveness of prediction in surfacing the worst-off. Our findings provide clear analytical frameworks and practical, data-driven tools that empower policymakers to make principled decisions when designing these systems.
\end{abstract}
\pagenumbering{gobble}
\pagenumbering{arabic}

\section{Introduction}

Faced with pressure to modernize, large bureaucracies are increasingly adopting risk prediction tools to improve efficiency and better serve their populations. Beyond optimizing aggregate outcomes, investments in these programs often aim to address historical inequities and prioritize the needs of the worst-off. 
For instance, in 2012, Wisconsin launched a risk prediction system to explicitly address deep racial disparities in academic achievement and improve high school graduation rates amongst underserved students. More broadly, such systems are particularly relevant in settings where normative considerations demand prioritizing those at the greatest risk of adverse outcomes, and where well-established downstream interventions can meaningfully benefit these vulnerable individuals.

From a design perspective, these risk predictors are challenging to evaluate because their value cannot be assessed without reference to the broader social context. The value of a risk predictor is ultimately determined by its impact on bottom-line welfare (e.g., graduation rates) and how these welfare impacts compare to those of other bureaucratic alternatives \citep{simone2022}. For example, to understand whether investments in prediction are truly valuable in Wisconsin, we need to assess how much better the risk predictor is at identifying at-risk students relative to existing policies. We also need to understand whether sophisticated prediction systems yield higher graduation rates amongst the underserved than structural investments in teacher training or better facilities. 

Equity-driven programs are pervasive in applications like social housing, poverty targeting, and unemployment assistance. In these contexts, many government agencies are exploring how algorithmic prediction systems may be an improvement over their current profiling processes \citep{kortner2023predictive}. Yet, due to the absence of an overarching framework that allows the systematic assessment of the relative impacts of different design decisions, efforts to improve predictive accuracy are rarely studied in concert with other policy levers such as expanding screening capacity.

Building on recent work in a budding area of learning in resource allocation contexts, we develop tools to evaluate the design and broader impact of prediction systems that aim to identify the worst-off members of a population. 
We develop a holistic understanding of the value of statistical prediction in these contexts through theoretical insights into foundational statistical models and a real-world case study on identifying long-term unemployment. 
Our results establish clear theoretical and empirical criteria characterizing the relative value of core design decisions within these problems. Specifically, we identify when improving prediction provides a higher marginal benefit in helping an institution identify the worst-off. This is compared to alternative strategies, such as keeping prediction accuracy fixed, expanding bureaucratic capacity and screening a larger population.

Interestingly, we show that prediction is a first and last-mile effort. The impacts of improving prediction are always outweighed by those of expanded screening capacity, except for when the system explains either none or almost all of the variance in outcomes. While this relationship is moderated by costs, it still largely holds when prediction improvements are more cost-efficient than measures that expand access.

These results are counternarrative to current efforts in empirical public policy where agencies focus on incremental improvements within complex prediction systems, starting from the solid baseline performances of their current processes \citep{desiere2019statistical, desiereUsingArtificialIntelligence2021}. Furthermore, implementing more complex profiling systems at scale comes with operational costs (such as staff training and data collection) which need to be contextualized by the cost-benefit ratio of expanding access. Our empirical case study explicates how to systematically assess the relative gains of these design components in a real-world application setting, translating formal insights into critical guidance for designers of these systems.

Our results provide theoretically principled and empirically grounded tools for policymakers to make informed decisions when designing prediction systems to identify the worst-off. They also offer a practical framework to help determine how much should be invested in prediction relative to other interventions and how to decide when prediction systems are ``good enough" for deployment.

\subsection{Overview of Framework and Contributions}

\paragraph{Setup.} We consider a scenario where a decision-maker seeks to identify worst-off members of a population, as determined by a real-valued welfare metric $\Y \in \R$, with the goal of prioritizing them for further screening and support.  
The population is represented by a distribution $\cD$ over features $X$ and outcomes $Y$. The planner aims to identify all individuals whose outcomes $Y$ fall below some threshold $t(\beta)$, $Y \leq t(\beta)$.  Here, $\beta \in [0,1]$ is a parameter (quantile) that determines the size of the population that is at risk, $\Pr[\Y \leq t(\beta)] = \beta$. For instance, in poverty prediction, $Y$ is income, and the goal is to identify everyone whose income is below some value. 

To solve this problem, the social planner has access to data $(X,Y) \sim \cD$ and builds a screening policy $\pi: \cX \rightarrow \{0,1\}$ that determines whether an individual with features $x$ is screened from the broader population to see if they belong to the worst-off group. Learning plays a fundamental role since the optimal policy is to predict each person's expected outcome, $f(x) = \Yhat  \approx \Ex{Y\mid X = x}$ and screen those in the bottom fraction, $\pi_f(x) = 1\{f(x) \leq t(\alpha)\}$. 

Unpacking this further, $\alpha \in [0,1]$, is a design parameter that determines how many people the planner can screen, $\Pr[f(x) \leq t(\alpha)] = \alpha$. The amount of resources $\alpha$ need not be equal to the size of the target population $\beta$. For instance, an organization might have normative goal of identifying the poorest 5\% of individuals, but only have the resource to screen $1\%$ of the population. Conversely, they might realize that predictions are not perfect, and that to identify the bottom $5\%$, they might have to screen $10\%$ of the population. 

Given a predictor $f$, a screening budget of $\alpha$, and a target parameter $\beta$, the value of a prediction system is equal to the fraction of the at-risk population that it identifies,
\begin{align*}
    V(\alpha, f; \beta) = \Pr_{\cD}[ f(x) \leq t(\alpha) \mid \Y \leq t(\beta)],
\end{align*}
where again $t(\alpha), t(\beta)$ are chosen to respect the design constraints. We focus on this notion of value since our driving motivation is to analyze domains like unemployment assistance, or poverty prediction, where there is no harm in the prediction system raising a false positive $(\pi(x)=1, Y > t(\beta))$. By and large, the true value of the system is equal to the extent that it helps an institution efficiently identify the needy amongst a large, diverse population.  

The focus of our work is to build a holistic understanding of prediction in these contexts by evaluating the relative impacts of different design parameter, such as expanding screening capacity or improving prediction, on this notion of bottom-line value $V(\alpha, f;\beta)$. We develop these insights through theoretical investigations as well as in-depth empirical case study. 

\paragraph{Mathematical Results.} Following \citet{pmlr-v235-perdomo24a}, we formalize the relative value of prediction for the worst-off by studying the \emph{prediction-access ratio} or PAR. Intuitively, the PAR measures the relative change in value achieved by optimizing different policy levers.
\begin{align*}
    \text{PAR} = \frac{\text{ Marginal Value of Expanding Access}}{\text{Marginal Value of Better Prediction}}.
\end{align*}
We formally define this quantity in \equationref{PAR}. While initially developed to specifically study the value of prediction in allocation problems where allocating goods to individuals had heterogeneous effects, here we extend this concept to analyze the value of prediction in a related, but distinct, setting where we aim to identify the worst-off. 

Small values of the PAR (i.e. PAR $< 1$) indicate that small improvements in prediction yield a much larger (relative) impact in the ability to target the worst-off than a small expansion in screening capacity. The opposite is true if the PAR is greater than 1. Calculating this quantity is a fundamental step in deciding which policy lever makes economic sense. 

\paragraph{Costs.} A full cost-benefit analysis requires combining the prediction-access ratio with the (marginal) costs of improvements in capacity $C_{\text{Access}}$ and prediction $C_{\text{Pred}}$. Once we factor in costs, it is easy to decide what to focus on. A social planner should expand access whenever
\begin{align*}
   \frac{C_{\text{Access}}}{C_{\text{Pred}}} < \text{PAR}
\end{align*}
and invest in better prediction otherwise. The (marginal) costs that competing policy levers carry are inherently context-dependent and will vary across application domains. In many applied settings the cost ratio is comparatively well understood, for example, the salary of an additional case-worker or the cost of a household survey. By presenting PAR separately from costs, we isolate the welfare side of the equation; domain experts can then plug in their own cost estimates to reach a policy decision. In particular, the PAR tells us how much we should be willing to \emph{pay} for improvements in prediction versus expanding access. 

We encourage future work to explore scenarios with more complex or less clearly defined cost structures. For instance, many practical applications involve recurring costs, such as ongoing staff salaries or periodic data collection, and fixed costs, such as initial investments in infrastructure or predictive model development. Analyzing how these cost structures affect welfare decisions over time, including amortization of fixed investments or identifying the point at which specific improvements become cost-effective, would significantly enhance our understanding of the relative value of prediction.
\\
\\
To build intuition for the value of prediction in identifying the worst-off, we examine the prediction access ratio in one of the most basic statistical models. The outcomes $Y$ are Gaussian, and the learner has access to a predictor $f(x) = \Yhat$ such that the errors $Y- 
\Yhat$ are also Gaussian and independent of $\Yhat$.  While extremely simple, the model yields surprisingly precise numerical insights that exactly match up in our real-world case study, where, of course, none of these assumptions hold. In this setting the quality of $\Yhat$ is fully summarized by the coefficient of determination $\rsq=\operatorname{corr}(Y,\Yhat)^{2}$.

Our first result identifies when local improvements in prediction have the highest impact:

\begin{theorem}[Informal]
If $\alpha$ is at least a constant, the local improvements in $V$ with respect to $\rsq$ diverge in two regimes: (1) $\rsq \to 1$ and $\alpha = \beta$, or (2) $\rsq \to 0$. In both cases, the  prediction-access ratio satisfies $\textup{PAR}(\alpha, \beta) = 0$.
\end{theorem}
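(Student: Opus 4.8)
The plan is to reduce $V$ to a bivariate-normal CDF, differentiate it in closed form with respect to both $\alpha$ and $\rsq$, and then read off the two limiting regimes. First I would use the Gaussian-error structure: writing $\Y = \Yhat + \varepsilon$ with $\varepsilon \indep \Yhat$ and both Gaussian, the pair $(\Y,\Yhat)$ is jointly Gaussian and a variance computation gives $\operatorname{corr}(\Y,\Yhat)^2 = \sigma_{\Yhat}^2/(\sigma_{\Yhat}^2+\sigma_\varepsilon^2) = \rsq$. Standardizing both coordinates sends the screening event $\{\Yhat \le t(\alpha)\}$ and the target event $\{\Y \le t(\beta)\}$ to $\{Z_{\Yhat}\le \nquant{\alpha}\}$ and $\{Z_{\Y}\le \nquant{\beta}\}$, where $(Z_{\Y},Z_{\Yhat})$ is standard bivariate normal with correlation $\rho = \sqrt{\rsq}$. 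Setting $a=\nquant{\alpha}$ and $b=\nquant{\beta}$, this yields the closed form
\[
V(\alpha,\rsq;\beta) = \frac{1}{\beta}\,\jcdf{a,\,b;\,\rho}.
\]

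Next I would compute the two marginal values that make up the PAR. For access, combining the standard identity $\partial_a \Phi_2(a,b;\rho) = \pdf{a}\,\cdf{(b-\rho a)/\sqrt{1-\rho^2}}$ with $da/d\alpha = 1/\pdf{a}$, the $\pdf{a}$ factors cancel and
\[
\pder[V]{\alpha} = \frac{1}{\beta}\,\cdf{\frac{b-\rho a}{\sqrt{1-\rho^2}}} \in \brac{0,\,1/\beta},
\]
so the numerator of the PAR is bounded whenever $\beta$ is a fixed constant. For prediction, I would invoke the Plackett--Price identity $\partial_\rho \Phi_2(a,b;\rho) = \jpdf{a,b;\rho}$ together with the chain rule $\partial\rho/\partial\rsq = 1/(2\rho)$ to obtain
\[
\pder[V]{\rsq} = \frac{1}{2\beta\rho}\,\jpdf{a,b;\rho},
\]
and the PAR is the ratio of the previous display to this one.

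Then I would take limits in the two regimes. When $\rsq \to 0$ (so $\rho\to0$), the factor $1/(2\rho)$ diverges while $\jpdf{a,b;0} = \pdf{a}\pdf{b}$ is a positive constant, so $\partial V/\partial\rsq \to \infty$; the numerator tends to $\cdf{b}/\beta = 1$, hence $\paratio(\alpha,\beta)\to 0$. When $\rsq\to1$ (so $\rho\to1$) with $\alpha=\beta$, hence $a=b$, the density collapses to $\jpdf{a,a;\rho} = \tfrac{1}{2\pi\sqrt{1-\rho^2}}\exp(-a^2/(1+\rho))$, whose prefactor blows up while the exponential converges to $e^{-a^2/2}$; meanwhile the numerator $\tfrac{1}{\beta}\cdf{a\sqrt{(1-\rho)/(1+\rho)}}\to \tfrac{1}{2\beta}$. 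In both cases the denominator diverges and the numerator stays finite, giving $\paratio(\alpha,\beta)=0$.

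The main obstacle is the last-mile regime, where the blow-up of $\jpdf{a,b;\rho}$ is genuinely conditional on $\alpha=\beta$. Writing its exponent as $-(a^2-2\rho ab+b^2)/(2(1-\rho^2))$, the numerator tends to $(a-b)^2$ as $\rho\to1$; if $a\neq b$ the factor $\exp(-(a-b)^2/(2(1-\rho^2)))$ decays to zero faster than the $1/\sqrt{1-\rho^2}$ prefactor grows, the density vanishes, and the PAR stays bounded away from $0$. The delicate step is therefore controlling the competition between the algebraic prefactor and the Gaussian exponential, which is precisely what forces the hypothesis $\alpha=\beta$ in regime (1); the assumption that $\alpha$ is at least a constant plays the complementary role of keeping $a$ bounded so that the relevant densities remain nondegenerate and the numerator does not collapse.
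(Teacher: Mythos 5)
Your proposal is correct and follows essentially the same route as the paper's proof: reduce $V$ to $\tfrac{1}{\beta}\jcdf{\nquant{\alpha},\nquant{\beta};\rho}$, compute $\pder[V]{\alpha}$ and $\pder[V]{\rsq}$ in closed form (the latter via the Plackett/Drezner identity with the chain-rule factor $1/(2\sqrt{\rsq})$), and show the prediction derivative diverges while the access derivative stays bounded in each regime. Your closing analysis of why $a\neq b$ kills the blow-up as $\rho\to 1$ matches the paper's remark that the exponential generally suppresses the prefactor, so nothing is missing.
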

Predictions have the highest marginal impact at low and high $\rsq$-values, making them a first- and last-mile effort. Our second result characterizes when the opposite is true. We prove that whenever screening capacities are severely limited relative to the size of the population one aims to identify $\alpha \ll \beta$ , the benefits of increasing $\alpha$ are overwhelming. Furthermore, it shows that the impacts of improving access are still relatively larger exactly in the regime where most current systems operate: $f$ explains $\approx 20\%$ of the variance and $\alpha$ is equal to, or even slightly larger, than $\beta$. 

\begin{theorem}[Informal]
If the predictor $f$ explains an $R^2$ fraction of the variance, where $R^2$ is at least a constant, then the prediction access ratio is at least $\Omega(\alpha^{-1 / (1 - R^2)})$.
Furthermore, if  $0.15 \leq R^2  \leq 0.85$ and $\alpha \leq \beta$ or $0.2 \leq R^2  \leq 0.5$, $\beta \geq 0.15$, and $\alpha \leq 0.5$ then the local prediction-access ratio is at least 1. 
\end{theorem}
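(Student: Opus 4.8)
The plan is to reduce everything to the bivariate Gaussian geometry, obtain a closed form for the PAR, and then (a) read off its tail behaviour in $\alpha$ and (b) control it over the two stated parameter boxes by monotonicity. First I would normalize so that $Y\sim\mathcal N(0,1)$; since the errors $Y-\hat Y$ are Gaussian and independent of $\hat Y$, the pair $(Y,\hat Y/R)$ with $R=\sqrt{R^2}\ge0$ is standard bivariate Gaussian with correlation $R$, and the design thresholds become $b:=\Phi^{-1}(\beta)$ and $a:=\Phi^{-1}(\alpha)$ on the standardized marginals. Hence
\begin{align*}
V(\alpha,R^2;\beta)=\tfrac1\beta\,\Phi_2\big(a,b;R\big),
\end{align*}
the standardized bivariate normal CDF, which places both policy levers on the same footing and is the object I differentiate. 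Taking the marginal value of access to be $\partial_\alpha V$ and of prediction to be $\partial_{R^2}V$, as in the definition of PAR, I then compute both in closed form.

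For access, the chain rule together with $\partial_h\Phi_2(h,k;\rho)=\phi(h)\,\Phi\big((k-\rho h)/\sqrt{1-\rho^2}\big)$ and $\tfrac{d}{d\alpha}\Phi^{-1}(\alpha)=1/\phi(a)$ produces the cancellation $\partial_\alpha V=\tfrac1\beta\,\Phi(u)$, where $u:=(b-Ra)/\sqrt{1-R^2}$. For prediction I would invoke the Plackett/Price identity $\partial_\rho\Phi_2(h,k;\rho)=\phi_2(h,k;\rho)$ together with $dR/dR^2=1/(2R)$ to get $\partial_{R^2}V=\tfrac1{2R\beta}\,\phi_2(a,b;R)$. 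Factoring the joint density as $\phi_2(a,b;R)=\phi(a)\,\phi(u)/\sqrt{1-R^2}$ yields the target expression
\begin{align*}
\mathrm{PAR}(\alpha,\beta)=\frac{\partial_\alpha V}{\partial_{R^2}V}=\frac{2R\sqrt{1-R^2}}{\phi(a)}\cdot\frac{\Phi(u)}{\phi(u)}.
\end{align*}

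For the $\Omega$ bound I would send $\alpha\to0$, so $a\to-\infty$ and $u\to+\infty$ (whence $\Phi(u)\to1$). Writing $\phi(a)\phi(u)=\tfrac1{2\pi}\exp\{-(a^2+u^2)/2\}$ and using $a^2+u^2=a^2/(1-R^2)+O(|a|)$, the leading exponent collapses and $\mathrm{PAR}\asymp \exp\{a^2/(2(1-R^2))\}$ up to the bounded prefactor $4\pi R\sqrt{1-R^2}$; this is exactly where the hypothesis ``$R^2$ at least a constant'' is used, to keep $R$ and $1-R^2$ bounded away from $0$. Finally I would convert $a=\Phi^{-1}(\alpha)$ back to $\alpha$ with the Mills-ratio bounds $\tfrac{x}{x^2+1}\phi(x)\le\overline{\Phi}(x)\le\tfrac1x\phi(x)$, which give $e^{a^2/2}\gtrsim 1/(|a|\,\alpha)$ and hence $\mathrm{PAR}=\Omega\big(\alpha^{-1/(1-R^2)}\big)$, the $O(|a|)$ cross-term and the $|a|$ from the Mills bound being absorbed as subpolynomial ($\alpha^{-o(1)}$) corrections.

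For the explicit ``$\mathrm{PAR}\ge1$'' statements I would avoid grinding the full box and instead exploit monotonicity of the closed form. I can show $\mathrm{PAR}$ is \emph{increasing} in $\beta$: since $\partial_b\log\mathrm{PAR}=\tfrac1{\sqrt{1-R^2}}\big(\phi(u)/\Phi(u)+u\big)$ and $g(u):=\phi(u)+u\Phi(u)$ satisfies $g'=\Phi>0$ with $g(-\infty)=0$, the bracket is positive. I can likewise show $\mathrm{PAR}$ is \emph{decreasing} in $\alpha$ whenever $a\le0$ (i.e.\ $\alpha\le\tfrac12$), because $\partial_a\log\mathrm{PAR}=a+u'\big(\phi(u)/\Phi(u)+u\big)$ with $u'=-R/\sqrt{1-R^2}<0$ and both summands $\le0$. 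These two facts collapse each region: region (ii) reduces to the single corner $\alpha=\tfrac12,\ \beta=0.15$, leaving a one-parameter verification over $R^2\in[0.2,0.5]$; and region (i), via monotonicity in $\beta$ down to the diagonal $\alpha=\beta$, reduces to checking $\mathrm{PAR}\big|_{a=b}=\tfrac{2R\sqrt{1-R^2}}{\phi(b)}\cdot\tfrac{\Phi(u)}{\phi(u)}$ with $u=b\sqrt{(1-R)/(1+R)}$ over $(\beta,R^2)\in(0,1)\times[0.15,0.85]$. I expect \textbf{this last two-parameter diagonal bound to be the main obstacle}: the map is non-monotone in the common quantile (it diverges as $\beta\to0$ but is only moderate in the middle), so I would locate its infimum by restricting to the $R^2$-endpoints, where $R\sqrt{1-R^2}$ is smallest, and then lower-bounding $\Phi(u)/\phi(u)$ via the same Mills inequalities to certify the infimum exceeds $1$ with room to spare.
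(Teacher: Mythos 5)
Your closed form for the local PAR is correct and is actually sharper than anything the paper writes down: combining the paper's \lemmaref{derivative alpha} and \lemmaref{derivative rsq} with the exact factorization $\phi_2(a,b;R)=\phi(a)\,\phi(u)/\sqrt{1-R^2}$ gives precisely your $\mathrm{PAR}=\tfrac{2R\sqrt{1-R^2}}{\phi(a)}\cdot\tfrac{\Phi(u)}{\phi(u)}$, whereas the paper never forms this ratio exactly; it instead works with the one-sided bound $\partial_{\rsq}V\le\tfrac{1}{\beta\sqrt{8\pi\rsq(1-\rsq)}}\phi(u)$ (\lemmaref{rsq upper bound}), i.e.\ it discards the factor $1/(\sqrt{2\pi}\,\phi(a))\ge1$. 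The routes then diverge. For the small-$\alpha$ bound, the paper argues for \emph{finite} improvements via Taylor's theorem (lower-bounding the access gain by $\Delta_\alpha/(2\beta)$ once the relevant $\Phi$-argument is nonnegative, and upper-bounding the prediction gain by importing Lemma B.5 of \citet{pmlr-v235-perdomo24a}), while you do direct asymptotics of the exact formula plus Mills bounds; both yield the same $\alpha^{-1/(1-R^2)+o(1)}$ rate, and your cross-term bookkeeping ($a^2+u^2=(a^2-2Rab+b^2)/(1-R^2)$) is right. For the $\mathrm{PAR}\ge1$ boxes, the paper applies monotonicity of $\Phi/\phi$ (\lemmaref{cdf pdf ratio}) to its cruder bound and evaluates corners, certifying only about $1.79\times0.59\approx1.06$; you instead prove monotonicity of the PAR itself (increasing in $\beta$, decreasing in $\alpha$ for $\alpha\le\tfrac12$) via $g(u)=\phi(u)+u\Phi(u)$, $g'=\Phi>0$, $g(-\infty)=0$ --- these computations are correct --- which buys a cleaner reduction (region (ii) to a single corner, region (i) to the diagonal) and far more slack, since keeping the $1/\phi(a)$ factor means the diagonal infimum is near $2$ rather than barely above $1$.

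Two soft spots, neither fatal. First, your entire argument is about the local PAR; that matches the ``local'' second half of the statement and \propositionref{nonquant bound} exactly, but \theoremref{par small alpha} is stated for finite $\Delta_\alpha,\Delta_{\rsq}$, so closing that half fully needs the paper's mean-value step with intermediate points $\rsq_*,\alpha_*$ and a check that your asymptotic bound holds uniformly there. Second, your plan to locate the diagonal infimum ``at the $R^2$-endpoints'' is not quite right: $R\sqrt{1-R^2}$ is minimized at both endpoints, but $\Phi(u)/\phi(u)$ with $u=b\sqrt{(1-R)/(1+R)}$, $b<0$, is \emph{increasing} in $R$, so the two factors are oppositely monotone for $R^2>\tfrac12$ and the product's minimum need not be at an endpoint. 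The fix is to decouple: bound the product below by $2\sqrt{0.15\cdot0.85}\approx0.71$ times the value of $\Phi(u)/\phi(u)$ at $R^2=0.15$, which still certifies $\mathrm{PAR}\gtrsim1.97$ on the diagonal. Relatedly, Mills-type inequalities are far too weak in the moderate range you actually land in (at $u\approx-0.3$ they give $\approx0.27$ where the true ratio is $\approx1$), so the last certification step should be done by direct evaluation on the reduced one-parameter families --- which is exactly how the paper itself finishes, numerically, at its corner points.
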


\paragraph{Empirical Results.}
We complement our theoretical discussion by presenting a methodology for policymakers to evaluate the prediction-access ratio in practice. Using a real-world administrative dataset on hundreds of thousands of jobseekers in Germany, we show that our theoretical findings generalize to a more complex, real-world context that closely resembles algorithmic profiling systems widely implemented in many countries. Notably, our results reveal that when considering non-local improvements, expanding screening capacity has an even greater impact compared to enhancing prediction accuracy.

\subsection{Related Work}
Machine learning is increasingly used in the public sector to allocate support by predicting individuals at risk of adverse outcomes \citep{FISCHERABAIGAR2024101976}, with applications spanning a wide range of problem domains \citep{desiere2019statistical, blumenstock2016, perdomo2023difficult, chan2012, potash2015, chouldechova2018case}. A large methodological literature draws on decision theory, operations research, economics, and machine learning to learn allocation rules from data \citep{elmachtoub2022smart, Kitagawa2018, Manski2004, loriprovist2022}, with recent work in causal inference focusing on learning treatment policies from observational data \citep{athey2021policy, Kallus03042021}. However, many decision-makers rely on separately trained predictive risk scoring-systems to solve ``prediction policy problems'' \citep{kleinberg2015prediction}. Recently, this work has been extended using causal inference to train and evaluate these systems \citep{coston2023sat, Guerdan23, boehmer2024evaluating}.

The widespread use of risk-scoring systems has raised concerns regarding their tradeoffs, pitfalls, and validity \citep{wang2024, coston2023sat, FISCHERABAIGAR2024101976}. These concerns include not only questions of empirical performance but also of fairness and equity in how predictive systems shape access to public services \citep{barocas-hardt-narayanan}. Recent work explores alternative design choices---such as employing aggregate rather than individual-level predictions \citep{shirali2024}, balancing immediate needs with information-gathering \citep{wilder2024learning}, and introducing randomization \citep{jain2024position}---to improve downstream outcomes. 

\citet{pmlr-v235-perdomo24a} studies the prediction-access ratio under both linear and probit models, with the latter closely related to our work. While they focus on binary welfare outcomes, we adopt a continuous welfare metric and a distinct policy objective: rather than evaluating changes in overall expected welfare, we measure the fraction of truly worst-off individuals who are identified. This captures a mathematically and conceptually distinct setting frequently encountered in the public sector. For instance, employment agencies often prioritize identifying and assisting individuals in greatest need, rather than optimizing average employment outcomes across all jobseekers. In addition, we introduce a set of empirical tools to analyze these tradeoffs in practice, while the work of \citet{pmlr-v235-perdomo24a} is purely theoretical. 

\section{Formal Framework}

We start by formally defining our screening problem.

\begin{definition}[Screening Problem]
 \definitionlabel{Screening Problem}The screening problem seeks to identify a decision rule $\pi \from \R \to \bits$ that maximizes the fraction of the worst-off individuals who are screened, subject to resource constraints $\alpha \in (0, 1)$ that limit the proportion of the population the social planner can screen.
 \begin{align*}
   &\underset{\pi \from \R \to \bits}{\max} \Ex{\pi(\Yhat) = 1 \mid \Y \leq \quant{\Y}{\beta}} \
 \text{s.t.} \  \Ex{\pi(\Yhat)} \leq \alpha \nonumber 
\end{align*}  
The quantile $\quant{\Y}{\beta}$ denotes the welfare cutoff that identifies the worst-off $\beta \in (0, 1)$ fraction of the population.
\end{definition}

Given perfect knowledge of the welfare outcomes $\Yhat = \Y$, the optimal decision policy is simple: rank individuals based on their outcomes $\Y$ and intervene in the bottom $\alpha$-fraction of the population. In the general case, we have:

\begin{proposition}
\propositionlabel{optimal policy}
   The optimal policy $\pi^* \from \R \to \bits$ to solve the screening problem (\definitionref{Screening Problem}) is equal to $\pi^{*}(\Yhat_i) = 1\{{s(\Yhat_i)  \geq \quant{s}{1 - \alpha}}\}$
   where $\quant{s}{1 -\alpha}$ is the $(1-\alpha)$-quantile of $s(\Yhat) = \Prob{\Y \leq \quant{\Y}{\beta} \mid \Yhat}$.
\end{proposition}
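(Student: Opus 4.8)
The plan is to recognize this as a Neyman--Pearson / fractional-knapsack problem and settle it with a pointwise exchange argument. First I would strip the conditioning out of the objective. Since $\Prob{\Y \leq \quant{\Y}{\beta}} = \beta$ is a constant that does not depend on $\pi$, Bayes' rule gives
\begin{align*}
\Ex{\pi(\Yhat) = 1 \mid \Y \leq \quant{\Y}{\beta}} = \frac{1}{\beta}\,\Prob{\pi(\Yhat)=1,\ \Y \leq \quant{\Y}{\beta}},
\end{align*}
so maximizing the conditional hit rate is equivalent to maximizing the joint probability in the numerator.

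The second step is to surface the score $s$. Conditioning on $\Yhat$ and using the tower property,
\begin{align*}
\Prob{\pi(\Yhat)=1,\ \Y \leq \quant{\Y}{\beta}} = \Ex{\pi(\Yhat)\,\Ex{\1\{\Y \leq \quant{\Y}{\beta}\}\mid \Yhat}} = \Ex{\pi(\Yhat)\, s(\Yhat)},
\end{align*}
where $s(\Yhat) = \Prob{\Y \leq \quant{\Y}{\beta}\mid \Yhat}$ is exactly the function in the statement. The screening problem therefore reduces to
\begin{align*}
\max_{\pi\from\R\to\bits}\ \Ex{\pi(\Yhat)\,s(\Yhat)} \quad \text{s.t.}\quad \Ex{\pi(\Yhat)} \leq \alpha,
\end{align*}
the canonical problem of spending a budget $\alpha$ on the units of largest ``value density'' $s(\Yhat)$, whose solution is to threshold $s$ at its $(1-\alpha)$-quantile.

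The third and main step is to verify optimality of $\pi^*(\Yhat)=\1\{s(\Yhat)\geq \quant{s}{1-\alpha}\}$ directly. Writing $t^* = \quant{s}{1-\alpha}$ and letting $\pi$ be any feasible policy, the workhorse is the decomposition
\begin{align*}
\Ex{\pi^*(\Yhat) s(\Yhat)} - \Ex{\pi(\Yhat) s(\Yhat)} = \Ex{(\pi^*(\Yhat)-\pi(\Yhat))(s(\Yhat)-t^*)} + t^*\,\Ex{\pi^*(\Yhat)-\pi(\Yhat)}.
\end{align*}
The first term is nonnegative pointwise: wherever $s(\Yhat)\geq t^*$ one has $\pi^*(\Yhat)=1\geq \pi(\Yhat)$ and $s(\Yhat)-t^*\geq 0$, while wherever $s(\Yhat)<t^*$ one has $\pi^*(\Yhat)=0\leq \pi(\Yhat)$ and $s(\Yhat)-t^*<0$, so the product of the two factors never changes sign. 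The second term is nonnegative because $t^*\in[0,1]$ is itself a probability and $\Ex{\pi^*(\Yhat)}=\alpha\geq\Ex{\pi(\Yhat)}$ by feasibility, using that thresholding at the $(1-\alpha)$-quantile exhausts the budget with equality. Hence $\pi^*$ attains at least the objective of every feasible $\pi$, which is optimality.

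The step I expect to demand the most care is the boundary behavior when the law of $s(\Yhat)$ has an atom at $t^*$. Then $\Prob{s(\Yhat)\geq t^*}$ may strictly exceed $\alpha$, so the deterministic threshold overspends and the identity $\Ex{\pi^*(\Yhat)}=\alpha$ can fail; the fix is to break ties on the level set $\{s(\Yhat)=t^*\}$ by randomization (equivalently, a fractional allocation) so that exactly an $\alpha$-mass is selected, after which the same exchange inequality applies verbatim. Under the continuity assumptions maintained for the Gaussian model this subtlety vanishes and the deterministic rule in the statement is exactly optimal; I would state this caveat explicitly and dispatch the general case with the standard randomized tie-break.
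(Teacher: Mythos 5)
Your proof is correct, and its first two steps are exactly the paper's argument: strip the conditioning via $\Prob{\Y \leq \quant{\Y}{\beta}} = \beta$, then apply the tower property to reduce the problem to maximizing $\Ex{\pi(\Yhat)\,s(\Yhat)}$ subject to $\Ex{\pi(\Yhat)} \leq \alpha$. Where you go beyond the paper is the final step: the paper simply asserts that "individuals with the largest scores should be prioritized" and stops, whereas you actually prove this by the exchange decomposition
\begin{align*}
\Ex{\pi^*(\Yhat) s(\Yhat)} - \Ex{\pi(\Yhat) s(\Yhat)} = \Ex{(\pi^*(\Yhat)-\pi(\Yhat))(s(\Yhat)-t^*)} + t^*\,\Ex{\pi^*(\Yhat)-\pi(\Yhat)},
\end{align*}
checking both terms are nonnegative. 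This buys two things the paper's proof lacks: a rigorous certificate of optimality of the threshold rule among \emph{all} feasible $\bits$-valued policies, and an explicit treatment of the degenerate case where the law of $s(\Yhat)$ has an atom at the quantile $t^* = \quant{s}{1-\alpha}$, where the deterministic rule can overspend the budget and randomized tie-breaking is needed (a caveat that is indeed invisible in the Gaussian setting the paper subsequently specializes to, since there $s(\Yhat)$ is a continuous strictly monotone transform of $\Yhat$). Your observation that exhausting the budget with equality is what makes the second term nonnegative is exactly the right place to locate the subtlety.
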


\paragraph{Policy Value in Gaussian Setting.} For the theoretical investigation, we assume independent, identically distributed errors $\epsilon = \Y - \Yhat \overset{iid}{\sim}  \cN(0, \gamma^2)$ that are independent of $\Yhat$. In this setting, the screening problem can be solved by ranking individuals in ascending order of their predicted outcomes $\Yhat$ and screening the bottom $\alpha$-fraction (see \propositionref{optimal policy gaussian}), achieving the policy value:
\begin{align}
\equationlabel{policy value ranking}
   V(\pi^*) = \Prob{\Yhat \leq \quant{\Yhat}{\alpha} \mid \Y \leq \quant{\Y}{\beta}}
\end{align}
In addition, we assume welfare outcomes $\Y \sim \cN(\mu, \eta^2)$. Because the errors $\epsilon$ are independent of $\Yhat$, this implies that $\Y$ and $\Yhat$ follow a bivariate normal distribution.  
\begin{proposition}(Policy Value in Gaussian Setting)
\propositionlabel{policy value gaussian}
Let $\Y - \Yhat \overset{iid}{\sim}  \cN(0, \gamma^2)$ and $\Y \sim \cN(\mu, \eta^2)$, then the value $V(\pi^*)$ of the optimal screening policy $\pi^*$ is given by
\begin{align}
V(\pi^*) = V(\alpha, \beta, \rsq) = \frac{\jcdf{\nquant{\alpha}, \nquant{\beta}; \rho}}{\beta}
\end{align}
where $\jcdf{\cdot}$ denotes the bivariate standard normal CDF with correlation $\rho = \sfrac{\sqrt{\eta^2 - \gamma^2}}{\eta}$ and $\nquant{\cdot}$ is the quantile function of the standard normal distribution.
\end{proposition}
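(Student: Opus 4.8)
The plan is to reduce the claim to the standardization of a bivariate normal, starting from the optimal policy value recorded in \equationref{policy value ranking}. In this Gaussian model ranking on $\Yhat$ is optimal (\propositionref{optimal policy gaussian}), so
\begin{align*}
V(\pi^*) = \Prob{\Yhat \le \quant{\Yhat}{\alpha} \mid \Y \le \quant{\Y}{\beta}}.
\end{align*}
I would expand this conditional probability as a ratio of a joint probability to the marginal $\Pr[\Y \le \quant{\Y}{\beta}]$, the latter being exactly $\beta$ by definition of the quantile function. All that remains is to evaluate the joint probability in the numerator.

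Next I would pin down the joint law of $(\Y, \Yhat)$. Writing $\Y = \Yhat + \epsilon$ with $\epsilon \sim \cN(0,\gamma^2)$ independent of $\Yhat$, the facts $\E[\epsilon]=0$ and additivity of variance give $\E[\Yhat] = \mu$ and $\Var(\Yhat) = \eta^2 - \gamma^2$. Since $\Yhat$ is normal (which follows from $\Y$ and $\epsilon$ being normal with $\epsilon \perp \Yhat$) and independent of the normal $\epsilon$, the pair $(\Yhat,\epsilon)$ is jointly normal, and hence so is its linear image $(\Y,\Yhat)$. Its correlation is $\rho = \Cov(\Y,\Yhat)/(\eta\sqrt{\eta^2-\gamma^2})$, and because $\Cov(\Y,\Yhat) = \Cov(\Yhat+\epsilon,\Yhat) = \Var(\Yhat) = \eta^2-\gamma^2$, this collapses to $\rho = \sqrt{\eta^2-\gamma^2}/\eta$, as claimed. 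I would also note $\rho = \sqrt{\rsq}$, which is exactly what lets $V$ depend on the predictor only through $\rsq$.

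With the joint law in hand I would standardize. Setting $Z_1 = (\Yhat-\mu)/\sqrt{\eta^2-\gamma^2}$ and $Z_2 = (\Y-\mu)/\eta$ makes $(Z_1,Z_2)$ standard bivariate normal with correlation $\rho$. Since $\Yhat \sim \cN(\mu,\eta^2-\gamma^2)$ and $\Y \sim \cN(\mu,\eta^2)$, the thresholds transform cleanly, $\quant{\Yhat}{\alpha} = \mu + \sqrt{\eta^2-\gamma^2}\,\nquant{\alpha}$ and $\quant{\Y}{\beta} = \mu + \eta\,\nquant{\beta}$, so the two events become $\{Z_1 \le \nquant{\alpha}\}$ and $\{Z_2 \le \nquant{\beta}\}$. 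The numerator is therefore $\jcdf{\nquant{\alpha},\nquant{\beta};\rho}$, and dividing by $\beta$ gives the stated formula.

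The proof is essentially bookkeeping, so I do not anticipate a serious obstacle. The one point deserving explicit care is the justification of joint normality: knowing only that $\Y$ and $\epsilon$ are marginally normal with $\epsilon \perp \Yhat$ does not by itself give joint normality of $(\Y,\Yhat)$ until one argues that $\Yhat$ is itself normal, e.g.\ by Cram\'er's decomposition theorem applied to $\Y = \Yhat + \epsilon$. I would flag this so that the standardization step rests on solid ground.
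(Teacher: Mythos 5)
Your proof is correct and follows essentially the same route as the paper's: reduce $V(\pi^*)$ to the joint probability divided by $\beta$ via \propositionref{optimal policy gaussian}, standardize $(\Y,\Yhat)$ using the Gaussian quantile transformation, and obtain $\rho$ from $\Cov(\Yhat,\Yhat+\epsilon)=\Var(\Yhat)=\eta^2-\gamma^2$. Your explicit justification of joint normality via Cram\'er's decomposition theorem is a small refinement of a step the paper asserts in one line, but it does not change the argument.
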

In this model, the goodness of the predictions $\Yhat$ are entirely captured by the coefficient of determination $\rsq$, which equals the squared correlation $\rho^2$ between $\Y$ and $\Yhat$.

Our analysis extends to the log-normal distribution $\log \Y \sim \cN(\mu, \eta^2)$ under a a multiplicative error model $\Y = \Yhat \cdot u$ with $ \log u \sim \cN(0, \gamma^2)$. Taking logarithms, leads to $\log \Y = \log \Yhat + \log u$. Since the logarithm is strictly increasing, the ordering of $\Y$ and $\Yhat$ is preserved under transformation. This allows us to apply the same framework to the log-transformed variables $\log \Y$ and $\log \Yhat$. This extension is particularly useful because many welfare outcomes, such as income distributions \citep{clementi2005pareto}, can be approximated by a log-normal distribution.

\begin{figure*}[ht]
\centering  
\subfigure[Screening Probability]{\includegraphics[width=0.31\textwidth]{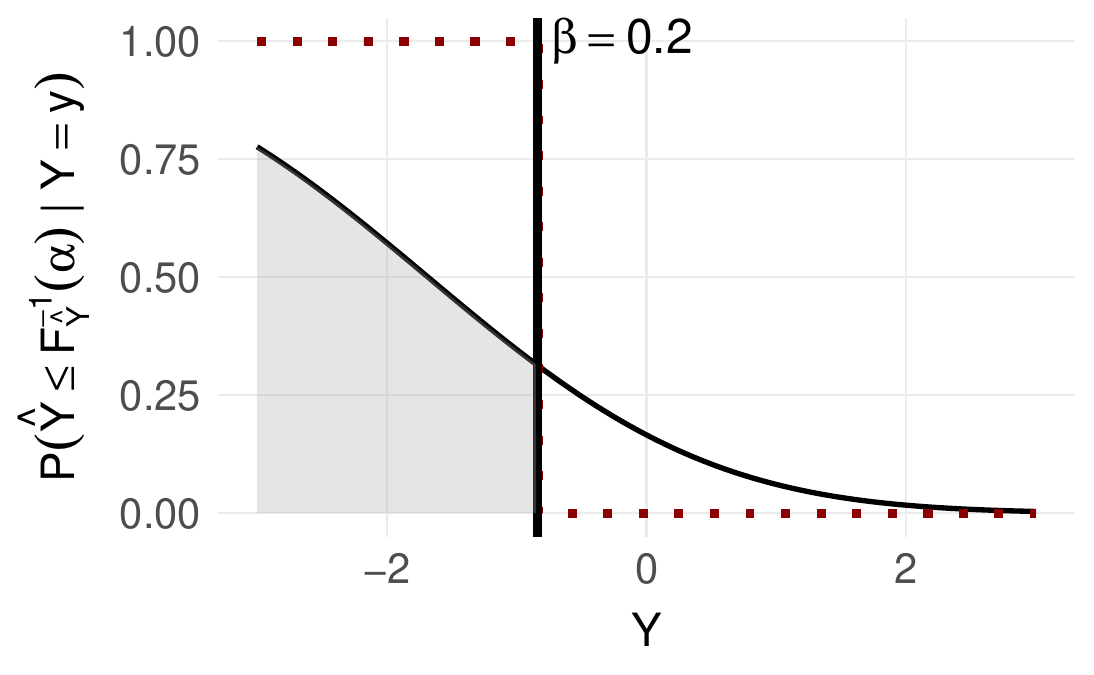}}
\subfigure[Expanding Screening Capacity]{\includegraphics[width=0.31\textwidth]{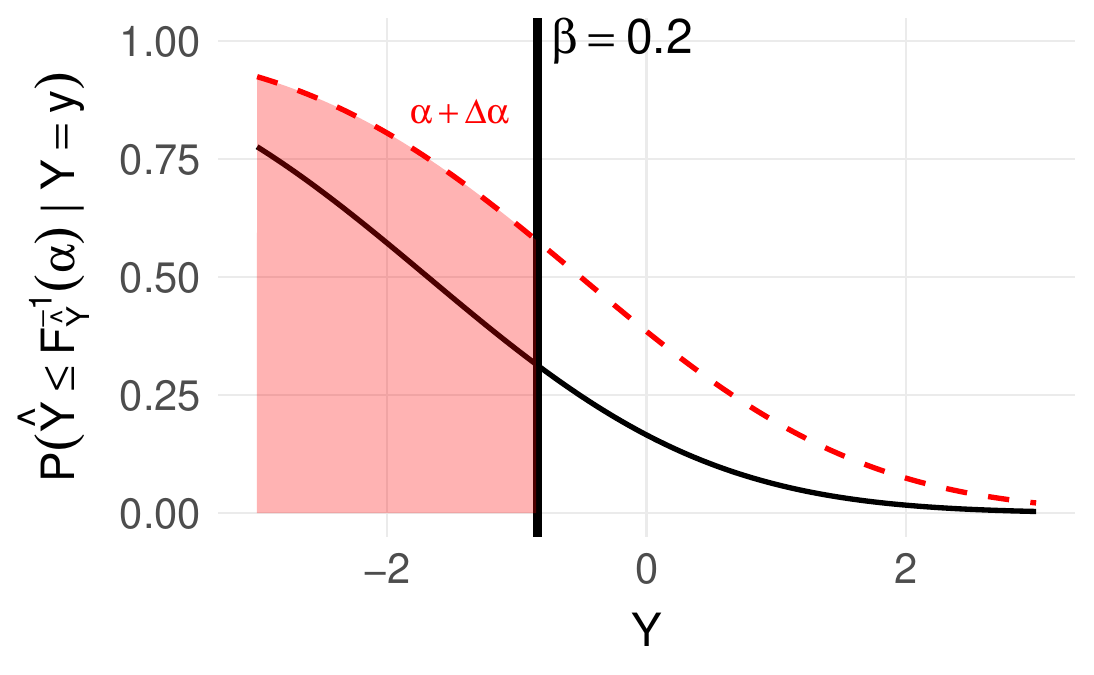}}
\subfigure[Improving Predictions]{\includegraphics[width=0.31\textwidth]{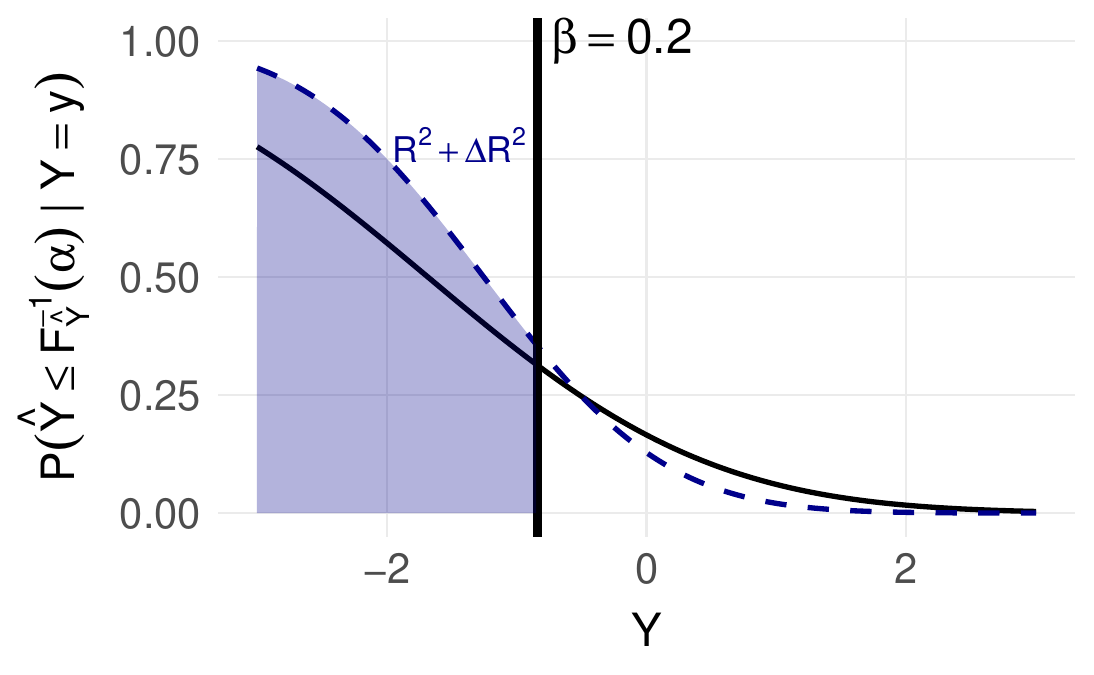}}
\caption{\textbf{Screening Policy in Gaussian Setting.} (Left) Probability of being screened for an individual with a specific welfare outcome $Y$, given $\rsq = 0.25$, $\alpha = 0.2$, and $\beta = 0.2$. The dashed line represents the unconstrained oracle policy, which perfectly screens those in need. (Middle) Policy with expanded screening capacity, where $\alpha$ increases by $\Delta_{\alpha} = 0.2$. (Right) Policy under an improved prediction model with $\rsq + \Delta_{\rsq}$, where $\Delta_{\rsq} = 0.2$. The shaded area under $\Prob{\Yhat \leq \quant{\Yhat}{\alpha} \mid \Y = y}$, weighted by $f_Y(y)$ and normalized by $\Prob{\Y \leq \quant{\Y}{\beta}}$, corresponds to the policy value.}
\figurelabel{screening probability}
\end{figure*}

\paragraph{Visualization.} For a given screening capacity $\alpha$ and $\rsq$ value, we can illustrate the corresponding screening policy by plotting the probability $\prob{\Yhat \leq \quant{\Yhat}{\alpha} \mid \Y = y}$ that an individual with welfare outcome $Y = y$ is screened. As shown in \figureref{screening probability}, lower values of $Y$ correspond to higher probabilities of being screened. We focus on evaluating how effectively the screening policy identifies individuals in the worst-off segment of the population (i.e., on the left side of the $\beta$ cutoff).

\section{Theoretical Results}

The decision-maker has (at least) two pathways to raise the policy value, which we refer to as \textit{policy levers}:
\begin{itemize}
    \item \textbf{Expanding Access} Increasing the screening threshold from $\alpha$ to  $\alpha + \Delta_\alpha$. If full screening were possible ($\alpha = 1$), the $\beta$-fraction would be fully identified, as $V(\pi^*)  = \tfrac{\jcdf{\nquant{\alpha}, \nquant{\beta}; \rho}}{\beta} = \tfrac{\cdf{\nquant{\beta}}}{\beta} = 1$.
    \item  \textbf{Improving Predictions} Investing in better predictive models, modeled as increasing $\rsq$ to $\rsq + \Delta_{\rsq}$. Perfect predictions ($\rsq = 1$) leads to optimal allocation of available capacities: $V(\pi^*) = \tfrac{1}{\beta}\cdf{\min{\nquant{\alpha}}{{\nquant{\beta}}}}$. If $\alpha \leq \beta$ then $V(\pi^*) = \tfrac{\alpha}{\beta}$.
\end{itemize}
\figureref{screening probability} showcases improvements in access and prediction. Increasing capacity expands the fraction of the population screened, while improving $\rsq$ shifts probability mass across the $\beta$ threshold, enhancing targeting accuracy.

Following \citet{pmlr-v235-perdomo24a}, a key quantity of interest is the \textit{prediction-access ratio} (PAR), which quantifies the relative improvements in policy value from enhancing predictions versus improving access to screening. Specifically, the PAR is defined as:
\begin{align}
\equationlabel{PAR}
\textup{PAR} = \frac{V(\alpha + \Delta_\alpha, \beta, \rsq) - V(\alpha, \beta, \rsq)}{V(\alpha, \beta, \rsq + \Delta_{\rsq}) - V(\alpha, \beta, \rsq)}
\end{align}
In other words, the PAR can inform a social planner how much more they should be willing to pay for improvements in screening capacity relative to prediction. For example, a PAR $> 2$ implies that expanding the screening capacity by $\Delta_{\alpha}$ yields at least twice the increase in policy value compared to investing in improved predictions by $\Delta_{\rsq}$. Consequently, the social planner should prioritize investments in screening capacity, provided the costs of doing so are not more than double those of improving predictions.

\subsection{Theoretical Bounds for the Prediction-Access Ratio}

In our setting, direct calculation of the PAR is challenging due to the policy value being analytically intractable and the problem featuring strong non-linearities. We derive bounds for specific cases and regimes that we consider particularly insightful, with a focus on marginal local improvements. In our empirical investigation, we find that the main results generalize well to a more complex, real-world setting.

\paragraph{What should priorities be if screening is very limited?}

\begin{theorem}[PAR for Small Screening Capacities]
\theoremlabel{par small alpha}
For any $0 < \rsq < 1$, $\Delta_{\rsq}, \Delta_\alpha > 0$ and $0 < \beta \leq 0.5$ there exists a threshold $t(\beta, \rsq, \Delta_{\rsq})$ such that for any $\alpha + \Delta_\alpha \leq t$, $\textup{PAR}(\alpha, \rsq, \Delta_{\alpha}, \Delta_{\rsq})$ is at least
\begin{align*}
 & \frac{\Delta_\alpha}{\Delta_{\rsq}} \sqrt{\rsq (1-\rsq)} \left(5.1 \cdot\alpha \nquant{1-\alpha} \right)^{-\frac{1}{1-\rsq} + o(1)}
\end{align*}
where $o(1)$ goes to zero as $\alpha$ approaches zero.
\end{theorem}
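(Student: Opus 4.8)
The plan is to exploit the fact that both differences appearing in the PAR are increments of the bivariate normal CDF $\jcdf{\nquant{\alpha},\nquant{\beta};\rho}$ (with $\rho=\sqrt{\rsq}$), each of which I can reduce to a one–dimensional integral through two classical identities. Abbreviating $a=\nquant{\alpha}$ and $b=\nquant{\beta}$, the marginal-derivative identity $\frac{\partial}{\partial a}\jcdf{a,b;\rho}=\pdf{a}\,\cdf{\frac{b-\rho a}{\sqrt{1-\rho^2}}}$ together with $\frac{\mathrm{d}a}{\mathrm{d}\alpha}=1/\pdf{a}$ cancels the density and gives
\begin{align*}
V(\alpha+\Delta_\alpha,\beta,\rsq)-V(\alpha,\beta,\rsq)=\frac1\beta\int_{\alpha}^{\alpha+\Delta_\alpha}\cdf{\frac{b-\rho\,\nquant{s}}{\sqrt{1-\rho^2}}}\,\mathrm{d}s\mcom
\end{align*}
while Plackett's formula $\frac{\partial}{\partial\rho}\jcdf{a,b;\rho}=\jpdf{a,b;\rho}$ gives
\begin{align*}
V(\alpha,\beta,\rsq+\Delta_{\rsq})-V(\alpha,\beta,\rsq)=\frac1\beta\int_{\rho}^{\rho'}\jpdf{a,b;r}\,\mathrm{d}r,\qquad \rho'=\sqrt{\rsq+\Delta_{\rsq}}\mper
\end{align*}

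Next I would bound these two integrals in opposite directions. Because $\beta\le 0.5$ forces $b\le 0$, the access integrand $\cdf{\frac{b-\rho\,\nquant{s}}{\sqrt{1-\rho^2}}}$ is decreasing in $s$, so the access gap is at least $\frac{\Delta_\alpha}{\beta}$ times $\cdf{\frac{b-\rho\,\nquant{\alpha+\Delta_\alpha}}{\sqrt{1-\rho^2}}}$, which tends to $1$ as $\alpha+\Delta_\alpha\to0$; this gives the numerator lower bound $\tfrac{\Delta_\alpha}{\beta}(1-o(1))$. For the prediction gap, I would show that once $|a|$ is large enough the density $\jpdf{a,b;r}$ is decreasing in $r$ on $[\rho,\rho']$ (the decay of the exponential dominates the growth of $1/\sqrt{1-r^2}$), so that the integral is at most $(\rho'-\rho)\,\jpdf{a,b;\rho}$, and then bound $\rho'-\rho\le\Delta_{\rsq}/(2\rho)$. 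Dividing the numerator bound by the denominator bound yields
\begin{align*}
\textup{PAR}\ \ge\ \frac{\Delta_\alpha}{\Delta_{\rsq}}\cdot\frac{2\rho}{\jpdf{a,b;\rho}}\,(1-o(1))\mper
\end{align*}

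The last step is to turn $1/\jpdf{a,b;\rho}$ into the advertised power of $\alpha\,\nquant{1-\alpha}$. Completing the square (using $a,b\le0$) writes the density exponent as $-\tfrac{(|a|-\rho|b|)^2}{2(1-\rho^2)}-\tfrac{b^2}{2}$, so the reciprocal's dominant factor is $\exp\!\big(\tfrac{|a|^2}{2(1-\rho^2)}\big)=\big(e^{|a|^2/2}\big)^{1/(1-\rsq)}$. Invoking the Mills-ratio asymptotic $\cdf{-|a|}\sim\pdf{|a|}/|a|$, i.e.\ $e^{|a|^2/2}\sim 1/(\sqrt{2\pi}\,\alpha\,\nquant{1-\alpha})$, turns this into $(\sqrt{2\pi}\,\alpha\,\nquant{1-\alpha})^{-1/(1-\rsq)}$. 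The algebraic prefactor $2\rho\cdot2\pi\sqrt{1-\rho^2}=4\pi\sqrt{\rsq(1-\rsq)}$ supplies the stated $\sqrt{\rsq(1-\rsq)}$, and every remaining piece—the numerical constants, the factor $e^{b^2/2}$, the cross term $\exp(-\rho|a||b|/(1-\rho^2))$, and the gap between base $\sqrt{2\pi}$ and $5.1$—is subpolynomial in $1/\alpha$ (since $|a|=\nquant{1-\alpha}\sim\sqrt{2\ln(1/\alpha)}$) and so folds into the $(5.1\,\alpha\,\nquant{1-\alpha})^{o(1)}$ factor.

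The main obstacle is the asymptotic bookkeeping of this last step together with the choice of the threshold $t(\beta,\rsq,\Delta_{\rsq})$. I must verify that the cross term and the polynomial corrections from the Mills ratio are genuinely of order $\alpha^{o(1)}$, so that they cannot disturb the leading exponent $-1/(1-\rsq)$, and I must pin down how small $\alpha+\Delta_\alpha$ has to be for both monotonicity claims—the decrease of $\jpdf{a,b;r}$ in $r$ and the convergence of the access integrand to $1/\beta$—to hold simultaneously. Making these error bounds non-asymptotic is precisely what fixes the explicit constant $5.1$ and upgrades the estimate from a limiting statement into the genuine inequality asserted in the theorem.
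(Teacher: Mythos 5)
Your proposal is sound, and its overall architecture is the same as the paper's: lower-bound the access gap by (essentially) $\Delta_\alpha/\beta$, upper-bound the prediction gap by $\Delta_{\rsq}$ times the bivariate normal density at $(\nquant{\alpha},\nquant{\beta})$, and then convert the reciprocal of that density into the power $(\alpha\,\nquant{1-\alpha})^{-1/(1-\rsq)+o(1)}$ via Mills-ratio estimates --- the paper outsources exactly this last bookkeeping step to Lemma B.5 of \citet{pmlr-v235-perdomo24a}, which is the non-asymptotic version of what you defer to the end. The differences are tactical but genuine. Where the paper uses Taylor/mean-value forms (so that it must control $\pder{\rsq}V$ at an unknown intermediate point $\rsq_*=\rsq+p_{\rsq}\Delta_{\rsq}$ via \lemmaref{rsq upper bound}, and then argue that $\rsq_*$ can be replaced by $\rsq$ using monotonicity of $1/(1-\rsq_*)$ in the exponent), you use exact integral representations (the marginal-derivative identity and Plackett's formula, i.e.\ the paper's \lemmaref{derivative alpha} and \lemmaref{derivative rsq}) and evaluate at the left endpoint $\rho=\sqrt{\rsq}$ after establishing that $\jpdf{\nquant{\alpha},\nquant{\beta};r}$ is decreasing in $r$. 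That monotonicity claim is correct --- differentiating the log-density in $r$ and using $\nquant{\alpha},\nquant{\beta}\leq 0$ shows it holds once $\nquant{1-\alpha}\gtrsim 2\nquant{1-\beta}/\sqrt{\rsq}$ --- but note this is precisely where your threshold $t$ picks up its dependence on $\beta$ and $\rsq$, whereas the paper's \lemmaref{rsq upper bound} is a pointwise bound valid for all parameters. Similarly, you handle the cross term by completing the square and absorbing $e^{-\rho|\nquant{\alpha}||\nquant{\beta}|/(1-\rho^2)}=e^{-O(\sqrt{\log(1/\alpha)})}$ into the $(5.1\,\alpha\,\nquant{1-\alpha})^{o(1)}$ factor, where the paper sandwiches the argument of $\phi$ between $(1\pm\epsilon_1)\nquant{\alpha}/\sqrt{1-\rsq_*}$; these devices are equivalent, both resting on $e^{c\sqrt{\log(1/\alpha)}}$ being subpolynomial in $1/\alpha$. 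Your numerator bound ($(1-o(1))\Delta_\alpha/\beta$ via monotonicity of the integrand) is in fact slightly sharper than the paper's $\Delta_\alpha/(2\beta)$ (via $\cdf{x}\geq 1/2$ for $x\geq 0$). Net assessment: your route buys a cleaner endpoint evaluation that sidesteps the paper's somewhat delicate $\rsq_*\to\rsq$ replacement, at the cost of one extra monotonicity lemma and a smallness condition on $\alpha$ for that step; completing the remaining non-asymptotic bookkeeping you flag at the end is routine and would yield the stated constant.
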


Suppose the available screening capacity $\alpha + \Delta_{\alpha}$ is very small ($\alpha + \Delta_{\alpha} \ll \beta$), and assume there is a baseline level of predictability (i.e., $\rsq$ is bounded away from $0$). Then \theoremref{par small alpha} implies that the PAR can become very large. Specifically, for small $\alpha$, $\nquant{1-\alpha}$ grows asymptotically like $\sqrt{\log{(1/\alpha})}$. Consequently, the polynomial growth of $\alpha^{-1/(1-\rsq)}$ drives the PAR to increase rapidly as $\alpha$ decreases. It follows that in the scarce capacity regime, expanding the screening capacity has a far greater impact than improvements in prediction accuracy.

\paragraph{When does prediction have the highest impact?}

\begin{theorem}[Maximally Effective (Local) Prediction Improvements] 
\theoremlabel{max rsq improvements}
Let $0 < \beta < 1$ be fixed and $0 < \alpha < 1$. Consider the points that maximize the local rate of change in policy value $V$ with respect to improvements in $\rsq$:
\begin{align*}
    (\alpha_*, \rsq_*) = \underset{(\alpha, \rsq) \in (0, 1) \times (0, 1)}{\argmax} \lim_{\Delta \rightarrow 0} \frac{V(\alpha, \beta, \rsq + \Delta) - V(\alpha, \beta, \rsq)}{\Delta}
\end{align*}
The local improvements in $V$ diverge --- and are maximized --- in two regimes: (1) $\rsq_* \to 1$, $\alpha_* = \beta$, and (2) $\rsq_* \to 0$. For both regimes, setting $\Delta_{\rsq} = \Delta_{\alpha} = \Delta$, the local prediction-access ratio satisfies $\lim_{\Delta \to 0} \textup{PAR}(\alpha, \beta, \Delta) \rightarrow 0$.
\end{theorem}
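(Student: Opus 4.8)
The plan is to reduce the statement to an explicit asymptotic analysis of two partial derivatives of $V$, both obtained in closed form, from which the divergence of the local prediction improvement and the vanishing of the PAR can be read off directly. Since $V(\alpha,\beta,\rsq) = \jcdf{\nquant{\alpha},\nquant{\beta};\rho}/\beta$ with $\rho = \sqrt{\rsq}$, the local rate of change $\lim_{\Delta\to 0}[V(\alpha,\beta,\rsq+\Delta)-V(\alpha,\beta,\rsq)]/\Delta$ is exactly $\pder[V]{\rsq}$, and by the chain rule $\pder[V]{\rsq} = \frac{1}{2\beta\sqrt{\rsq}}\,\pder{\rho}\jcdf{\nquant{\alpha},\nquant{\beta};\rho}$. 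The key classical tool is Plackett's identity, which states that the derivative of the bivariate standard normal CDF with respect to its correlation equals the bivariate density, $\pder{\rho}\jcdf{h,k;\rho} = \jpdf{h,k;\rho}$. Writing $h=\nquant{\alpha}$ and $k=\nquant{\beta}$, this yields the closed form
\begin{align*}
\pder[V]{\rsq} = \frac{1}{2\beta\sqrt{\rsq}}\cdot\frac{1}{2\pi\sqrt{1-\rsq}}\exp\!\left(-\frac{h^2 - 2\rho hk + k^2}{2(1-\rsq)}\right).
\end{align*}

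With this expression in hand, I would analyze its boundary behavior. As $\rsq\to 0$ the prefactor $(\sqrt{\rsq})^{-1}$ diverges while the exponential tends to the fixed positive constant $\exp(-(h^2+k^2)/2)$, so $\pder[V]{\rsq}\to\infty$ for every fixed $\alpha$; this is regime (2). As $\rsq\to 1$ the prefactor $(\sqrt{1-\rsq})^{-1}$ diverges, and the exponent's numerator tends to $(h-k)^2$, so the behavior splits. If $\alpha=\beta$ then $h=k$, the numerator equals $2h^2(1-\rho)$, and after cancelling $1-\rho$ against $1-\rho^2=(1-\rho)(1+\rho)$ the exponential tends to the positive constant $\exp(-h^2/2)$, so the derivative diverges --- this is regime (1). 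If $\alpha\neq\beta$ then $(h-k)^2>0$ is fixed and the exponential decays like $\exp(-c/(1-\rsq))$ for a positive constant $c$, which dominates the $(1-\rsq)^{-1/2}$ blow-up and drives the derivative to $0$. Finally, as $\alpha\to 0$ or $\alpha\to 1$ we have $h\to\pm\infty$, the exponential vanishes, and the derivative goes to $0$; on every compact subset of the interior the closed form is finite and continuous. This confirms that the only regimes in which the local prediction improvement diverges are (1) and (2).

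For the PAR, note that $\lim_{\Delta\to 0}\textup{PAR}(\alpha,\beta,\Delta) = \pder[V]{\alpha}\big/\pder[V]{\rsq}$, so it suffices to show the numerator stays bounded while the denominator diverges. Using the marginal identity $\pder{h}\jcdf{h,k;\rho} = \pdf{h}\,\cdf{(k-\rho h)/\sqrt{1-\rho^2}}$ together with $\pder{\alpha}\nquant{\alpha} = 1/\pdf{\nquant{\alpha}}$, the density factors cancel and we obtain the clean form $\pder[V]{\alpha} = \frac{1}{\beta}\,\cdf{(k-\rho h)/\sqrt{1-\rho^2}}$, which is bounded above by $1/\beta$ uniformly. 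Hence in both regime (1) (where in fact $\pder[V]{\alpha}\to 1/(2\beta)$, since the argument of $\Phi$ tends to $0$) and regime (2) (where $\pder[V]{\alpha}\to\cdf{k}/\beta = 1$), the ratio of a bounded numerator to a diverging denominator gives $\lim_{\Delta\to 0}\textup{PAR}\to 0$.

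The main obstacle is the $\rsq\to 1$ analysis: one must carefully track the competition between the algebraic blow-up $(1-\rsq)^{-1/2}$ and the exponential, and establish that divergence requires $h=k$ exactly (i.e.\ $\alpha=\beta$), with the off-diagonal case decaying. A secondary subtlety is interpretive rather than computational: because $\pder[V]{\rsq}$ is unbounded on $(0,1)\times(0,1)$, the ``$\argmax$'' is not attained at an interior point, so the statement should be read as identifying the two boundary directions along which the supremum $+\infty$ is approached, which the boundary analysis above makes precise.
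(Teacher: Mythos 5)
Your proposal is correct and follows essentially the same route as the paper: both compute $\pder[V]{\rsq}$ in closed form via the chain rule and the Plackett/Drezner identity for $\pder{\rho}\jcdf{h,k;\rho}$, identify divergence of the prefactor against a finite exponential in the regimes $\rsq \to 0$ and $\rsq \to 1$ with $\alpha = \beta$, and conclude $\textup{PAR} \to 0$ by pairing the diverging denominator with the bounded derivative $\pder[V]{\alpha} = \frac{1}{\beta}\cdf{(\nquant{\beta}-\rho\nquant{\alpha})/\sqrt{1-\rsq}}$. Your treatment is in fact slightly more complete than the paper's, which only remarks that the exponential "generally suppresses" the prefactor off the diagonal, whereas you verify the $\alpha \neq \beta$ decay and flag that the supremum is attained only in the boundary limit.
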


According to \theoremref{max rsq improvements}, marginal improvements in prediction are most impactful in two distinct regimes. First, when predictive capacity is very low, even a small initial investment can lead to disproportionately large improvements, provided that a minimal baseline of screening capacity is present. Second, as $\rsq$ approaches one, further marginal improvements can also have a significant relative impact, specifically around the point where the screening capacity $\alpha$ matches the requirements for screening the entire $\beta$-segment of the population. See \figureref{par}.

\paragraph{When are small increases in screening capacity more impactful than improving predictions?}
\begin{proposition}[PAR for Local Improvements]
\propositionlabel{nonquant bound}
Let $\rsq$, $\beta$, and $\alpha$ satisfy either $\rsq \in (0.15, 0.85)$, $\beta \in (0.03, 0.5)$, and $\alpha \leq \beta$, or $\rsq \in (0.2, 0.5)$, $\beta \geq 0.15$, and $\alpha \leq 0.5$. If $\Delta_{\rsq} = \Delta_{\alpha} = \Delta$, then $\lim_{\Delta \to 0} \textup{PAR}(\alpha, \beta, \Delta) \geq 1$.
\end{proposition}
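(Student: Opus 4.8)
The plan is to pass to the limit by differentiating the closed form of $V$ in \propositionref{policy value gaussian}. Write $a=\nquant{\alpha}$, $b=\nquant{\beta}$, and $\rho=\sqrt{\rsq}$. Since $\Delta_\alpha=\Delta_\rsq=\Delta$ and both partials are strictly positive in the interior, $\lim_{\Delta\to0}\textup{PAR}=(\partial V/\partial\alpha)\big/(\partial V/\partial\rsq)$. First I would compute $\partial V/\partial\alpha$ using the factorization $\jpdf{a,t;\rho}=\pdf{a}\,(1-\rho^2)^{-1/2}\,\pdf{(t-\rho a)/\sqrt{1-\rho^2}}$ together with the inverse-function derivative $da/d\alpha=1/\pdf{a}$; the $\pdf{a}$ factors cancel, leaving the clean expression $\partial V/\partial\alpha=\tfrac1\beta\,\cdf{(b-\rho a)/\sqrt{1-\rho^2}}$. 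For $\partial V/\partial\rsq$ I would invoke Plackett's identity $\partial_\rho\jcdf{a,b;\rho}=\jpdf{a,b;\rho}$ together with $d\rho/d\rsq=1/(2\rho)$, giving $\partial V/\partial\rsq=\jpdf{a,b;\rho}/(2\rho\beta)$. Substituting the density factorization once more, the condition $\textup{PAR}\ge1$ collapses to the single transcendental inequality
\[ 2\rho\sqrt{1-\rho^2}\,\frac{\cdf{u}}{\pdf{u}}\ \ge\ \pdf{a},\qquad u=\frac{b-\rho a}{\sqrt{1-\rho^2}}. \]

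Next I would use monotonicity to collapse each parameter box to its worst corner. The key lemma is that $u\mapsto\cdf{u}/\pdf{u}$ is strictly increasing on $\R$: its derivative equals $1+u\,\cdf{u}/\pdf{u}$, which is positive everywhere once we use the elementary lower-tail bound $\cdf{u}/\pdf{u}<1/|u|$ for $u<0$. Since $u$ is strictly decreasing in $a$, the left-hand side above is decreasing in $a$; and for $a\le0$ the right-hand side $\pdf{a}$ is increasing in $a$. Hence the gap between the two sides is monotone decreasing in $a$ throughout $a\le0$. In the first regime the constraints $\alpha\le\beta\le0.5$ give $a\le b\le0$, so the worst case is $a=b$ (that is, $\alpha=\beta$), with $u=b\sqrt{(1-\rho)/(1+\rho)}$. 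In the second regime $\alpha\le0.5$ forces $a\le0$, so the worst case is $a=0$, where $\pdf{a}=1/\sqrt{2\pi}$ and $u=b/\sqrt{1-\rho^2}$; moreover the left-hand side is increasing in $\beta$ there, so within $\beta\ge0.15$ the binding value is $\beta=0.15$, i.e. $b=\nquant{0.15}$.

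It then remains to verify the reduced inequalities over the compact $\rho$-intervals. For the second regime this is a one-parameter inequality in $\rho\in(\sqrt{0.2},\sqrt{0.5})$ with $a=0$ and $b=\nquant{0.15}$, which I would certify by pairing the exact coefficient $2\rho\sqrt{1-\rho^2}=2\sqrt{\rsq(1-\rsq)}$ with a rigorous Mills-type lower bound of the form $\cdf{u}/\pdf{u}\ge\tfrac12\big(\sqrt{u^2+4}+u\big)$, valid for $u<0$; the same bound, applied at $a=b$, dispatches the first regime with substantial room to spare. I expect this last step to be the main obstacle: the inequality is transcendental in three coupled parameters, and the second regime is the binding one, where after all reductions the certified left-hand side bottoms out only modestly above the constant $1/\sqrt{2\pi}\approx0.399$ (numerically near $0.46$ as $\rsq\to0.2$, $\beta\to0.15$, $\alpha\to0.5$). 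Establishing that this margin stays strictly positive uniformly over the box—rather than merely at sampled points—is the crux; should the closed-form Mills envelope prove too lossy at some interior $\rho$, I would fall back on an interval-arithmetic verification with a Lipschitz bound controlling the residual between grid points.
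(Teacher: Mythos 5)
Your proposal is correct, and its skeleton matches the paper's proof: identify the local PAR with the ratio of partials $\partial_\alpha V / \partial_{\rsq} V$, compute $\partial_\alpha V = \tfrac{1}{\beta}\cdf{u}$ and $\partial_{\rsq} V = \jpdf{a,b;\rho}/(2\rho\beta)$ exactly as in \lemmaref{derivative alpha} and \lemmaref{derivative rsq}, invoke monotonicity of $\cdf{z}/\pdf{z}$ (\lemmaref{cdf pdf ratio}, which you prove with a cleaner Mills-ratio argument than the paper's), and reduce to the worst corners $a=b$ (regime 1) and $a=0$ (regime 2). The genuine differences lie in the two remaining steps. Where you keep the exact factorization $\jpdf{a,b;\rho}=\pdf{a}\pdf{u}/\sqrt{1-\rho^2}$, the paper instead upper-bounds $\partial_{\rsq}V$ by discarding the $\pdf{a}$ factor (\lemmaref{rsq upper bound} bounds the exponent below by $(b-\rho a)^2$, i.e.\ replaces $\pdf{a}$ by $\pdf{0}$); your version is strictly sharper in regime 1, where $\pdf{b}\ll\pdf{0}$ gives you large margins, but it coincides with the paper's bound in the binding regime 2, since there the worst case is exactly $a=0$. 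For the final certification---which you rightly flag as the crux---the paper needs neither a Mills envelope nor interval arithmetic: it bounds the two factors $\sqrt{8\pi\rsq(1-\rsq)}$ and $\cdf{z}/\pdf{z}$ \emph{separately} by their minima over the parameter box (these occur at different corners, which is harmless since a product of minima lower-bounds the minimum of the product), so regime 2 collapses to the single numerical check $2.00\times 0.52 \geq 1$. The same decoupling closes your version without any fallback: over $\rsq\in[0.2,0.5]$ one has $2\sqrt{\rsq(1-\rsq)}\geq 0.8$, and your Mills bound at the worst point $u=\nquant{0.15}/\sqrt{0.5}\approx -1.466$ gives $\cdf{u}/\pdf{u}\geq 0.507$, whence the left side is at least $0.8\times 0.507\approx 0.406 > 1/\sqrt{2\pi}\approx 0.399$ uniformly. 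So your plan goes through; the only piece genuinely missing was this decoupling observation, which turns your ``verify over a box'' step into two one-point evaluations.
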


We find that the PAR remains above one as long as $\alpha \leq \beta$ and $\rsq$ is not too extreme.  For larger $\beta$ values (i.e., $\beta \geq 0.15$) the PAR stays above one even for large $\alpha$ provided $\rsq$ remains in a moderate range. Crucially, this represents the standard parameter regime in which most allocation programs operate, characterized by a moderate baseline of predictions and resource levels comparable to $\beta$.

\paragraph{Numerical Simulations.} We complement our theoretical investigation with numerical simulations of the PAR for different $\alpha$, $\beta$ and $\rsq$ values (see \figureref{par}). Consistent with our theoretical results, the PAR becomes large for small screening capacities ($\alpha \ll \beta$) and remains above one for $\alpha \leq \beta$, provided a small baseline level of predictive performance has been established. The bounds in \propositionref{nonquant bound} are conservative, with PAR $> 1$ observed for a broad range of $\rsq$ values. Prediction improvements are particularly impactful when $\rsq$ is small. Although the PAR falls below one in the high-$\rsq$ and high-$\alpha$ regime, allocation is nearly perfect, making further improvements a ``last mile'' effort.

\begin{figure*}[t]
\centering  
\subfigure[PAR]{\includegraphics[width=0.42\textwidth]{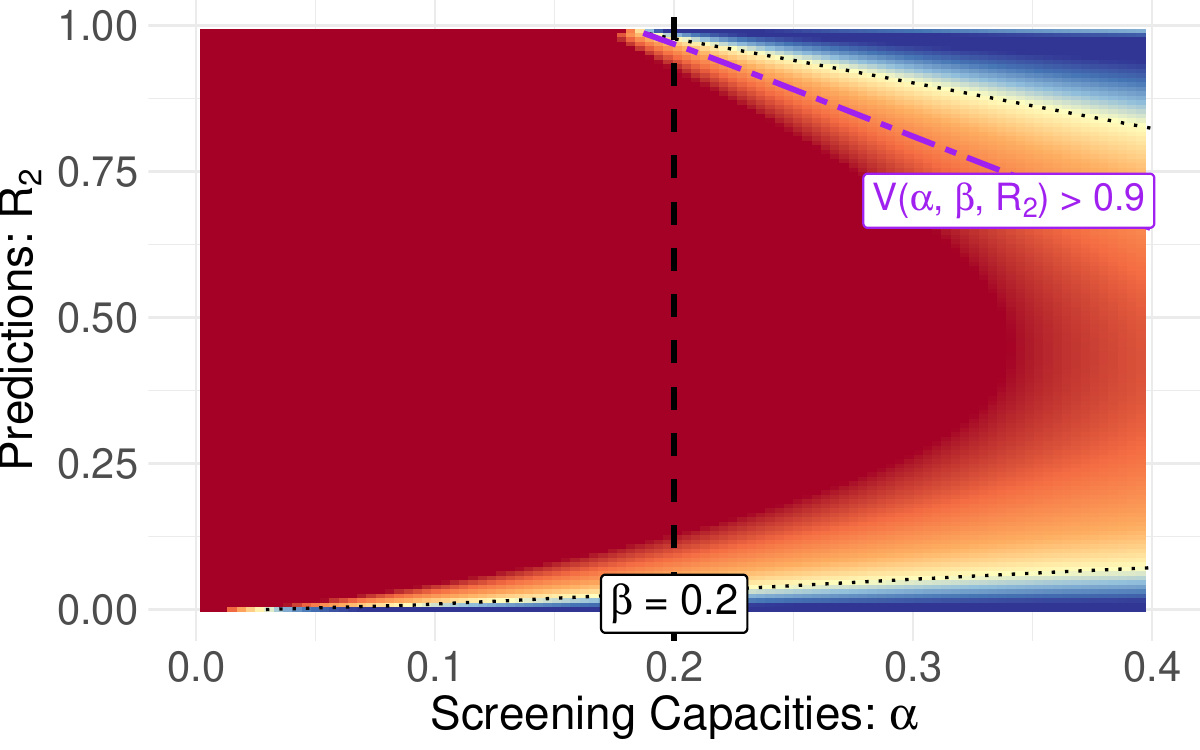}}
\hspace{0.05\textwidth}
\subfigure[$1/4 \times \textup{PAR}$]{\includegraphics[width=0.42\textwidth]{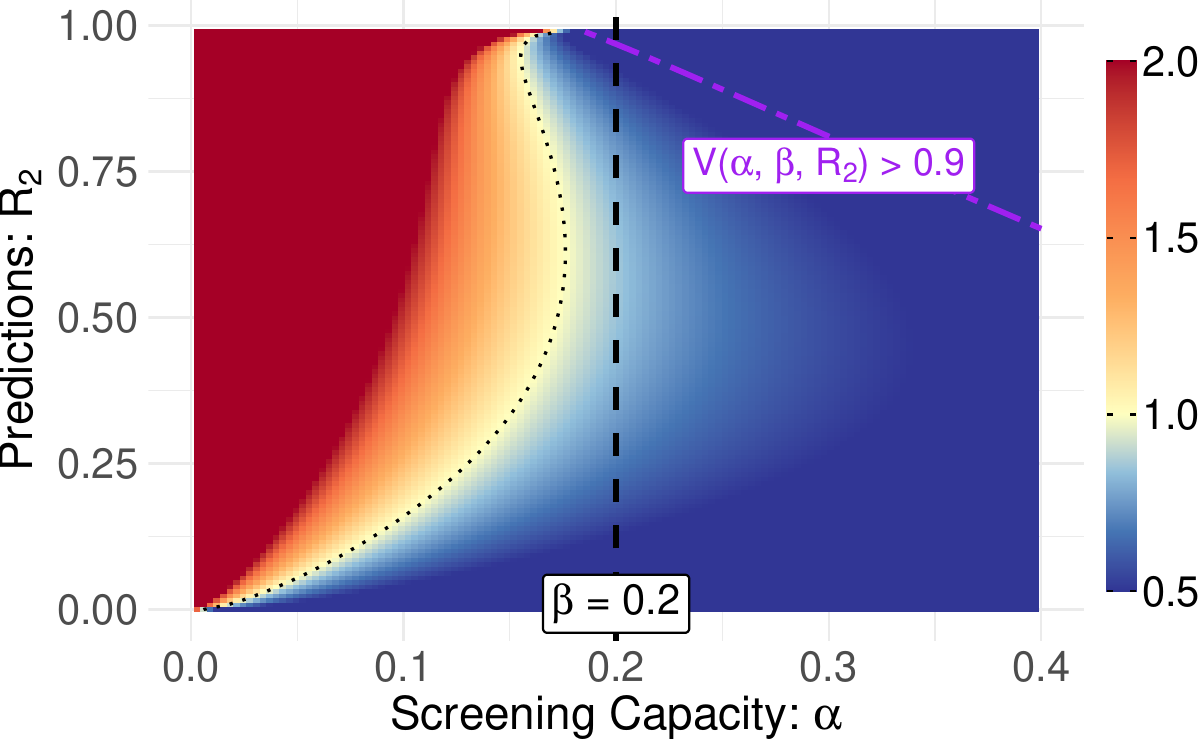}}
\caption{\textbf{Numerical Simulation of the Prediction-Access Ratio (PAR)}, \equationref{PAR}, for $\Delta_{\rsq} \! = \! \Delta_\alpha\!  = \! 0.01$ and $\beta \! = \! 0.2$. (Left) The $\textup{PAR}$ values. (Right) $1/4 \times \textup{PAR}$, representing a cost ratio of $1/4$. Each point represents a screening capacity $\alpha$ (x-axis) and $\rsq$ value (y-axis), with the color bar showing the PAR clipped to the range $[0.5, 2.0]$. Dotted black lines represent $\textup{PAR} \! = \! 1$, where improvements in $\alpha$ and $\rsq$ are equally effective. The purple line marks the region in the $(\alpha,\rsq)$ space where the policy value $V(\alpha, \beta, \rsq)$ exceeds $0.9$.}
\figurelabel{par}
\end{figure*}

\paragraph{Discussion.} We found several insights relevant to policymakers aiming to iteratively improve a screening system. First, establishing a baseline level of predictive performance is usually a good starting point. Once this is achieved, expanding the screening capacity becomes the next priority. For very small capacities, \theoremref{par small alpha} tell us that the PAR can increase significantly, making investments in screening capacity highly impactful.

Generally, expanding capacity to at least the level where everyone in need could hypothetically be screened ($\alpha \geq \beta$) is likely cost-efficient. Once both screening capacity and predictive accuracy are high and the allocation system is close to optimal, improvements in prediction become relatively more valuable again for perfecting the system. However, this regime may rarely be reached in practice.

In \figureref{par}, we display the PAR for a cost ratio of $1/4$. As expected, the regions where investing in $\rsq$ is more efficient expand, and some of the earlier nonquantitative bounds no longer apply. Nevertheless, the key insights remain consistent: when screening capacities are small, investments in expanding them are very effective, while improvements in $\rsq$ are more important when predictive accuracy is low.

\section{Empirically Evaluating the PAR}
While our theory offers broad intuition when expanding screening capacity or improving predictions is most effective, policymakers need practical tools for their own systems. To support this, we develop a methodology to compute and interpret the prediction-access ratio, helping social planners identify the most efficient policy levers for their unique problem context.

\paragraph{Policy Value.}

As before, we define the allocation policy's value as the probability that the worst-off individuals are successfully identified:, i.e. $V(\alpha, \beta) = \Prob{\Yhat \leq \quant{\Yhat}{\alpha} \mid \Y \leq \quant{\Y}{\beta}}$. In practice, this can be measured using a recall-like metric, capturing the proportion of truly at-risk individuals screened by the policy.
\begin{align*}
  V(\alpha, \beta) 
  \approx \frac{\sum_{i=1}^n1\{\Yhat_i \leq \quant{\Yhat, n}{\alpha}\}1\{Y_i \leq \quant{Y, n}{\beta}\}}{\sum_{i=1}^n 1\{Y_i \leq \quant{Y, n}{\beta}\}}
\end{align*}
\textbf{Increasing Screening Capacity.}
Given a chosen $\Delta_{\alpha}$ the policy improvement can be directly computed $V(\alpha + \Delta_\alpha, \beta) - V(\alpha, \beta)$ by recalculating the empirical policy value at the new threshold. For example, in cash transfer programs \citep{blumenstock2016}, a key question is how many resources $\alpha^*$ are required to reach a specified fraction $p$ of poor households, i.e. $\alpha^* = \inf_{\alpha \in (0, 1)} \{\alpha \from V(\alpha, \beta) \geq p\}$.

\paragraph{Improving Predictions.} \sectionlabel{sec: emp improving predictions}
A decision-maker can improve a model's predictions through various pathways:
\begin{enumerate}[label=\alph*)]
    \item \textbf{Data Collection} Collect additional samples and increase the frequency of data collection. Social prediction systems are often vulnerable to distribution shifts over time in dynamic and evolving environments \citep{FISCHERABAIGAR2024101976, 10.1145/3588001.3609369}.
    \item \textbf{Data Quality} Improve data quality (i.e., reduce errors and missing data) by means such as standardizing data collection processes, implementing centralized data management systems, and offering targeted training programs for staff.
    \item \textbf{Collect Additional Features} In government, this may involve integrating separate data sources across institutions \citep{sun2019mapping, wirtz2019artificial}.
    \item \textbf{Advanced Modeling Techniques} Utilize more sophisticated modeling techniques, which might capture more complex patterns in the data but are often more costly to operationalize.
\end{enumerate}

In resource-constrained settings, planners often focus on incremental improvements rather than rebuilding entire systems. For instance, collecting a small amount of additional data may boost $\rsq$ by a few points, uniformly reducing errors. To simulate such minor gains, we scale the model’s residuals $\Yhat_+ = \Yhat + \delta (Y - \Yhat)$, choosing $\delta \in (0, 1)$ so that $\rsq$ increases by a target $\Delta_{\rsq}$ (see Appendix \ref{appx: prediction improvements}). This preserves the overall error structure, allowing us to gauge how a ``similar but slightly better'' model affects policy outcomes. 

This approach can be extended in several ways. For example, residuals could be adjusted for specific subgroups to account for uneven prediction improvements (e.g., targeted data collection for rural or underrepresented populations). Alternatively, planners could retrain models under different conditions — such as sample size, feature set, or architecture — and compare the resulting policy value.

\section{Case Study: Identifying Long-Term Unemployment in Germany}

Public employment services (PES) across the globe make use of profiling approaches to identify jobseekers at risk of long-term unemployment to target preventative measures \citep{loxha2014profiling}. Starting from traditional rule-based approaches, many PES either test or already deploy algorithmic profiling to identify jobseekers in need of support \citep{desiere2019statistical, kortner2023predictive}. While these profiling tools assist in allocating programs that account for large shares of PES spending --- making design choices critical \citep{kern2024design} --- systematic assessments of their relative value compared to other measures for improving jobseekers' outcomes remain absent.

\begin{figure}[ht]
\centering  
\includegraphics[width=0.49\textwidth]{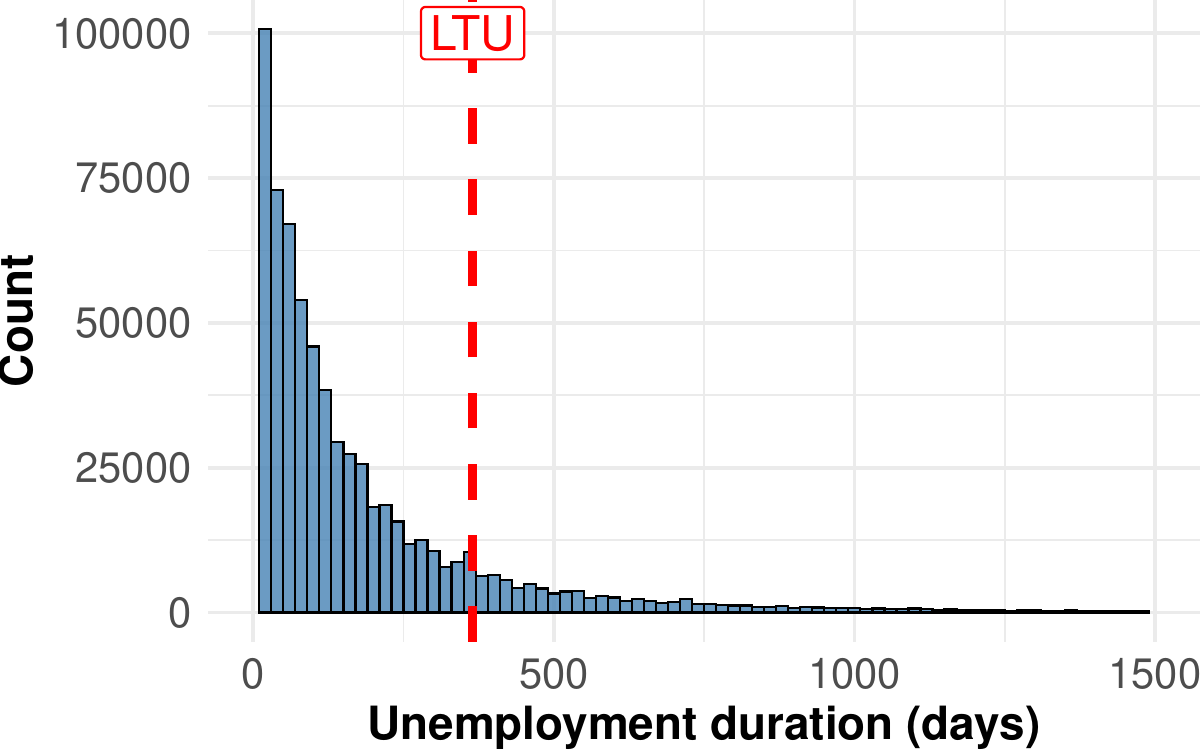}
\caption{\textbf{Unemployment duration} The red line marks the 12 month threshold used to classify a jobseeking episode as long-term unemployment (LTU) in Germany.} 
\figurelabel{appx: unemployment}
\end{figure}

We secured access to a dataset\footnote{For a more in-depth description of how the \textit{Sample of Integrated Employment Biographies} is constructed  we refer to the official documentation provided by the IAB \citep{siab2019report}.} on German jobseekers derived from German administrative labor market records that cover a large portion of the German labor force. It covers a period from 1975 to 2017 and merges multiple administrative data sources, containing a wide spectrum of individual labor market information --- including records on employment histories, received benefits, unemployment periods, participation in job training programs and demographic information. Such administrative records are the primary data source used by PES to build algorithmic profiling models \citep{Bach_Kern_Mautner_Kreuter_2023}.

\paragraph{Experimental Setup.}
We train a model to predict how long a newly registered jobseeker remains unemployed, defining the target $Y$ as unemployment duration in days (capped at $24$ months)\footnote{Note that, unlike the theoretical investigation where $Y$ represented a positive welfare outcome (e.g., income), here a larger $Y$ corresponds to a worse outcome (longer unemployment).}. Following \citet{Bach_Kern_Mautner_Kreuter_2023}, we use a set of covariates capturing demographic information, labor market history, and most recent job details. We focus on unemployment spells beginning between 2010 and 2015, resulting in data on 274,515 different jobseekers and 553,980 unemployment spells (see \figureref{fig:timeline}).

To avoid the impact of significant labor market reforms in Germany and to ensure full observation of unemployment durations up to 24 months, we restrict our analysis to unemployment episodes that began between 2010 and 2015. We use records from 2010 and 2011 to build the training dataset, records from 2012 for validation, and evaluate test performance on data from 2015 (see \figureref{fig:timeline}). We left a gap between the training and test data periods to allow enough time for the outcomes in the training data to have been fully observed at test time, in order to mimic a realistic deployment scenario starting at the beginning of 2015. We refer to Appendix \ref{appx: experimental setup} for additional information on the experimental setup and data.

\vspace{0.3in}
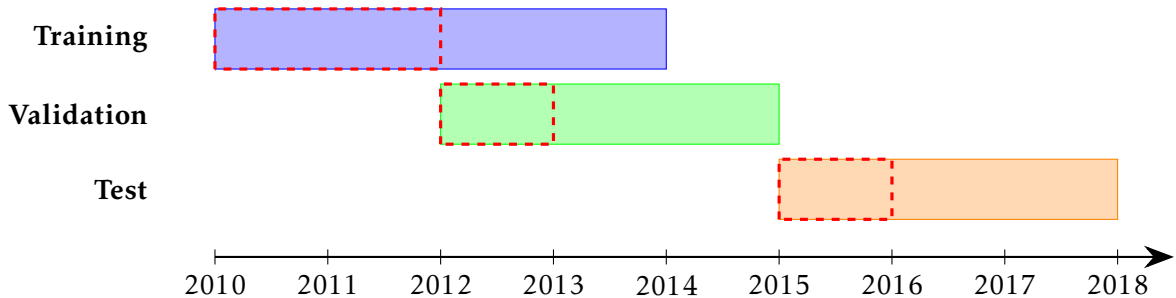
\begin{figure}[htbp]
\centering
\begin{tikzpicture}[x=1.5cm, y=1cm, 
    timeline/.style={thick, -{Stealth[length=4mm]}},
    boxstyle/.style={draw, rounded corners, minimum height=0.8cm}]

\def\trainY{2}
\def\validY{1}
\def\testY{0}

\draw[timeline] (0,-0.5) -- (8.5,-0.5);

\foreach \x/\year in {0/2010, 1/2011, 2/2012, 3/2013, 4/2014, 5/2015, 6/2016, 7/2017, 8/2018} {
  \draw (\x,-0.6) -- (\x,-0.4);
  \node[below] at (\x,-0.6) {\year};
}

\node[anchor=east, font=\bfseries] at (-0.5, \trainY + 0.4) {Training};
\node[anchor=east, font=\bfseries] at (-0.5, \validY + 0.4) {Validation};
\node[anchor=east, font=\bfseries] at (-0.5, \testY + 0.4) {Test};

\begin{scope}[shift={(0,\trainY)}]
  \filldraw[fill=blue!30, draw=blue] (0,0) rectangle (4,0.8);
  \draw[dashed, very thick, red] (0,0) rectangle (2,0.8);
\end{scope}

\begin{scope}[shift={(0,\validY)}]
  \filldraw[fill=green!30, draw=green] (2,0) rectangle (5,0.8);
  \draw[dashed, very thick, red] (2,0) rectangle (3,0.8);
\end{scope}

\begin{scope}[shift={(0,\testY)}]
  \filldraw[fill=orange!30, draw=orange] (5,0) rectangle (8,0.8);
  \draw[dashed, very thick, red] (5,0) rectangle (6,0.8);
\end{scope}

\end{tikzpicture}
\caption{Stacked timeline diagram illustrating training (2010--2013), validation (2012--2014), and test (2015--2017) data periods. Red dashed boundaries within each colored box indicate the possible start dates of unemployment episodes, while the full colored boxes represent the entire observation phases for each dataset.}

\figurelabel{fig:timeline}
\end{figure}

\paragraph{Guiding Questions} Our focus is the $\beta$-fraction of jobseekers with the longest expected unemployment durations, representing those most at risk. In Germany, being unemployed for over one year (about $15\%$ of cases in our data; see \figureref{appx: unemployment}) meets the legal definition of long-term unemployment \citep{Bach_Kern_Mautner_Kreuter_2023}, but some countries adopt different cutoffs \citep{desiere2019statistical}. Taking the perspective of a social planner designing a profiling system in a public employment office, our analysis aims to answer the following questions:

\begin{itemize}
    \item How much does the screening capacity\footnote{In cases where multiple individuals have the exact same (predicted) unemployment duration at the threshold, ties are randomly broken, such that only a $\alpha$ proportion of the population are screened.} need to increase to ensure a significant portion of the high-risk jobseekers are screened, given the inaccuracies in the prediction system?
    \item What is the real-world impact of  improving screening capacity versus prediction errors?
    \item When do small improvements in prediction error have the largest impact?
    \item What are the relative benefits and trade-offs of using a simpler vs more complex prediction model?
\end{itemize}

\subsection{Results}

We train a \texttt{CatBoost} model (see Appendix \ref{appx: model training} for details), achieving an $\rsq$ of $0.15$ on the test set. This level of predictive power aligns well with what is typically observed in social prediction tasks \citep{Salganik2020} and similar applied settings \citep{desiere2019statistical}.
 
\paragraph{How much does the screening capacity need to increase to target a significant fraction of high-risk jobseekers?}
As expected, larger screening capacities increase both the policy value and the number of high-risk jobseekers screened (see \figureref{fig: CatBoost Policy Value TPC}). Focusing on the (German) LTU cutoff ($\beta \approx 0.15$), our policy value aligns well with findings of previous studies\footnote{For the percentage of correctly identified LTU episodes, they report values of $0.29$ at $\alpha \approx 0.1$ and $0.58$ at $\alpha \approx 0.25$, compared to our observed values of $0.28$ and $0.56$, respectively.} 
\citep{Bach_Kern_Mautner_Kreuter_2023}.

\begin{figure*}[ht]
\centering  
\includegraphics[width=0.44\textwidth]{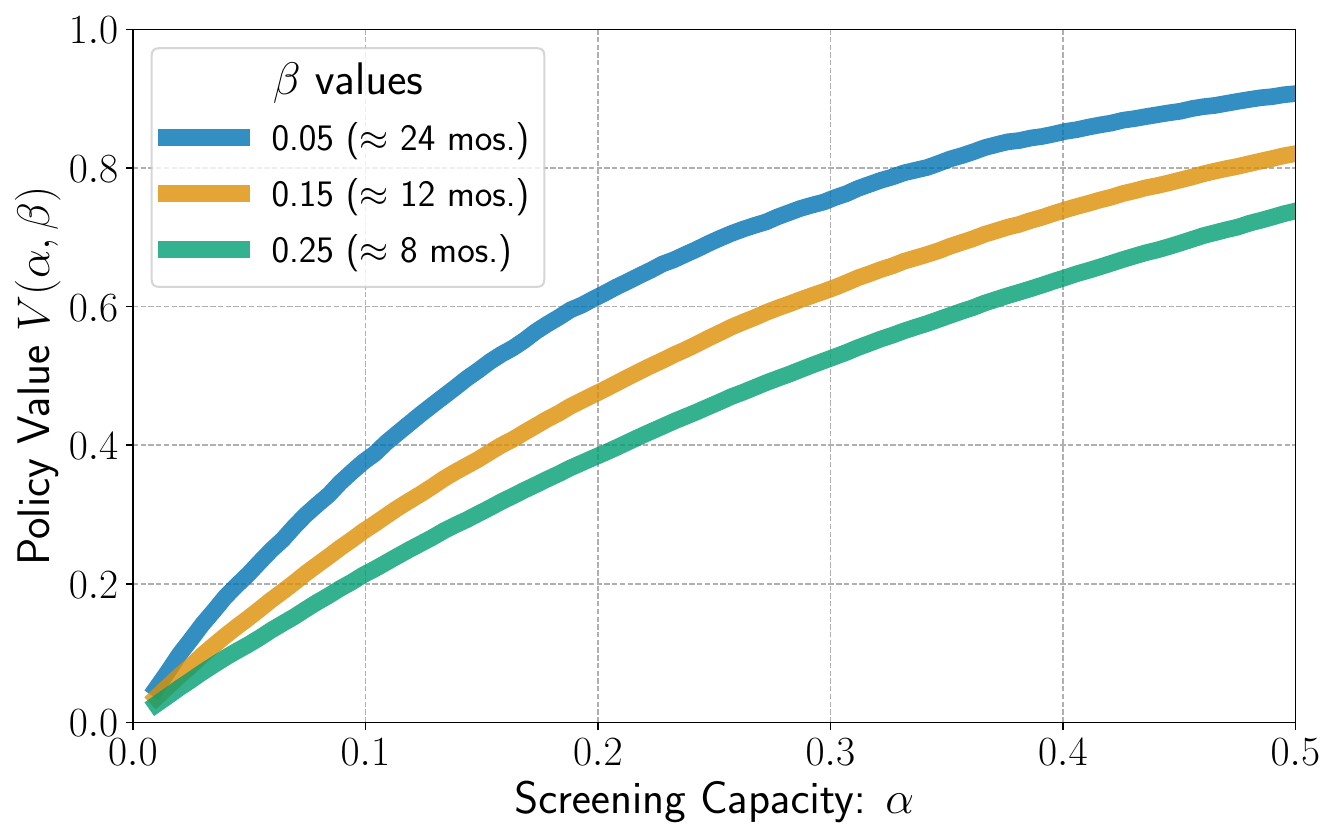}
\caption{Policy value across different screening capacities ($\alpha$) and worst-off fractions $\beta$ evaluated on the test set using the \texttt{CatBoost} regression model. A $\beta$ value of $0.15$ corresponds to the $12$-month cutoff used to define long-term unemployment in Germany.}
\figurelabel{fig: CatBoost Policy Value TPC}
\end{figure*}

A planner might begin by setting $\alpha = \beta$, ensuring that, in theory, enough capacity is provided to screen and support every high-risk jobseeker. A natural question then arises: how much additional capacity $\Delta_{\alpha}$ would be required to screen at least a specified percentage of high-risk individuals? This additional capacity represents the overhead that must be invested to account for imperfect predictions. We observe that the $\Delta_{\alpha}$ required to ensure at least 75\% of high-risk jobseekers are screened remains consistently around $0.25$ across different $\beta$ values. While the policy value increases as $\alpha = \beta$ rises, the marginal improvements gained from increasing access decrease for higher $\alpha$, resulting in a somewhat stable $\Delta_{\alpha}$ across $\beta$. In practice, this means we need to screen $25\%$ more of the population to ensure adequate coverage.

\begin{figure*}[ht]
\centering  
\subfigure[Constant Prediction ($\rsq = 0$)]{\includegraphics[width=0.4\textwidth]{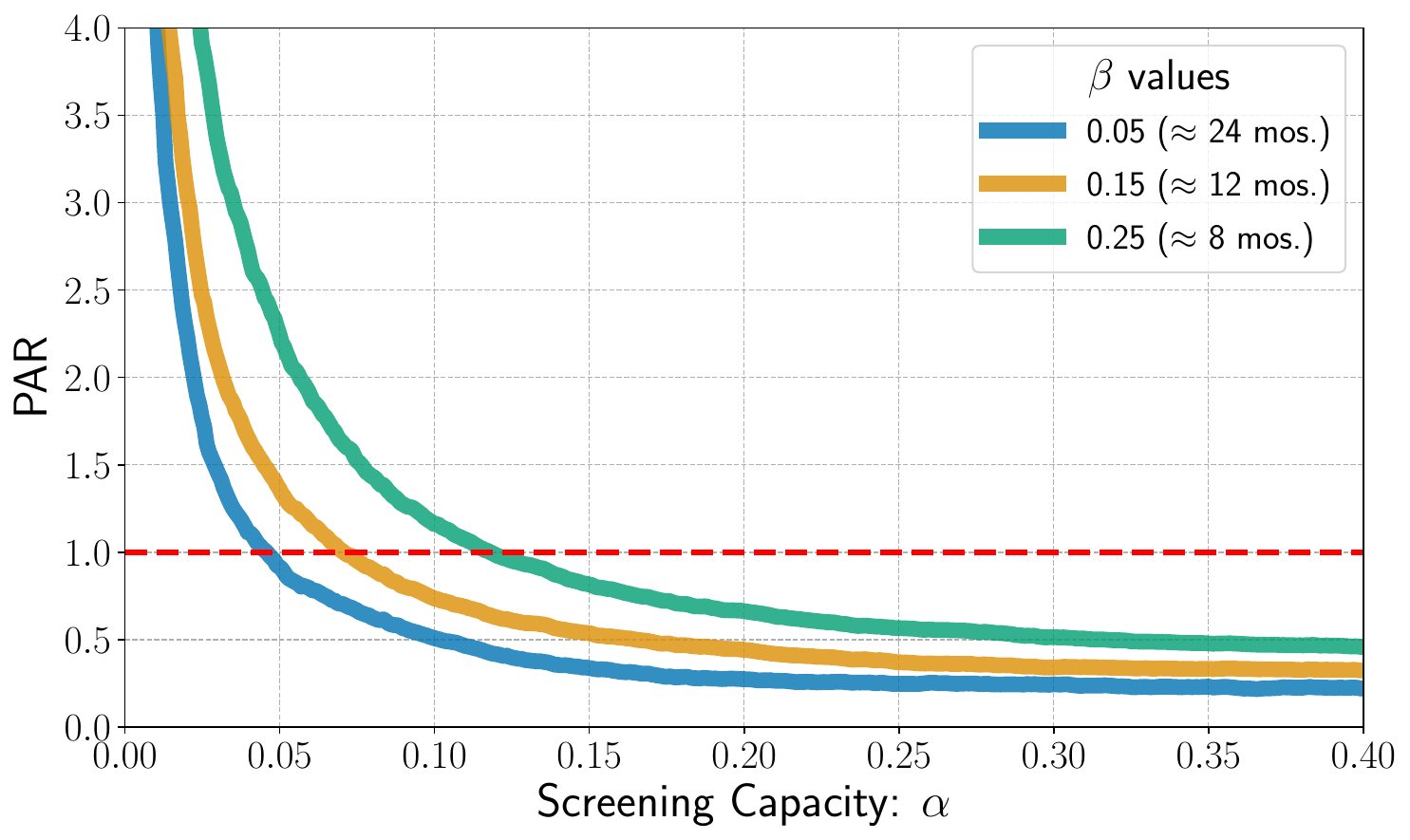}}
\subfigure[Trained Model ($\rsq = 0.15$)]{\includegraphics[width=0.4\textwidth]{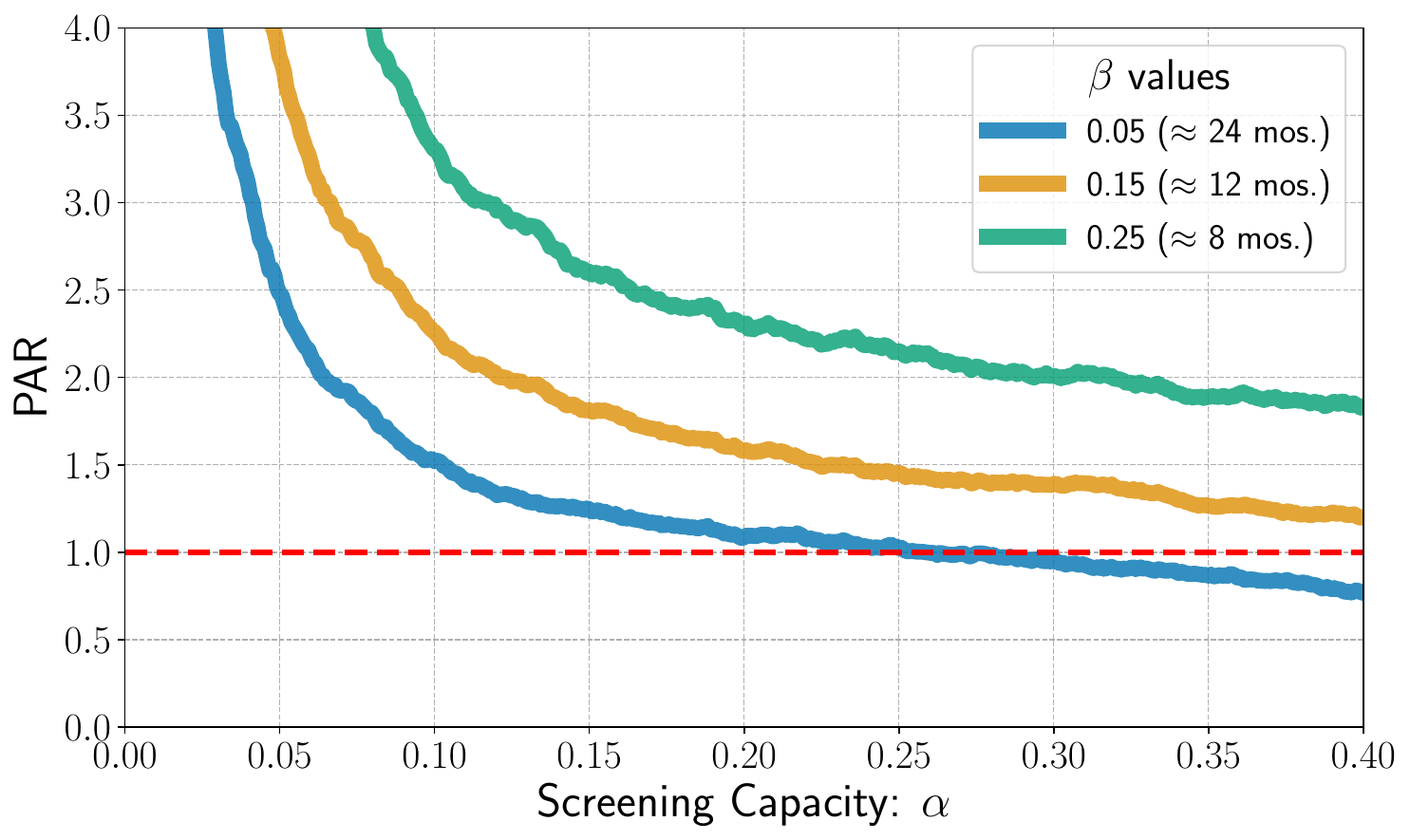}}
\subfigure[Near-Perfect Prediction ($\rsq = 0.9$)]{\includegraphics[width=0.4\textwidth]{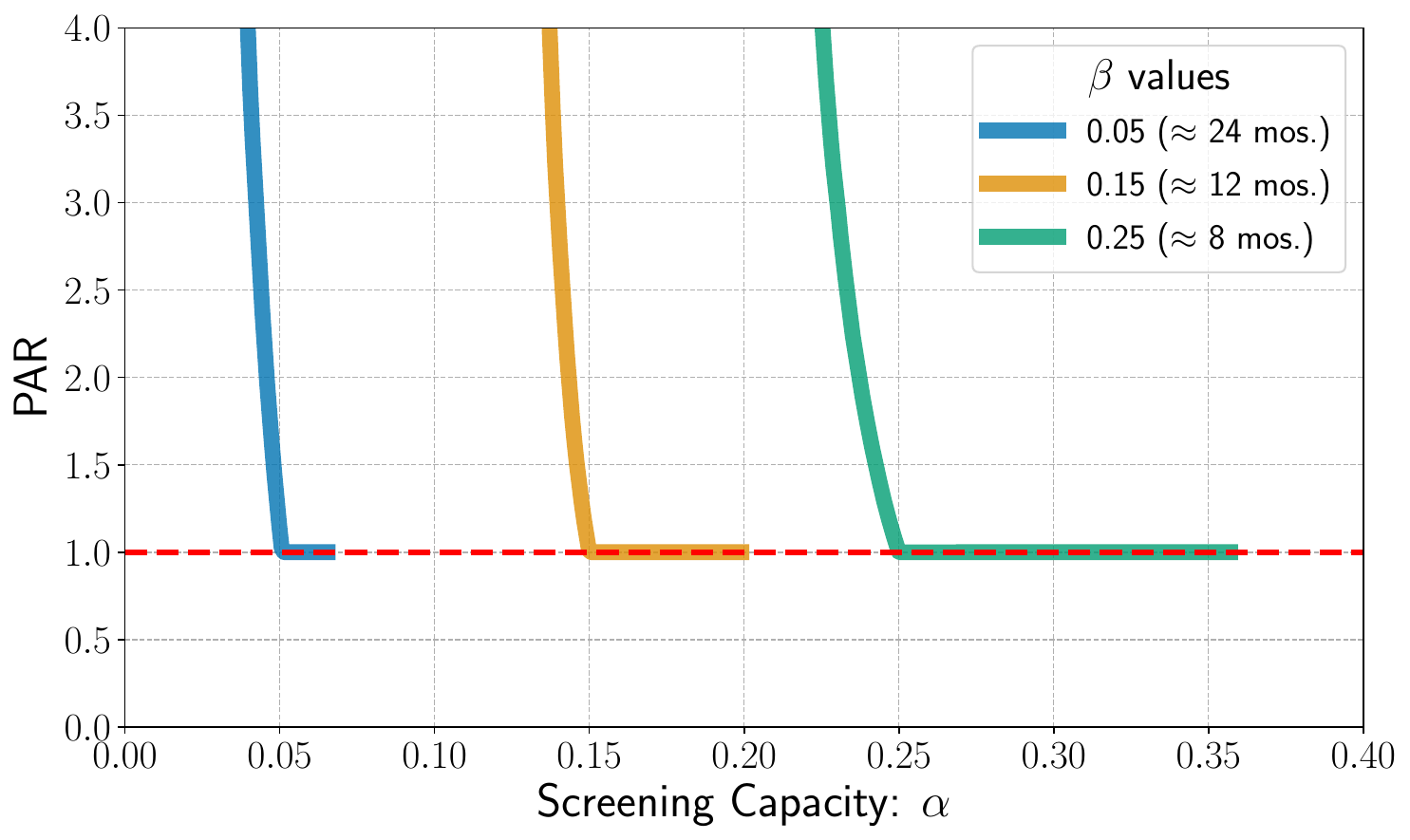}}
\caption{\textbf{Prediction-Access Ratio} for $\Delta_{\rsq} = \Delta_{\alpha} = 0.1$ across three regimes. As expected from our theoretical intuition, the PAR is large for small $\alpha$ and for the trained model (b), which represents the typical regime for allocation systems.}  
\figurelabel{fig: PAR CatBoost}
\end{figure*}
    
\paragraph{What is the impact of improving screening capacity versus prediction errors?} 
We simulate small improvements in the $\rsq$ value by uniformly scaling the residuals by a multiplicative factor. 

To ensure that this approach approximates a realistic pathway of (marginally) improving the model, we train various models at different sample sizes. We then verify that as $\rsq$ increases with the amount of training data, the variance of the residuals decreases, while the distribution remains largely unchanged in shape (see \figureref{fig: residual distribution training set size}). We then evaluate the prediction-access ratio for $\Delta_{\rsq} = \Delta_{\alpha} = 0.1$ in three scenarios : (1) the trained \texttt{CatBoost} model with $\rsq = 0.15$, (2) near-perfect predictions with $\rsq = 1 - \Delta_{\rsq}$ and (3) constant predictions ($\rsq = 0$), effectively randomizing screening decisions.

We observe a rise in the PAR for small screening capacities $\alpha$ (see \figureref{fig: PAR CatBoost}), consistent with \theoremref{par small alpha}. Under random allocation ($\rsq\! =\! 0$), the PAR stays below one for $\alpha \geq 0.1$. This result aligns somewhat with \theoremref{max rsq improvements}, where we found that the (local) PAR approaches zero as $\rsq \to 0$. Because we consider $\Delta = 0.1$ (rather than an infinitesimal improvement, see \figureref{appx: low r2 low delta} for $\Delta = 0.01$), the PAR remains large at small $\alpha$. For the \texttt{CatBoost} model ($\rsq\! =\! 0.15$), capacity improvements stay relatively more effective (i.e., PAR $>\!1$) for larger $\alpha$, matching \propositionref{nonquant bound}, where we found that for moderate $\rsq$ and $\alpha \leq \beta$, the local PAR remains above one. Meanwhile, near-perfect predictions ($\rsq = 0.9$) make capacity investments highly efficient, causing the PAR to diverge for $\alpha < \beta$, then drop sharply near $\alpha = \beta$ because the allocation becomes nearly optimal. When $\alpha \geq \beta$, the PAR stabilizes at one as numerator and denominator both approach zero.

These observations broadly match our theoretical findings, despite the non-local improvements and more complex residual structure. Notably, the theory's focus on local improvements offers a conservative perspective on capacity investments: even under random allocation ($\rsq = 0$), securing a modest screening capacity ($5\!-\!10\%$) is often the first priority, while at very high $\rsq$, gains in policy value diminish so rapidly once $\alpha \geq \beta$ that the relative advantage of further prediction investments becomes negligible.

\paragraph{When do small improvements in prediction error have the largest impact?} From theory (\theoremref{max rsq improvements}), we expect local policy value improvements from better predictions to diverge as $\rsq \! \rightarrow 0$ and $\rsq \rightarrow 1$ when $\alpha = \beta$. This aligns with our results in \figureref{fig: local prediction improvements}: for small $\Delta_{\rsq}$, the rate of local improvements in $V(\rsq)$ with respect to $\rsq$ diverges. The location of the maximum in $\alpha$ also follows from the theory: as $\rsq \rightarrow 1$, the rate only diverges for $\alpha = \beta$, while for small $\rsq$ the maximum is at $\alpha \approx 0.5$.

\begin{figure*}[ht]
\centering  
\subfigure[$\rsq = 0$]{\includegraphics[width=0.42\textwidth]{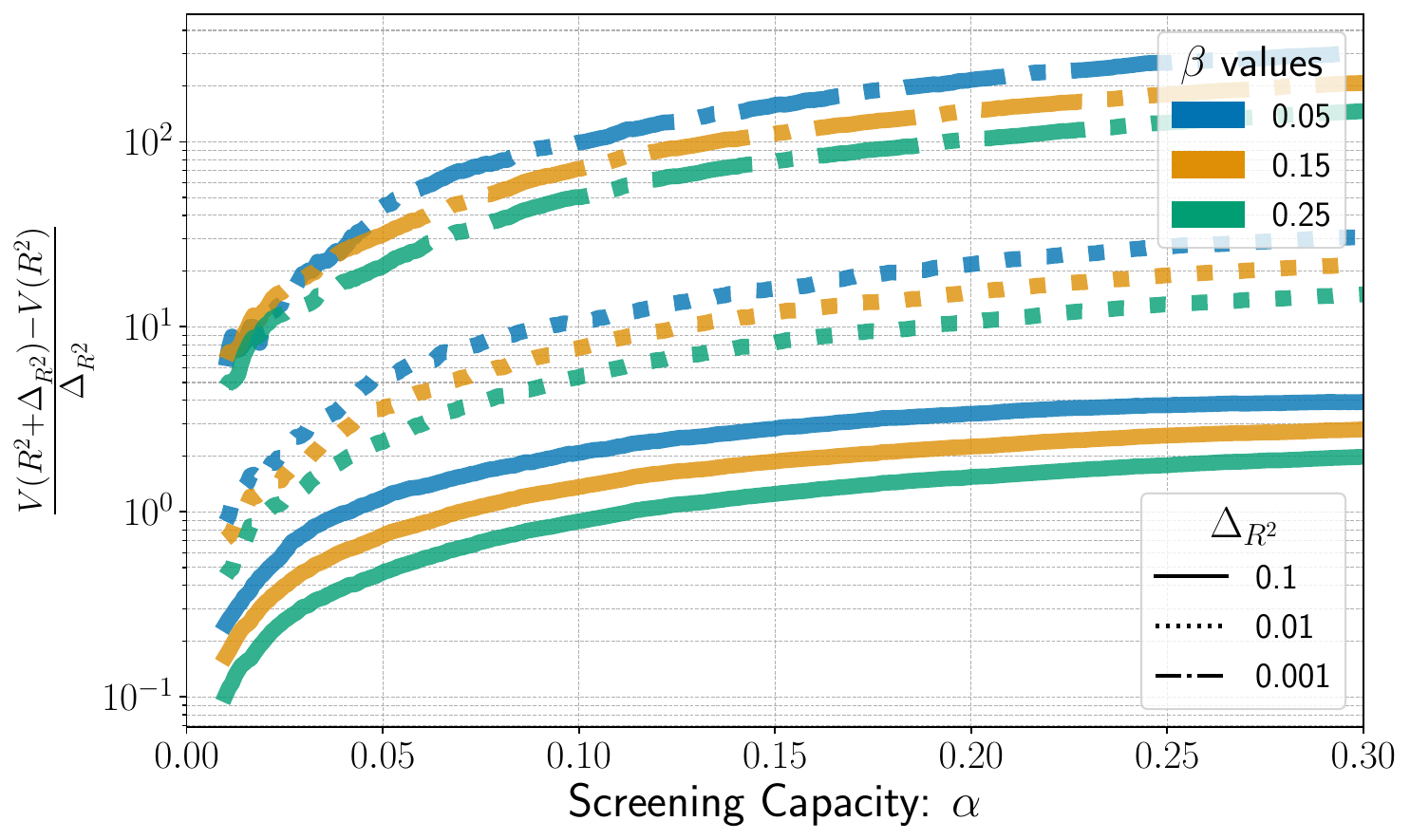}
}
\subfigure[$\rsq = 1 - \Delta_{\rsq}$]{\includegraphics[width=0.42\textwidth]{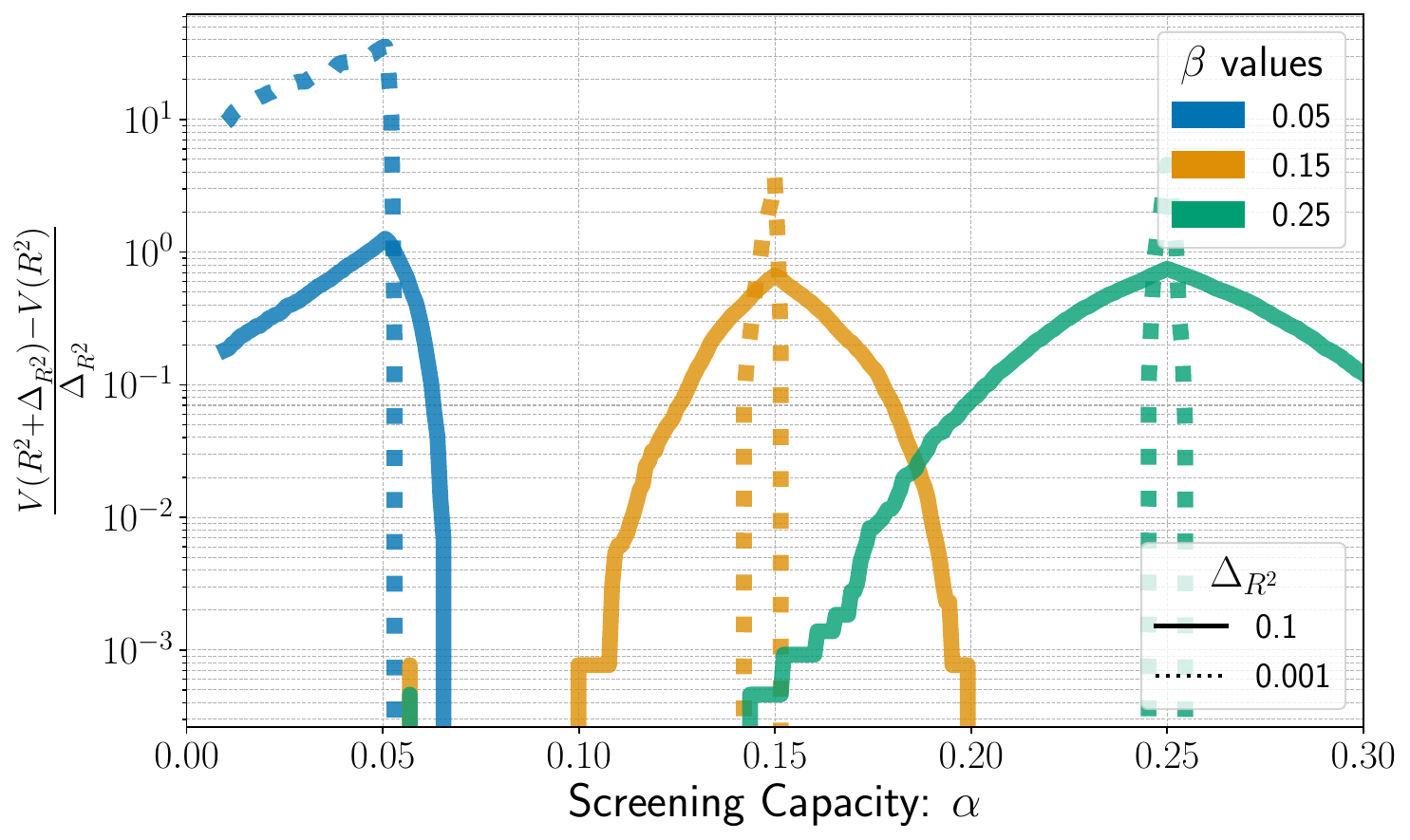}}
\caption{The rate of local improvements in $V(\rsq)$ with respect to small changes in $\rsq$. In both regimes, the local improvements diverge as $\Delta_{\rsq}$ approaches zero. Note that these are on a logarithmic scale.}
\figurelabel{fig: local prediction improvements}
\end{figure*}

\paragraph{What are the relative benefits and trade-offs of using a simpler vs more complex prediction model?}
We compare a shallow 4-depth decision tree with the \texttt{CatBoost} model. As expected, the simpler tree shows a small drop in predictive power ($5\%$ decrease in $\rsq$) which translates into a $1$--$8\%$ reduction in policy value (see \figureref{difference in policy value}). Compared to a uniform $5\%$ increase in $\rsq$ achieved by scaling the residuals (see \figureref{appx: r2 diff 0.05}), the differences in policy value are only partially similar across $\alpha$. The \texttt{CatBoost} model does not provide a uniform improvement over the decision tree; for instance, it performs better at distinguishing longer unemployment spells. 

Despite this performance gap, the simpler model offers potential advantages: it fits on a single sheet of paper, demands minimal computational infrastructure, can be easily explained to frontline case workers and resembles the categorical prioritization rules common in public institutions. \citep{simone2022}. Because more complex models incur higher costs, a planner might instead increase screening capacity. Formally, we define
\begin{align*}
    \Delta_{\alpha}^* = \inf_{\Delta_{\alpha} \in (0, 1-\beta)} \left\{\Delta_{\alpha}: \tfrac{V_{\text{TREE}}(\alpha + \Delta_{\alpha}, \beta) - V_{\text{TREE}}(\alpha, \beta)}{V_{\text{CAT}}(\alpha, \beta) -V_{\text{TREE}}(\alpha, \beta) } \geq 1\right\}
\end{align*}
the smallest $\Delta_{\alpha}^*$ that matches the policy-value gains of the \texttt{CatBoost} model. Empirically, $\Delta_{\alpha}^*$ mostly rises with $\alpha$ (see \figureref{screening gap}), consistent with our finding that the PAR decreases with $\alpha$. By framing the difference between models in terms of additional screenings, planners can directly compare the cost of increased capacity to that of deploying a more complex model.

\begin{figure}[ht]
\centering  
\subfigure[Difference in Policy Value]{\figurelabel{difference in policy value}\includegraphics[width=0.42\textwidth]{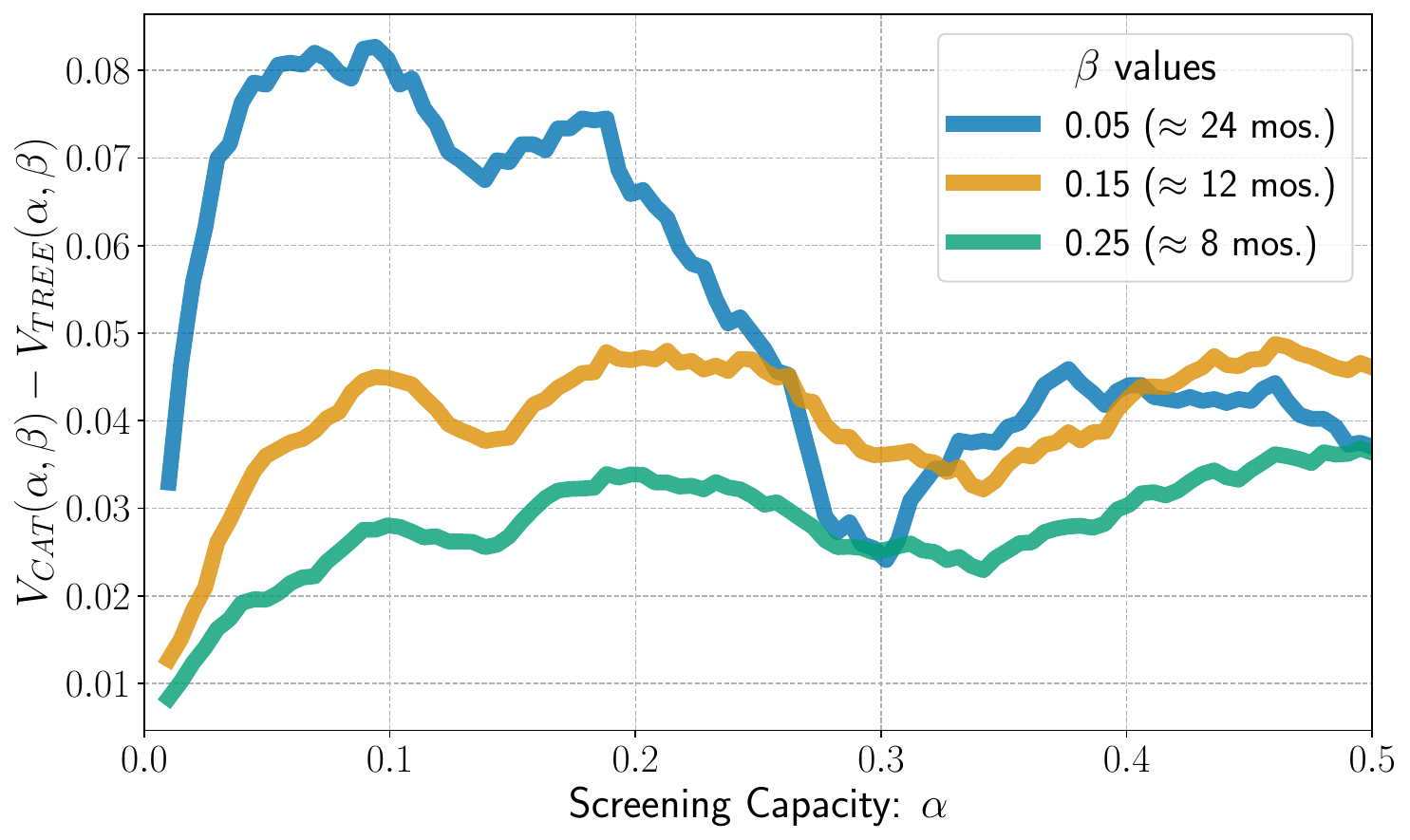}}
\subfigure[Screening Gap]{\figurelabel{screening gap}\includegraphics[width=0.42\textwidth]{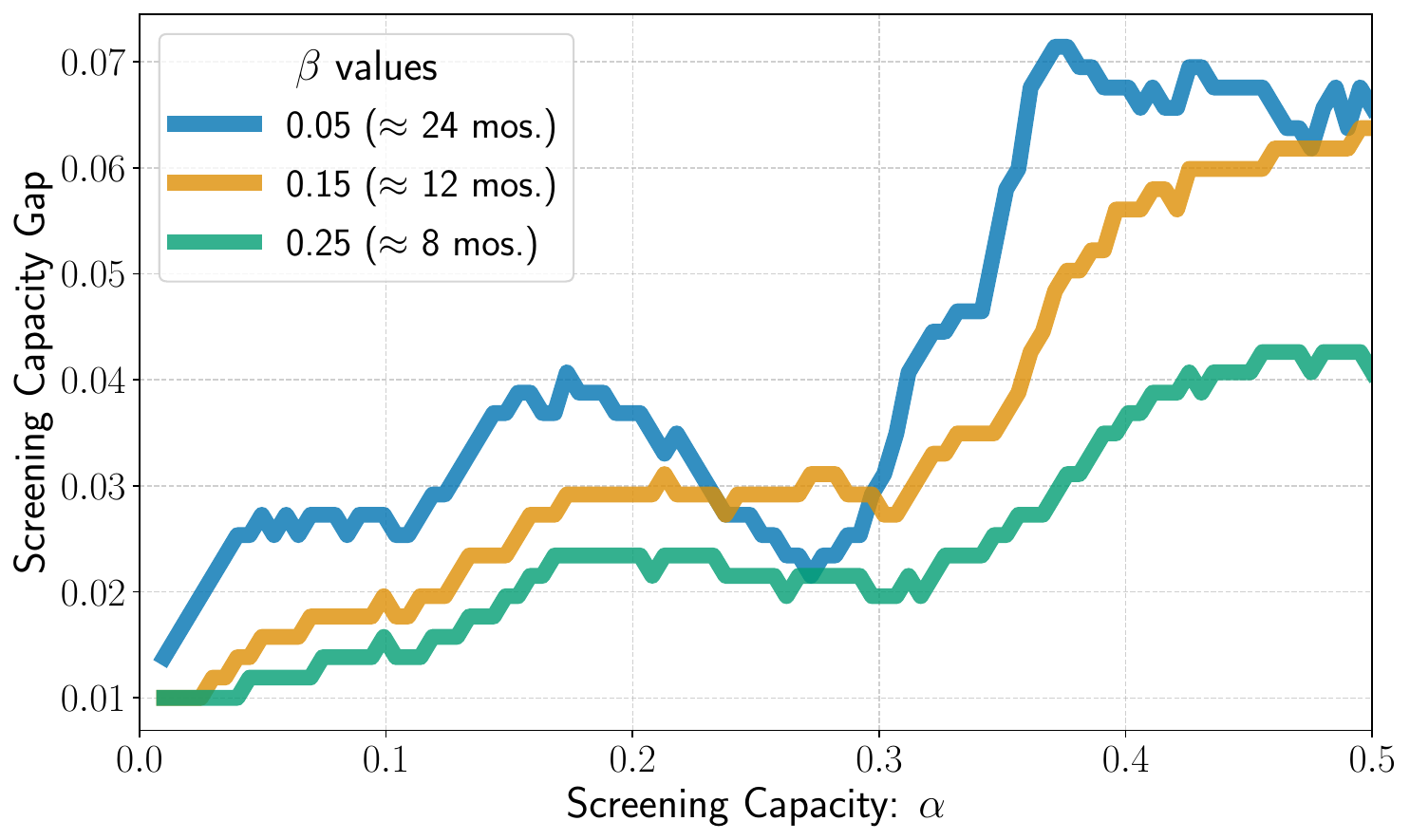}}
\caption{(a) The difference in policy value between a 4-depth decision tree and \texttt{CatBoost} model. (b) The minimum additional screening capacity that would need to be invested for the decision tree to achieve a policy value comparable to that of the \texttt{CatBoost} model.} 
\end{figure}

\section{Conclusion}
This paper develops a framework for quantifying the relative value of prediction in identifying the worst-off. We formalize tradeoffs between expanding screening capacity and improving predictive models, and show through both mathematical analysis and a real-world case study that prediction is not always the most important piece of the puzzle in social allocation systems. Future work could examine more specific application settings and cost structures, including distinctions between fixed and recurring costs, and explore policy levers that improve prediction unevenly, for example, by reducing errors in high-risk subgroups or by increasing robustness to distributional shifts. More broadly, we see a need for clearer theoretical foundations to understand the role of prediction in public-sector allocation, particularly in relation to the institutional and administrative systems in which it is embedded.

\section*{Acknowledgements}

This work is supported by the DAAD programme Konrad Zuse Schools of Excellence in Artificial Intelligence, sponsored by the Federal Ministry of Education and Research and by the Volkswagen Foundation, grant “Consequences of Artificial Intelligence for Urban Societies (CAIUS)”. Juan Carlos Perdomo is supported by the Center for Research on Computation and Society (CRCS) at Harvard University and by the Alfred P. Sloan Foundation grant G-2020-13941. We would like to thank the anonymous reviewers for their insightful comments, as well as Frauke Kreuter, Patrick Schenk and Moritz Hardt for their valuable feedback.

\bibliography{refs}

\begin{thebibliography}{37}
\providecommand{\natexlab}[1]{#1}
\providecommand{\url}[1]{\texttt{#1}}
\expandafter\ifx\csname urlstyle\endcsname\relax
  \providecommand{\doi}[1]{doi: #1}\else
  \providecommand{\doi}{doi: \begingroup \urlstyle{rm}\Url}\fi

\bibitem[Aiken et~al.(2023)Aiken, Ohlenburg, and Blumenstock]{10.1145/3588001.3609369}
E.~Aiken, T.~Ohlenburg, and J.~Blumenstock.
\newblock {M}oving targets: {W}hen does a poverty prediction model need to be updated?
\newblock In \emph{Proceedings of the 6th ACM SIGCAS/SIGCHI Conference on Computing and Sustainable Societies}, COMPASS '23, page 117, New York, NY, USA, 2023. Association for Computing Machinery.
\newblock ISBN 9798400701498.
\newblock \doi{10.1145/3588001.3609369}.
\newblock URL \url{https://doi.org/10.1145/3588001.3609369}.

\bibitem[Antoni et~al.(2019{\natexlab{a}})Antoni, Ganzer, and vom Berge]{siab2019}
M.~Antoni, A.~Ganzer, and P.~vom Berge.
\newblock Factually anonymous version of the {{Sample}} of {{Integrated Labour Market Biographies}} ({{SIAB-Regionalfile}}) -- {{Version}} 7517 v1.
\newblock Research Data Centre of the Federal Employment Agency (BA) at the Institute for Employment Research (IAB), 2019{\natexlab{a}}.
\newblock 10.5164/IAB.SIAB-R7517.de.en.v1.

\bibitem[Antoni et~al.(2019{\natexlab{b}})Antoni, Ganzer, and vom Berge]{siab2019report}
M.~Antoni, A.~Ganzer, and P.~vom Berge.
\newblock {{Sample}} of {{Integrated Labour Market Biographies Regional File (SIAB-R)}} 1975 - 2027.
\newblock FDZ-Datenreport 04/2019 (en), Research Data Centre of the Federal Employment Agency (BA) at the Institute for Employment Research (IAB), Nürnberg, 2019{\natexlab{b}}.
\newblock 10.5164/IAB.FDZD.1904.en.v1.

\bibitem[Athey and Wager(2021)]{athey2021policy}
S.~Athey and S.~Wager.
\newblock {P}olicy {L}earning with {O}bservational {D}ata.
\newblock \emph{Econometrica}, 89\penalty0 (1):\penalty0 133--161, 2021.

\bibitem[Bach et~al.(2023)Bach, Kern, Mautner, and Kreuter]{Bach_Kern_Mautner_Kreuter_2023}
R.~L. Bach, C.~Kern, H.~Mautner, and F.~Kreuter.
\newblock The impact of modeling decisions in statistical profiling.
\newblock \emph{Data \& Policy}, 5:\penalty0 e32, 2023.
\newblock \doi{10.1017/dap.2023.29}.

\bibitem[Barocas et~al.(2023)Barocas, Hardt, and Narayanan]{barocas-hardt-narayanan}
S.~Barocas, M.~Hardt, and A.~Narayanan.
\newblock \emph{Fairness and Machine Learning: Limitations and Opportunities}.
\newblock MIT Press, 2023.

\bibitem[Blumenstock(2016)]{blumenstock2016}
J.~E. Blumenstock.
\newblock Fighting {P}overty with {D}ata.
\newblock \emph{Science}, 2016.

\bibitem[Boehmer et~al.(2024)Boehmer, Nair, Shah, Janson, Taneja, and Tambe]{boehmer2024evaluating}
N.~Boehmer, Y.~Nair, S.~Shah, L.~Janson, A.~Taneja, and M.~Tambe.
\newblock {E}valuating the {E}ffectiveness of {I}ndex-{B}ased {T}reatment {A}llocation.
\newblock \emph{arXiv preprint arXiv:2402.11771}, 2024.

\bibitem[Chan et~al.(2012)Chan, Farias, Bambos, and Escobar]{chan2012}
C.~W. Chan, V.~F. Farias, N.~Bambos, and G.~J. Escobar.
\newblock {O}ptimizing {I}ntensive {C}are {U}nit {D}ischarge {D}ecisions with {P}atient {R}eadmissions.
\newblock \emph{Operations Research}, 60\penalty0 (6):\penalty0 1323--1341, 2012.
\newblock \doi{10.1287/opre.1120.1105}.
\newblock URL \url{https://doi.org/10.1287/opre.1120.1105}.

\bibitem[Chouldechova et~al.(2018)Chouldechova, Benavides-Prado, Fialko, and Vaithianathan]{chouldechova2018case}
A.~Chouldechova, D.~Benavides-Prado, O.~Fialko, and R.~Vaithianathan.
\newblock A case study of algorithm-assisted decision making in child maltreatment hotline screening decisions.
\newblock In \emph{Conference on Fairness, Accountability and Transparency}, pages 134--148. PMLR, 2018.

\bibitem[Clementi and Gallegati(2005)]{clementi2005pareto}
F.~Clementi and M.~Gallegati.
\newblock Pareto’s law of income distribution: {E}vidence for {G}ermany, the {U}nited {K}ingdom, and the {U}nited {S}tates.
\newblock \emph{Econophysics of wealth distributions: Econophys-Kolkata I}, pages 3--14, 2005.

\bibitem[Coston et~al.(2023)Coston, Kawakami, Zhu, Holstein, and Heidari]{coston2023sat}
A.~Coston, A.~Kawakami, H.~Zhu, K.~Holstein, and H.~Heidari.
\newblock {A} {V}alidity {P}erspective on {E}valuating the {J}ustified {U}se of {D}ata-driven {D}ecision-making {A}lgorithms.
\newblock In \emph{2023 IEEE Conference on Secure and Trustworthy Machine Learning (SaTML)}, pages 690--704, 2023.
\newblock \doi{10.1109/SaTML54575.2023.00050}.

\bibitem[Desiere and Struyven(2021)]{desiereUsingArtificialIntelligence2021}
S.~Desiere and L.~Struyven.
\newblock Using {{Artificial Intelligence}} to classify {{Jobseekers}}: {{The Accuracy-Equity Trade-off}}.
\newblock \emph{Journal of Social Policy}, 50\penalty0 (2):\penalty0 367--385, Apr. 2021.
\newblock ISSN 0047-2794, 1469-7823.
\newblock \doi{10.1017/S0047279420000203}.

\bibitem[Desiere et~al.(2019)Desiere, Langenbucher, and Struyven]{desiere2019statistical}
S.~Desiere, K.~Langenbucher, and L.~Struyven.
\newblock Statistical {P}rofiling in {P}ublic {E}mployment {S}ervices: An {I}nternational {C}omparison.
\newblock Technical Report 224, OECD Publishing, 2019.
\newblock URL \url{https://doi.org/10.1787/b5e5f16e-en}.

\bibitem[Drezner and Wesolowsky(1990)]{drezner1990}
Z.~Drezner and G.~O. Wesolowsky.
\newblock On the {C}omputation of the {B}ivariate {N}ormal {I}ntegral.
\newblock \emph{Journal of Statistical Computation and Simulation}, 1990.

\bibitem[Elmachtoub and Grigas(2022)]{elmachtoub2022smart}
A.~N. Elmachtoub and P.~Grigas.
\newblock Smart “predict, then optimize”.
\newblock \emph{Management Science}, 68\penalty0 (1):\penalty0 9--26, 2022.

\bibitem[Fern\'{a}ndez-Lor\'{\i}a and Provost(2022)]{loriprovist2022}
C.~Fern\'{a}ndez-Lor\'{\i}a and F.~Provost.
\newblock {C}ausal {D}ecision {M}aking and {C}ausal {E}ffect {E}stimation {A}re {N}ot the {S}ame…and {W}hy {I}t {M}atters.
\newblock \emph{INFORMS Journal on Data Science}, 1\penalty0 (1):\penalty0 4--16, 2022.
\newblock \doi{10.1287/ijds.2021.0006}.
\newblock URL \url{https://doi.org/10.1287/ijds.2021.0006}.

\bibitem[Fischer-Abaigar et~al.(2024)Fischer-Abaigar, Kern, Barda, and Kreuter]{FISCHERABAIGAR2024101976}
U.~Fischer-Abaigar, C.~Kern, N.~Barda, and F.~Kreuter.
\newblock Bridging the {G}ap: Towards an {E}xpanded {T}oolkit for {A}i-driven {D}ecision-making in the {P}ublic {S}ector.
\newblock \emph{Government Information Quarterly}, 41\penalty0 (4):\penalty0 101976, 2024.
\newblock ISSN 0740-624X.
\newblock \doi{https://doi.org/10.1016/j.giq.2024.101976}.
\newblock URL \url{https://www.sciencedirect.com/science/article/pii/S0740624X24000686}.

\bibitem[Guerdan et~al.(2023)Guerdan, Coston, Holstein, and Wu]{Guerdan23}
L.~Guerdan, A.~Coston, K.~Holstein, and Z.~S. Wu.
\newblock {C}ounterfactual {P}rediction {U}nder {O}utcome {M}easurement {E}rror.
\newblock In \emph{Proceedings of the 2023 ACM Conference on Fairness, Accountability, and Transparency}, FAccT '23, page 1584–1598, New York, NY, USA, 2023. Association for Computing Machinery.
\newblock ISBN 9798400701924.
\newblock \doi{10.1145/3593013.3594101}.
\newblock URL \url{https://doi.org/10.1145/3593013.3594101}.

\bibitem[Jain et~al.(2024)Jain, Creel, and Wilson]{jain2024position}
S.~Jain, K.~Creel, and A.~C. Wilson.
\newblock {P}osition: {S}carce {R}esource {A}llocations {T}hat {R}ely {O}n {M}achine {L}earning {S}hould {B}e {R}andomized.
\newblock In \emph{Forty-first International Conference on Machine Learning}, 2024.
\newblock URL \url{https://openreview.net/forum?id=44qxX6Ty6F}.

\bibitem[Johnson and Zhang(2022)]{simone2022}
R.~A. Johnson and S.~Zhang.
\newblock What is the {B}ureaucratic {C}ounterfactual? {C}ategorical versus {A}lgorithmic {P}rioritization in {U}.{S}. {S}ocial {P}olicy.
\newblock In \emph{Proceedings of the 2022 ACM Conference on Fairness, Accountability, and Transparency}, FAccT '22, page 1671–1682, New York, NY, USA, 2022. Association for Computing Machinery.
\newblock ISBN 9781450393522.
\newblock \doi{10.1145/3531146.3533223}.
\newblock URL \url{https://doi.org/10.1145/3531146.3533223}.

\bibitem[Kallus(2021)]{Kallus03042021}
N.~Kallus.
\newblock More {E}fficient {P}olicy {L}earning via {O}ptimal {R}etargeting.
\newblock \emph{Journal of the American Statistical Association}, 116\penalty0 (534):\penalty0 646--658, 2021.
\newblock \doi{10.1080/01621459.2020.1788948}.
\newblock URL \url{https://doi.org/10.1080/01621459.2020.1788948}.

\bibitem[Kern et~al.(2024)Kern, Bach, Mautner, and Kreuter]{kern2024design}
C.~Kern, R.~Bach, H.~Mautner, and F.~Kreuter.
\newblock When {S}mall {D}ecisions {H}ave {B}ig {I}mpact: {F}airness {I}mplications of {A}lgorithmic {P}rofiling {S}chemes.
\newblock \emph{ACM Journal on Responsible Computing}, 1\penalty0 (4), Nov. 2024.
\newblock \doi{10.1145/3689485}.
\newblock URL \url{https://doi.org/10.1145/3689485}.

\bibitem[Kitagawa and Tetenov(2018)]{Kitagawa2018}
T.~Kitagawa and A.~Tetenov.
\newblock {W}ho {S}hould {B}e {T}reated? {E}mpirical {W}elfare {M}aximization {M}ethods for {T}reatment {C}hoice.
\newblock \emph{Econometrica}, 86\penalty0 (2):\penalty0 591--616, 2018.
\newblock \doi{https://doi.org/10.3982/ECTA13288}.
\newblock URL \url{https://onlinelibrary.wiley.com/doi/abs/10.3982/ECTA13288}.

\bibitem[Kleinberg et~al.(2015)Kleinberg, Ludwig, Mullainathan, and Obermeyer]{kleinberg2015prediction}
J.~Kleinberg, J.~Ludwig, S.~Mullainathan, and Z.~Obermeyer.
\newblock Prediction {P}olicy {P}roblems.
\newblock \emph{American Economic Review}, 105\penalty0 (5):\penalty0 491--495, 2015.

\bibitem[K{\"o}rtner and Bonoli(2023)]{kortner2023predictive}
J.~K{\"o}rtner and G.~Bonoli.
\newblock Predictive {A}lgorithms in the {D}elivery of {P}ublic {E}mployment {S}ervices.
\newblock In \emph{Handbook of Labour Market Policy in Advanced Democracies}, pages 387--398. Edward Elgar Publishing, 2023.

\bibitem[Loxha and Morgandi(2014)]{loxha2014profiling}
A.~Loxha and M.~Morgandi.
\newblock Profiling the unemployed: a review of {OECD} experiences and implications for emerging economics.
\newblock \emph{Social protection discussion papers and notes}, \penalty0 (91051), 2014.

\bibitem[Manski(2004)]{Manski2004}
C.~F. Manski.
\newblock {S}tatistical {T}reatment {R}ules for {H}eterogeneous {P}opulations.
\newblock \emph{Econometrica}, 72\penalty0 (4):\penalty0 1221--1246, 2004.
\newblock \doi{https://doi.org/10.1111/j.1468-0262.2004.00530.x}.
\newblock URL \url{https://onlinelibrary.wiley.com/doi/abs/10.1111/j.1468-0262.2004.00530.x}.

\bibitem[Perdomo(2024)]{pmlr-v235-perdomo24a}
J.~C. Perdomo.
\newblock The {{R}}elative {{V}}alue of {{P}}rediction in {{A}}lgorithmic {{D}}ecision {{M}}aking.
\newblock In \emph{Proceedings of the 41st International Conference on Machine Learning}, ICML'24. JMLR.org, 2024.

\bibitem[Perdomo et~al.(2023)Perdomo, Britton, Hardt, and Abebe]{perdomo2023difficult}
J.~C. Perdomo, T.~Britton, M.~Hardt, and R.~Abebe.
\newblock Difficult {L}essons on {S}ocial {P}rediction from {W}isconsin {P}ublic {S}chools.
\newblock \emph{arXiv preprint arXiv:2304.06205}, 2023.

\bibitem[Potash et~al.(2015)Potash, Brew, Loewi, Majumdar, Reece, Walsh, Rozier, Jorgenson, Mansour, and Ghani]{potash2015}
E.~Potash, J.~Brew, A.~Loewi, S.~Majumdar, A.~Reece, J.~Walsh, E.~Rozier, E.~Jorgenson, R.~Mansour, and R.~Ghani.
\newblock {P}redictive {M}odeling for {P}ublic {H}ealth: {P}reventing {C}hildhood {L}ead {P}oisoning.
\newblock In \emph{Proceedings of the 21th ACM SIGKDD International Conference on Knowledge Discovery and Data Mining}, KDD '15, page 2039–2047, New York, NY, USA, 2015. Association for Computing Machinery.
\newblock ISBN 9781450336642.
\newblock \doi{10.1145/2783258.2788629}.
\newblock URL \url{https://doi.org/10.1145/2783258.2788629}.

\bibitem[Salganik et~al.(2020)Salganik, Lundberg, Kindel, Ahearn, Al-Ghoneim, Almaatouq, Altschul, Brand, Carnegie, Compton, Datta, Davidson, Filippova, Gilroy, Goode, Jahani, Kashyap, Kirchner, McKay, Morgan, Pentland, Polimis, Raes, Rigobon, Roberts, Stanescu, Suhara, Usmani, Wang, Adem, Alhajri, AlShebli, Amin, Amos, Argyle, Baer-Bositis, Büchi, Chung, Eggert, Faletto, Fan, Freese, Gadgil, Gagné, Gao, Halpern-Manners, Hashim, Hausen, He, Higuera, Hogan, Horwitz, Hummel, Jain, Jin, Jurgens, Kaminski, Karapetyan, Kim, Leizman, Liu, Möser, Mack, Mahajan, Mandell, Marahrens, Mercado-Garcia, Mocz, Mueller-Gastell, Musse, Niu, Nowak, Omidvar, Or, Ouyang, Pinto, Porter, Porter, Qian, Rauf, Sargsyan, Schaffner, Schnabel, Schonfeld, Sender, Tang, Tsurkov, van Loon, Varol, Wang, Wang, Wang, Wang, Weissman, Whitaker, Wolters, Woon, Wu, Wu, Yang, Yin, Zhao, Zhu, Brooks-Gunn, Engelhardt, Hardt, Knox, Levy, Narayanan, Stewart, Watts, and McLanahan]{Salganik2020}
M.~J. Salganik, I.~Lundberg, A.~T. Kindel, C.~E. Ahearn, K.~Al-Ghoneim, A.~Almaatouq, D.~M. Altschul, J.~E. Brand, N.~B. Carnegie, R.~J. Compton, D.~Datta, T.~Davidson, A.~Filippova, C.~Gilroy, B.~J. Goode, E.~Jahani, R.~Kashyap, A.~Kirchner, S.~McKay, A.~C. Morgan, A.~Pentland, K.~Polimis, L.~Raes, D.~E. Rigobon, C.~V. Roberts, D.~M. Stanescu, Y.~Suhara, A.~Usmani, E.~H. Wang, M.~Adem, A.~Alhajri, B.~AlShebli, R.~Amin, R.~B. Amos, L.~P. Argyle, L.~Baer-Bositis, M.~Büchi, B.-R. Chung, W.~Eggert, G.~Faletto, Z.~Fan, J.~Freese, T.~Gadgil, J.~Gagné, Y.~Gao, A.~Halpern-Manners, S.~P. Hashim, S.~Hausen, G.~He, K.~Higuera, B.~Hogan, I.~M. Horwitz, L.~M. Hummel, N.~Jain, K.~Jin, D.~Jurgens, P.~Kaminski, A.~Karapetyan, E.~H. Kim, B.~Leizman, N.~Liu, M.~Möser, A.~E. Mack, M.~Mahajan, N.~Mandell, H.~Marahrens, D.~Mercado-Garcia, V.~Mocz, K.~Mueller-Gastell, A.~Musse, Q.~Niu, W.~Nowak, H.~Omidvar, A.~Or, K.~Ouyang, K.~M. Pinto, E.~Porter, K.~E. Porter, C.~Qian, T.~Rauf, A.~Sargsyan, T.~Schaffner, L.~Schnabel,
  B.~Schonfeld, B.~Sender, J.~D. Tang, E.~Tsurkov, A.~van Loon, O.~Varol, X.~Wang, Z.~Wang, J.~Wang, F.~Wang, S.~Weissman, K.~Whitaker, M.~K. Wolters, W.~L. Woon, J.~Wu, C.~Wu, K.~Yang, J.~Yin, B.~Zhao, C.~Zhu, J.~Brooks-Gunn, B.~E. Engelhardt, M.~Hardt, D.~Knox, K.~Levy, A.~Narayanan, B.~M. Stewart, D.~J. Watts, and S.~McLanahan.
\newblock Measuring the {P}redictability of {L}ife {O}utcomes with a {S}cientific {M}ass {C}ollaboration.
\newblock \emph{Proceedings of the National Academy of Sciences}, 117\penalty0 (15):\penalty0 8398--8403, 2020.
\newblock \doi{10.1073/pnas.1915006117}.
\newblock URL \url{https://www.pnas.org/doi/abs/10.1073/pnas.1915006117}.

\bibitem[Shirali et~al.(2024)Shirali, Abebe, and Hardt]{shirali2024}
A.~Shirali, R.~Abebe, and M.~Hardt.
\newblock {A}llocation {R}equires {P}rediction {O}nly if {I}nequality {I}s {L}ow.
\newblock In \emph{Forty-first International Conference on Machine Learning, {ICML} 2024, Vienna, Austria, July 21-27, 2024}. OpenReview.net, 2024.
\newblock URL \url{https://openreview.net/forum?id=WUicA0hOF9}.

\bibitem[Sun and Medaglia(2019)]{sun2019mapping}
T.~Q. Sun and R.~Medaglia.
\newblock Mapping the challenges of {A}rtificial {I}ntelligence in the public sector: {E}vidence from public healthcare.
\newblock \emph{Government Information Quarterly}, 36\penalty0 (2):\penalty0 368--383, 2019.

\bibitem[Wang et~al.(2024)Wang, Kapoor, Barocas, and Narayanan]{wang2024}
A.~Wang, S.~Kapoor, S.~Barocas, and A.~Narayanan.
\newblock {A}gainst {P}redictive {O}ptimization: {O}n the {L}egitimacy of {D}ecision-making {A}lgorithms {T}hat {O}ptimize {P}redictive {A}ccuracy.
\newblock \emph{ACM J. Responsib. Comput.}, 1\penalty0 (1), Mar. 2024.
\newblock \doi{10.1145/3636509}.
\newblock URL \url{https://doi.org/10.1145/3636509}.

\bibitem[Wilder and Welle(2024)]{wilder2024learning}
B.~Wilder and P.~Welle.
\newblock Learning treatment effects while treating those in need.
\newblock \emph{arXiv preprint arXiv:2407.07596}, 2024.

\bibitem[Wirtz et~al.(2019)Wirtz, Weyerer, and Geyer]{wirtz2019artificial}
B.~W. Wirtz, J.~C. Weyerer, and C.~Geyer.
\newblock Artificial {I}ntelligence and the {P}ublic {S}ector—{A}pplications and {C}hallenges.
\newblock \emph{International Journal of Public Administration}, 42\penalty0 (7):\penalty0 596--615, 2019.

\end{thebibliography}

\appendix

\section{Theoretical Investigation}

\begin{figure}[H]
\centering  
\subfigure[Normal Welfare Distribution]{\figurelabel{welfare distribution}
    \includegraphics[width=0.42\textwidth]{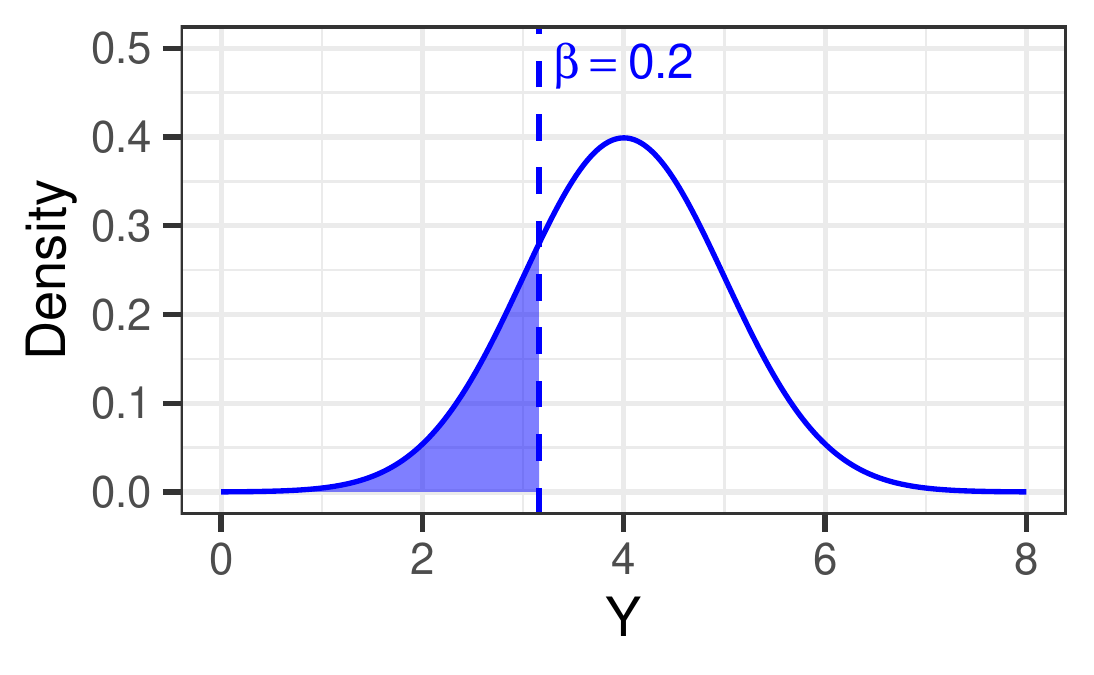}
}
\caption{Normal welfare distribution, with vertical lines marking the quantile cutoff ($\beta = 0.2$). The shaded region to the left of the vertical line represents the worst-off segment of the population.}
\end{figure}

\begin{figure*}[ht]
\centering  
\subfigure[$\textup{PAR}$]{\includegraphics[width=0.42\textwidth]{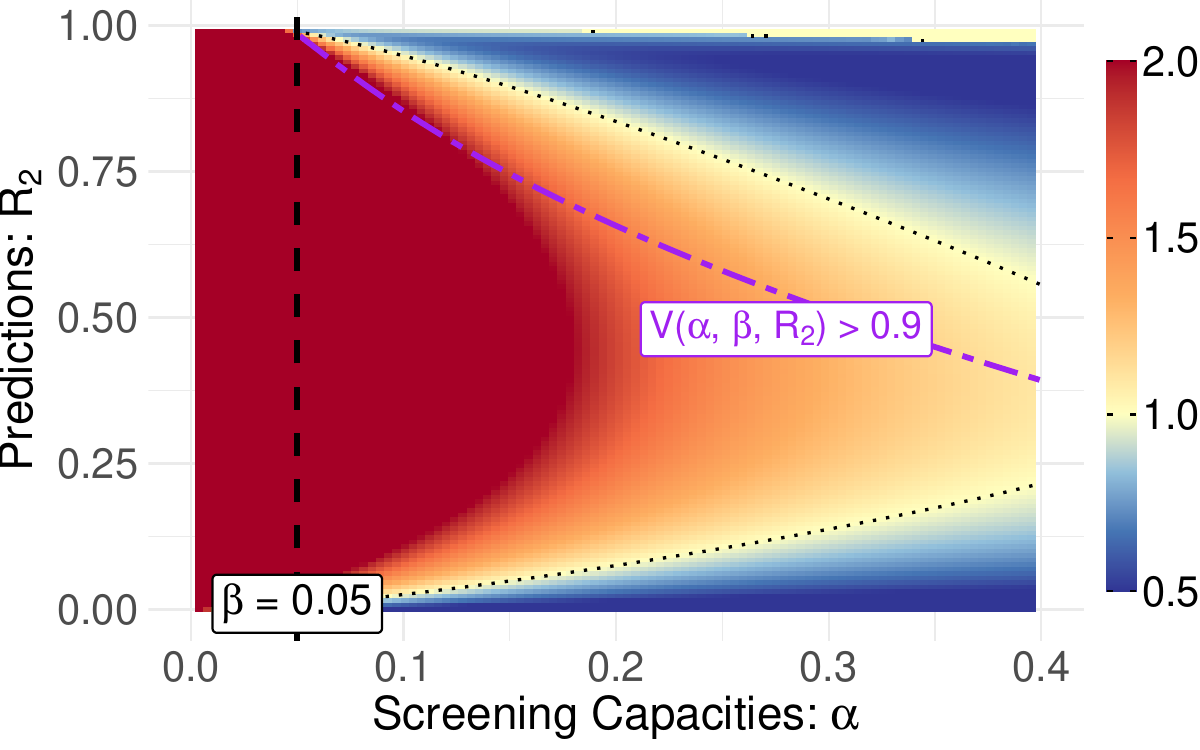}}
\hspace{0.05\textwidth}
\subfigure[$1/4 \times \textup{PAR}$]{\includegraphics[width=0.42\textwidth]
{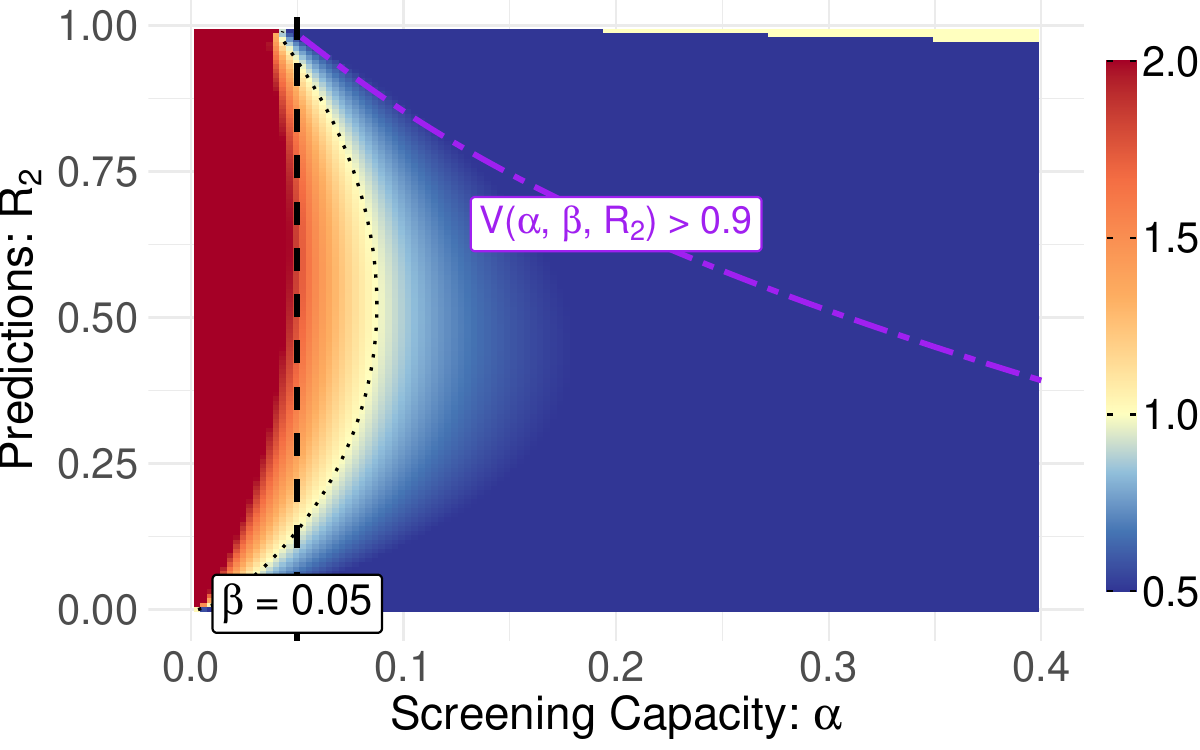}}
\caption{\textbf{Numerical Simulation of the Prediction-Access Ratio (PAR)}, \equationref{PAR}, for $\Delta_{\rsq} = \Delta_\alpha = 0.01$ and $\beta = 0.05$. Each point represents a screening capacity $\alpha$ (x-axis) and $\rsq$ value (y-axis), with the color bar showing the PAR clipped to the range $[0.5, 2.0]$. The vertical black line marks $\beta$, indicating the threshold above which sufficient resources are available to screen everyone under perfect prediction. Dotted black lines represent $\textup{PAR} = 1$, where improvements in $\alpha$ and $\rsq$ are equally effective. The purple line marks the region in the $(\alpha,\rsq)$ space where the policy value $V(\alpha, \beta, \rsq)$ exceeds $0.9$. Values above $0.9$ are located in the upper-right region beyond the purple line.} 
\figurelabel{par 0.05}
\end{figure*}

\section{Experiments}

\subsection{Experimental Setup and Labor Market Data}
\label{appx: experimental setup}

The dataset is provided via a Scientific Use File by the Research Data Centre (FDZ) of the German Federal Employment Agency (BA) at the Institute for Employment Research (IAB) \citep{siab2019, siab2019report}. It is a $2\%$ weakly anonymized random sample of the complete German labor market records from 1975 to 2017 and contains information on 1,827,903 individuals across 62,340,521 observations \citep{siab2019report}. 

We follow the same set of covariates and aggregation procedure for individual unemployment spells as described in \citet{Bach_Kern_Mautner_Kreuter_2023}, incorporating demographic characteristics, labor market histories, and information about the most recent job. This results in 56 numerical variables and 24 categorical variables, which are one-hot encoded for model training. \figureref{appx: unemployment} shows a histogram of individual unemployment durations, which we use as the basis for constructing the outcome variables. The distribution is right-skewed, with a concentration on short durations near zero and a long tail. Such a pattern is commonly observed in other welfare-related outcomes, such as health or income metrics. We define as prediction target the duration of the unemployment period in days $Y$, capped at $24$ months\footnote{In practice, for a fixed $\beta$, the problem can also be framed as a classification task (see Appendix \ref{appx: classification}).}. Differentiating tail values is less important for decision-making, and capping also allows training across years with varying observation windows.

\subsection{Training Details}
\label{appx: model training}

We use \texttt{CatBoost} (\href{https://catboost.ai}{https://catboost.ai}) for model training. The model was trained for a maximum 5,000 iterations with an early stopping criterion (\texttt{early\_stopping\_rounds} = 20) based on validation performance. Additionally, we train a shallow Decision Tree (\texttt{max\_depth} = 4) using the \texttt{scikit-learn} package. All hyperparameters are kept at their default settings unless otherwise specified.

\subsection{Prediction Improvements}
\label{appx: prediction improvements}
To simulate an increase in predictive power by a specified amount $\Delta_{\rsq}$, we adjust the model's predictions $\Yhat$ using the residuals $Y - \Yhat$. Starting with the original predictions $\Yhat$ and true outcomes $Y$, we define the adjusted predictions as
\begin{align*}
    \Yhat_+ = \Yhat + \delta (Y - \Yhat)
\end{align*}
We can then determine the $\delta$ corresponding to an increase of $\Delta_{\rsq}$ in the model's $\rsq$:
\begin{align*}
    \delta = 1 - \sqrt{1-\Delta_{\rsq} \frac{\sum_{i=1}^{n}(Y_i - \bar{Y})^2}{\sum_{i=1}^{n}(Y_i - \Yhat_i)^2}}
\end{align*}
For a specified $\delta$, the new residuals are
\begin{align*}
   Y - \hat{Y}_+ = (1-\delta)(Y - \Yhat) 
\end{align*}
Consequently, the variance decreases by a multiplicative factor:  $\Var(Y - \hat{Y}_+) = (1-\delta)^2\Var(Y - \Yhat)$. 

\begin{figure}[H]
\centering  
\subfigure[$\Delta_{\rsq} = 0$]{\includegraphics[width=0.49\textwidth]{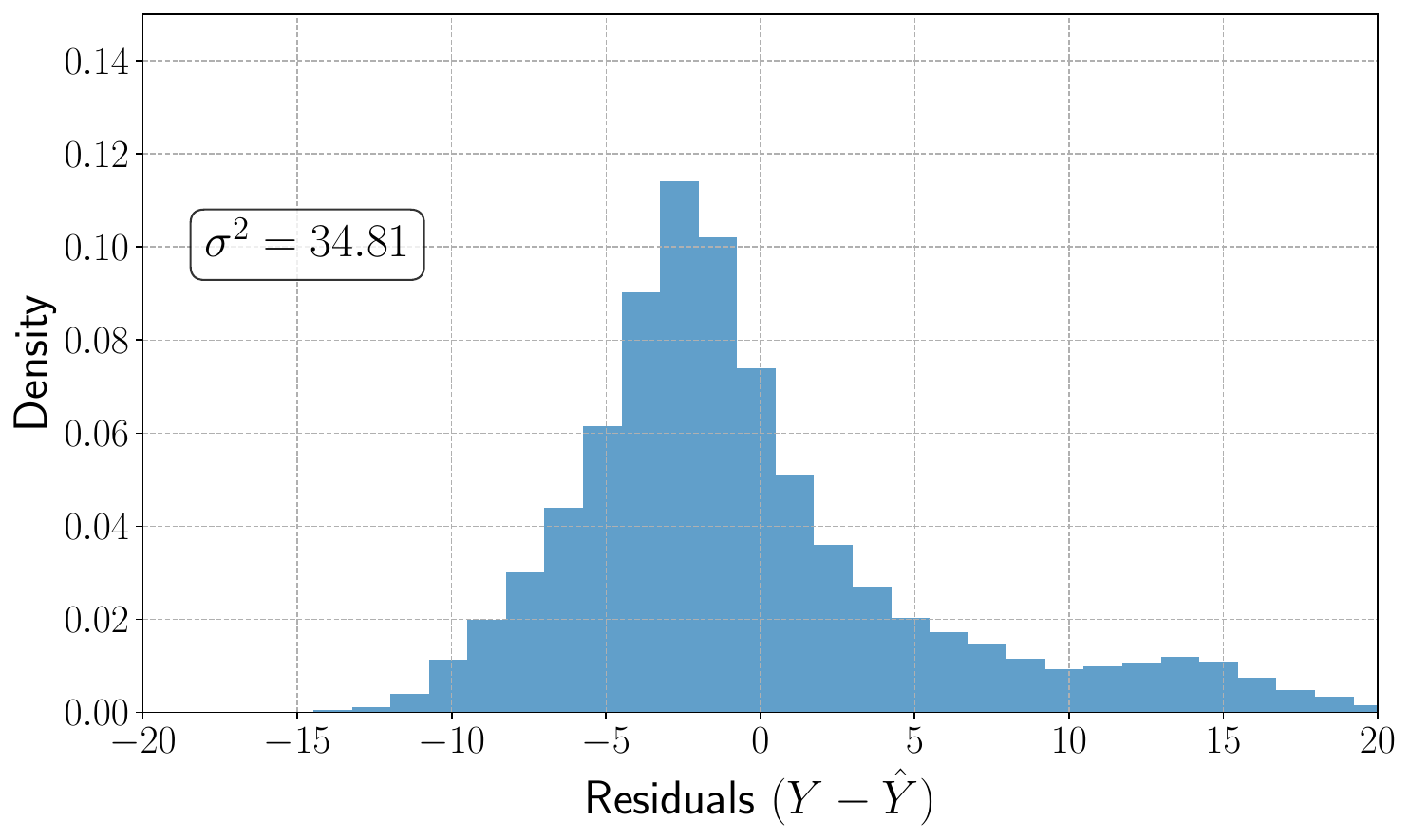}}
\subfigure[$\Delta_{\rsq} = 0.1$]{\includegraphics[width=0.49\textwidth]{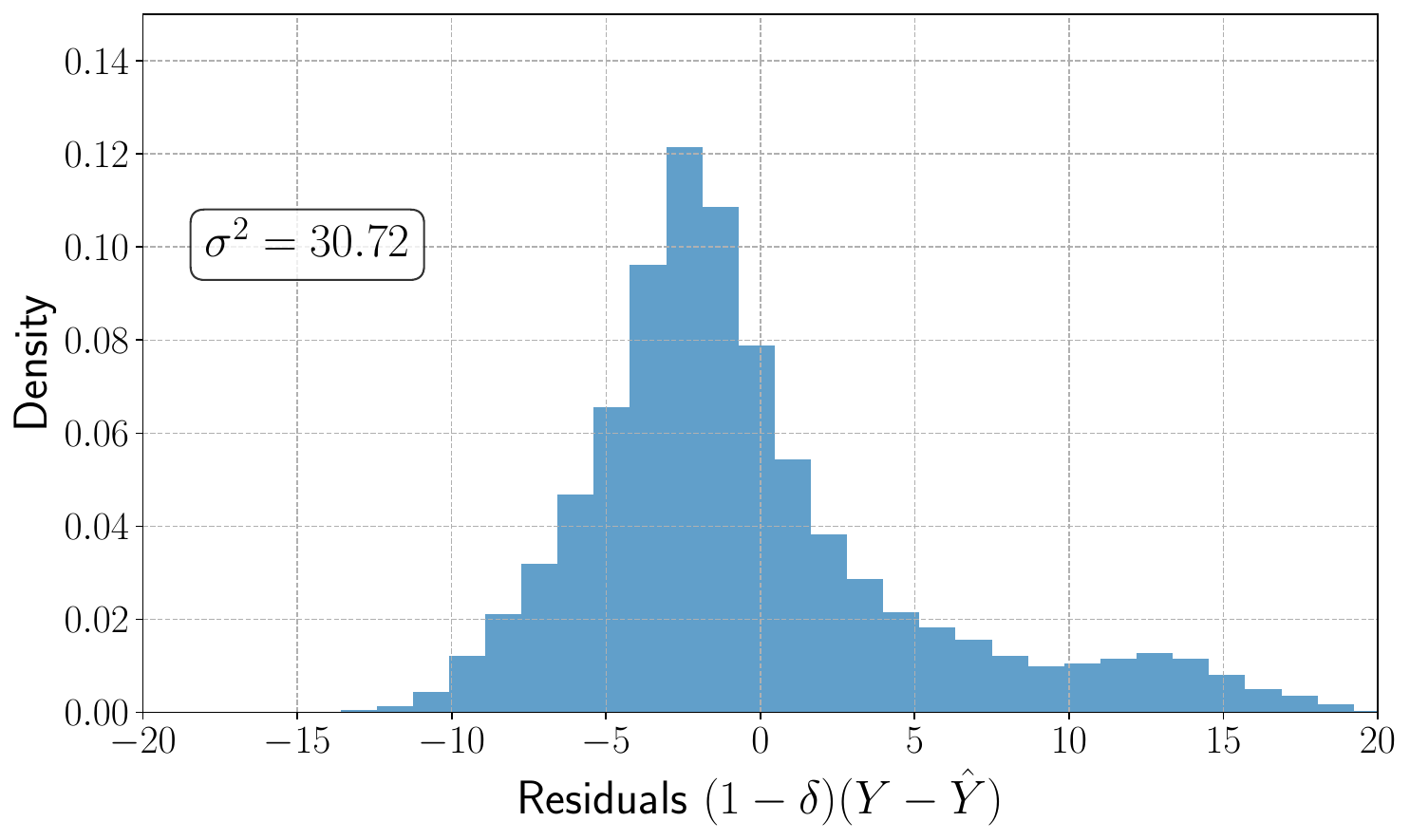}}
\caption{\textbf{Residual Distribution Before and After Adjustment} Figure (a) shows the residual distribution for the original predictions ($\Delta_{\rsq} = 0$), while Figure (b) shows the residual distribution after increasing the $\rsq$-value ($\Delta_{\rsq} = 0.1$) for the \texttt{CatBoost} model. The adjustment preserves the overall structure of the residuals.} 
\figurelabel{fig: residual distribution}
\end{figure}

\begin{figure}[H]
\centering  
\includegraphics[width=0.49\textwidth]{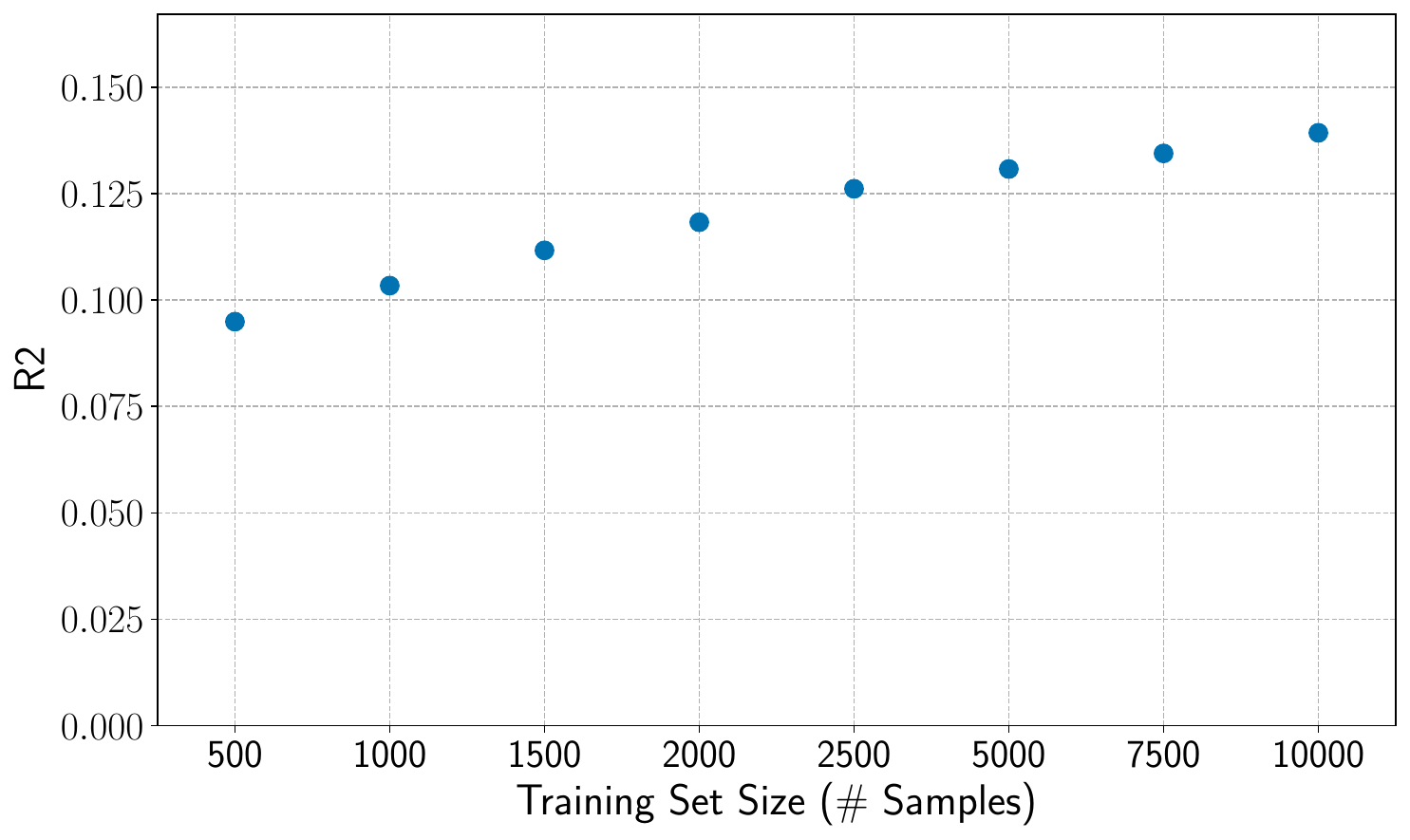}
\caption{The $\rsq$ value on the test set for varying training set size (\texttt{CatBoost} Regression).} 
\end{figure}

\begin{figure}[H]
\centering  
\subfigure[$500$ Training Samples]{\includegraphics[width=0.49\textwidth]{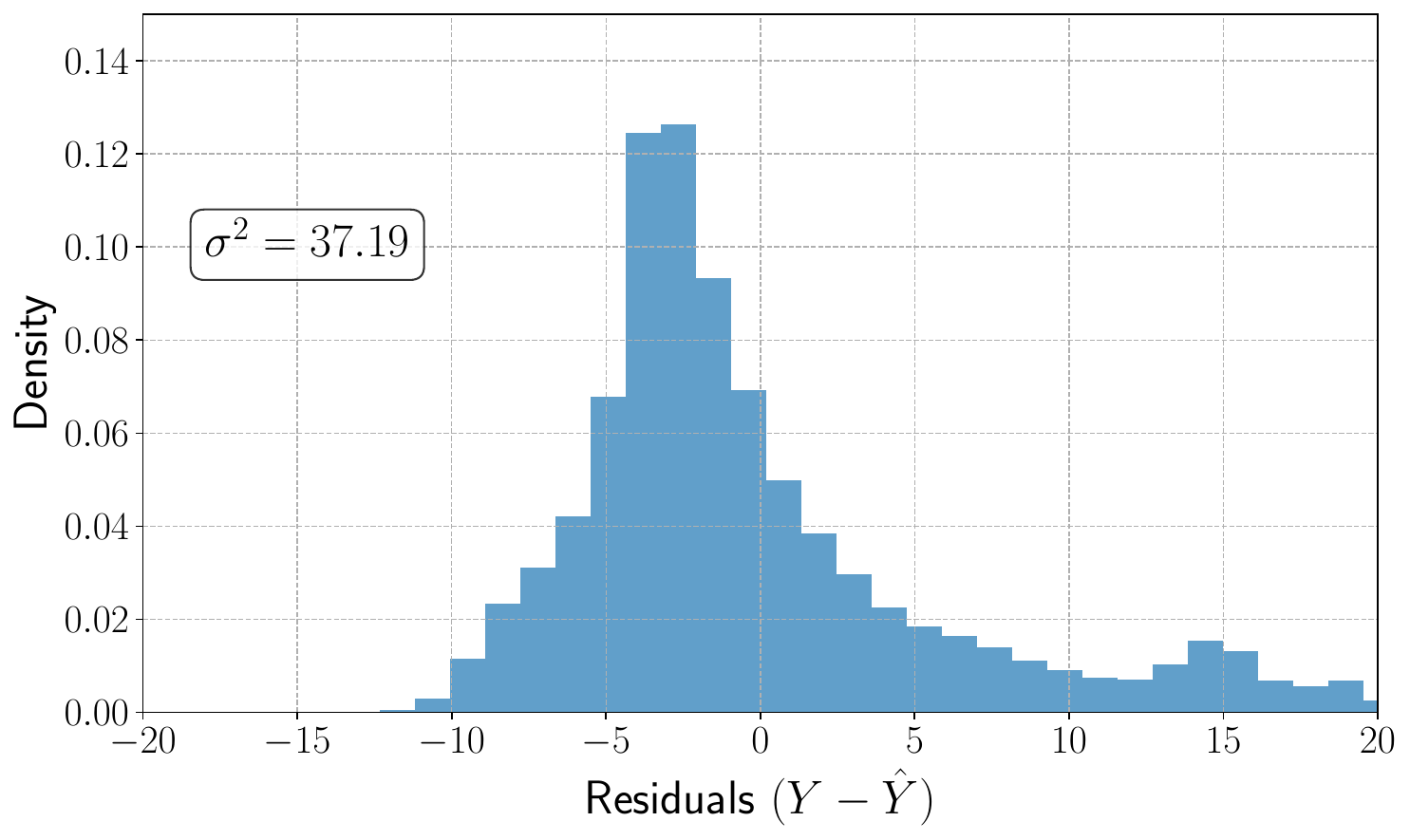}}
\subfigure[$10,000$ Training Samples]{\includegraphics[width=0.49\textwidth]{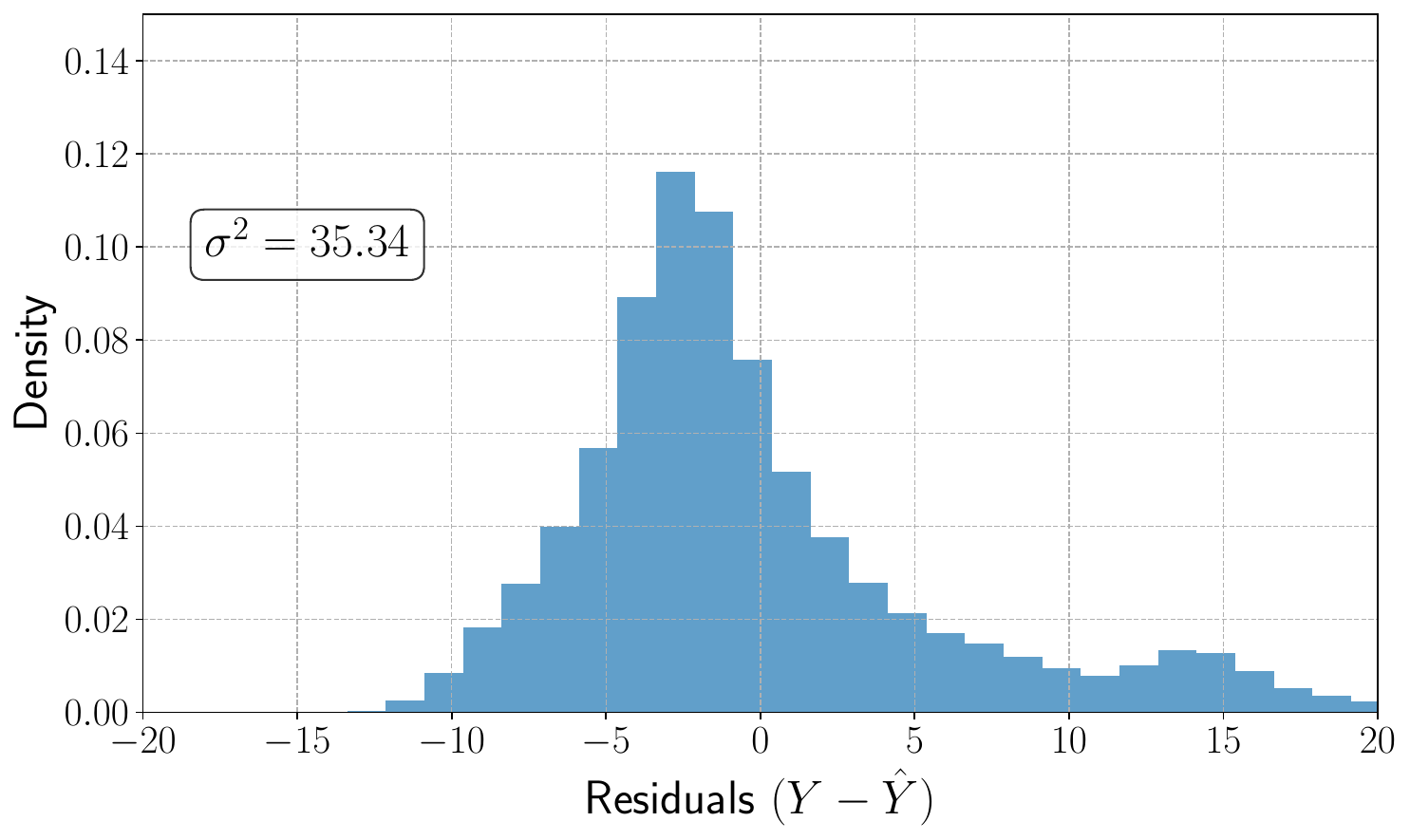}}
\caption{Residual distributions on the test set for models trained with varying training set sizes.} 
\figurelabel{fig: residual distribution training set size}
\end{figure}

\subsection{Additional Figures}

\begin{figure}[H]
\centering  
\includegraphics[width=0.49\textwidth]{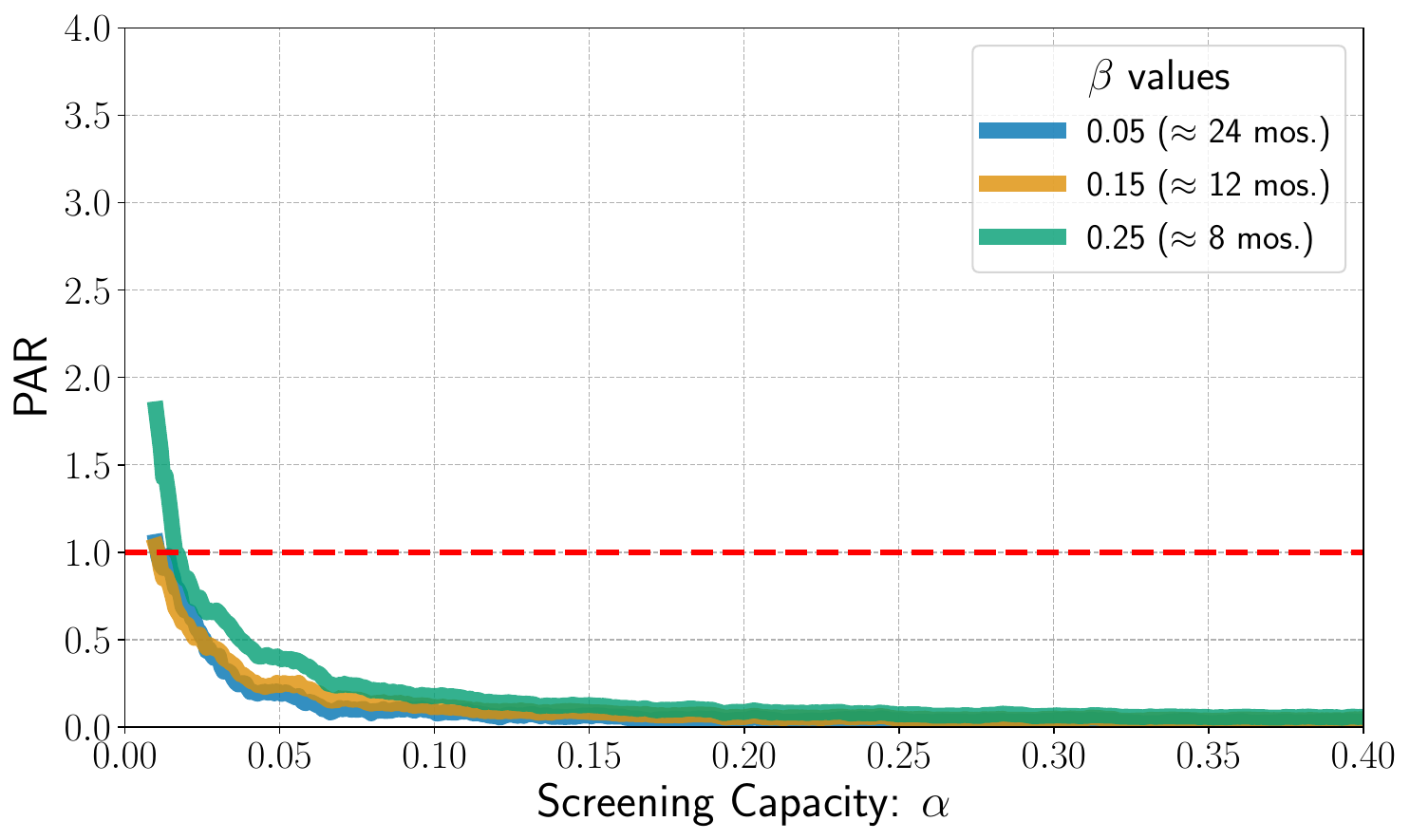}
\caption{Prediction-Access Ratio for $\rsq = 0$ and $\Delta_{\rsq} = \Delta_{\alpha} = 0.01$.} 
\figurelabel{appx: low r2 low delta}
\end{figure}

\begin{figure}[H]
\centering  
\includegraphics[width=0.49\textwidth]{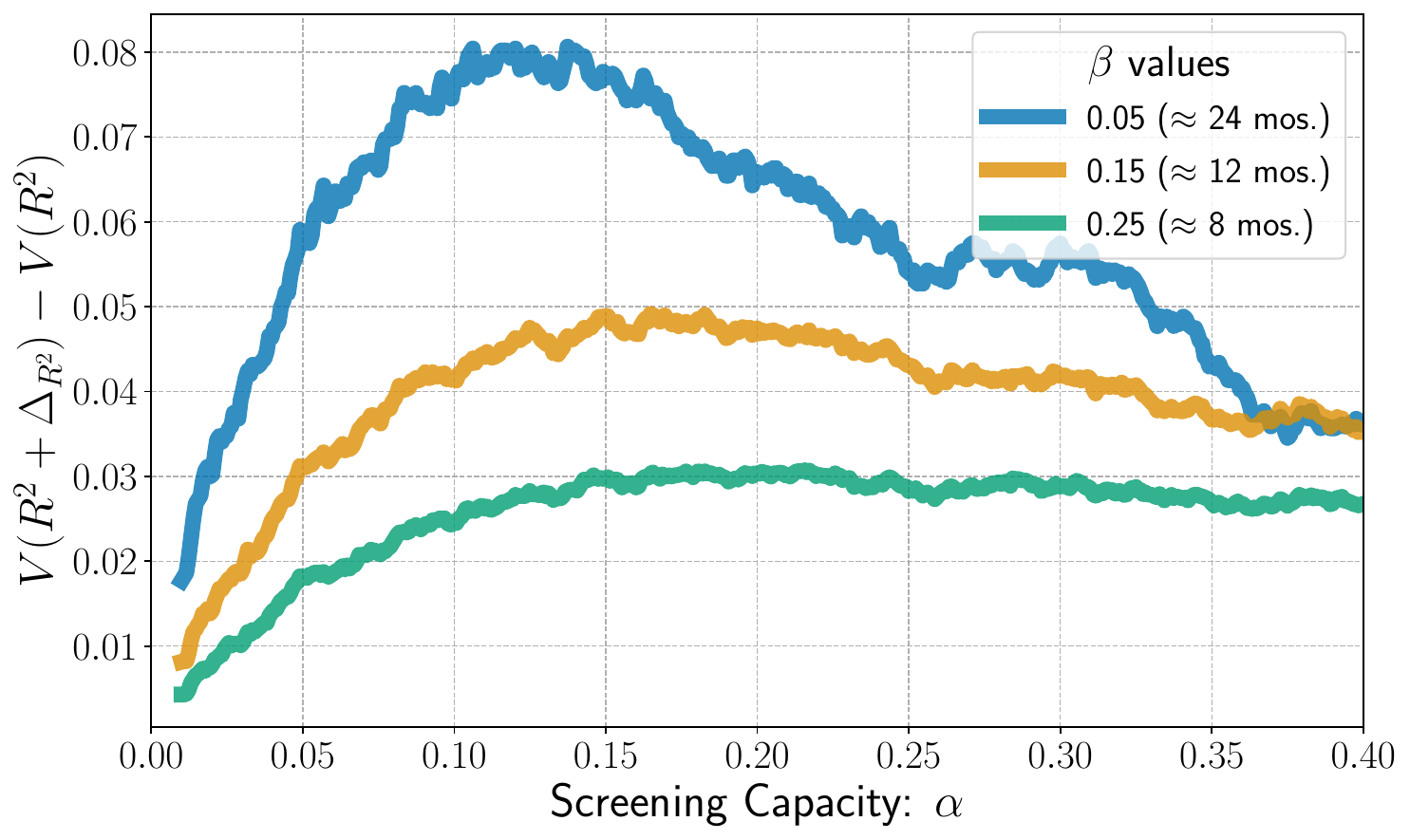}
\caption{$V(\rsq + \Delta_{\rsq}) - V(\rsq)$ for \texttt{CatBoost} model and $\Delta_{\rsq} = 0.05$.} 
\figurelabel{appx: r2 diff 0.05}
\end{figure}

\subsection{Binary Classification}
\label{appx: classification}

Instead of predicting the exact duration of unemployment, the problem can be reframed as a binary classification task. For a fixed $\beta$, we can define a binary outcome: $Y = 1\{Y \geq \quant{Y, n}{1-\beta}\}$. This approach more directly encodes the target of interest: identifying individuals who may require further screening or assistance. If the chosen classifier provides estimates of class probabilities $\hat{p}(x)$, it can be used to formulate a decision policy $1\{\hat{p}(x) \geq \quant{n,\hat{p}}{1- \alpha}\}$. However, this forces us to specify $\beta$ and the resulting decision threshold prior to model training. This requirement reduces flexibility compared to a continuous prediction setup, making classification more appropriate when the model is not intended for use in other tasks 
and when $\beta$ remains constant across the deployment context. Additionally, directly converting durations to labels discards information on the precise unemployment durations that could be valuable for the modeling process.

As can be seen in \figureref{fig: CatBoost Binary Policy Value TPC}, the resulting policy values and true positive counts remain very similar compared to the regression case.

\begin{figure}[ht]
\centering  
\subfigure[Policy Value]{\includegraphics[width=0.49\textwidth]{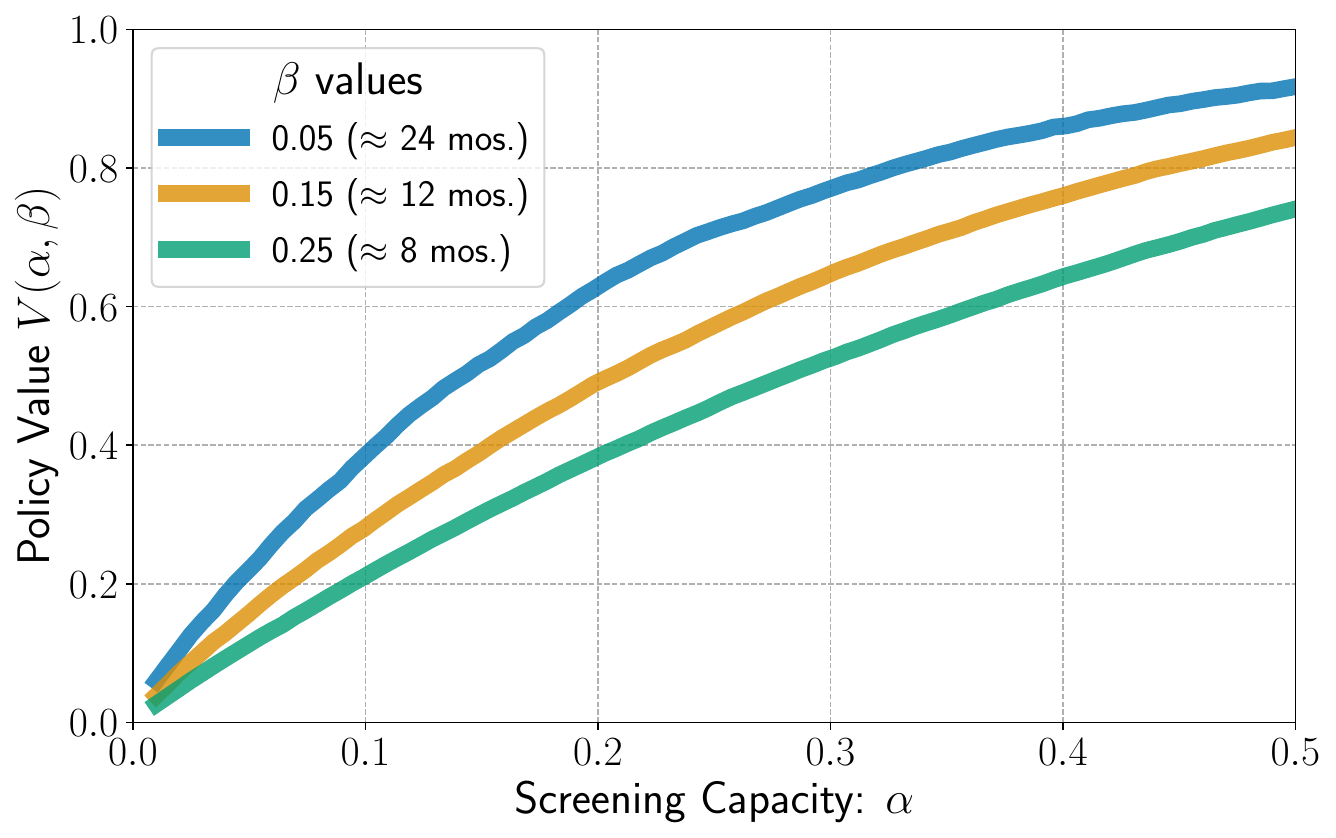}}
\subfigure[True Positive Count]{\includegraphics[width=0.49\textwidth]{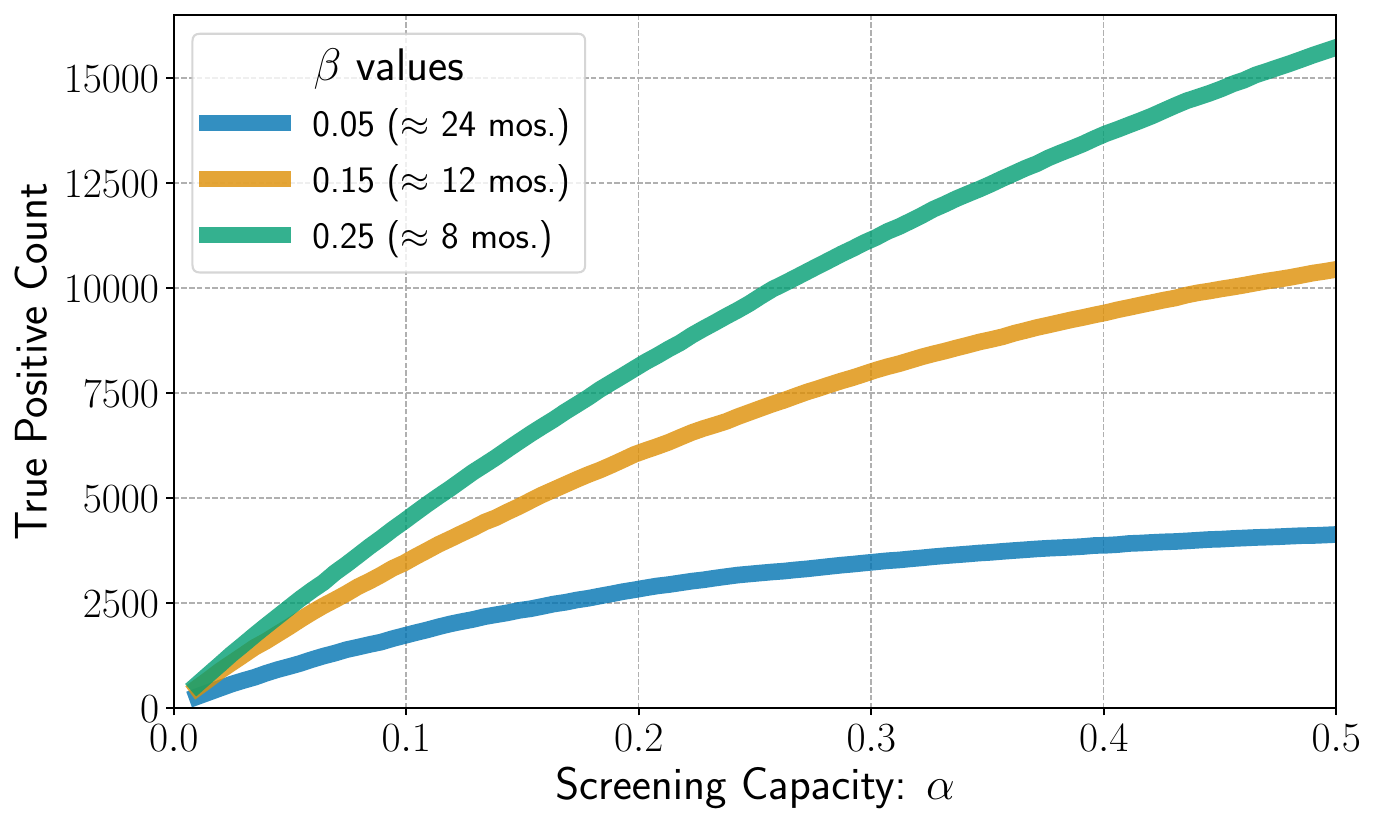}}
\caption{Policy Value and True Positive Count on Test Set (Classification).} 
\figurelabel{fig: CatBoost Binary Policy Value TPC}
\end{figure}

\newpage

\section{Additional Propositions}

\begin{proposition}
\propositionlabel{optimal policy gaussian}(Optimal Policy with Gaussian Errors)
    If $\epsilon = Y - \Yhat \sim  \cN(0, \gamma^2)$ and $\Yhat \perp\!\!\!\!\perp \epsilon$, then the optimal policy $\pi^* \from \R \to \bits$ to solve the screening problem (\definitionref{Screening Problem}) is equal to:
   \begin{align*}
       \pi^{*}(\Yhat_i) = 1\{\Yhat_i \leq \quant{\Yhat}{\alpha}\}
   \end{align*}
   where $\quant{\Yhat}{\alpha}$ is the $\alpha$-quantile of $\Yhat$. The value of the policy is $V(\pi^*) = \Prob{\Yhat \leq \quant{\Yhat}{\alpha} \mid \Y \leq \quant{\Y}{\beta}}$.
\end{proposition}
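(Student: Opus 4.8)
The plan is to reduce the statement to the general characterization in \propref{optimal policy}, which already establishes that any optimal screening rule thresholds the conditional score $s(\Yhat) = \Prob{\Y \leq \quant{\Y}{\beta} \mid \Yhat}$, screening exactly those individuals with $s(\Yhat) \geq \quant{s}{1-\alpha}$. The whole task then reduces to showing that, under the Gaussian error model, this score is a strictly monotone function of $\Yhat$, so that thresholding $s$ from above is identical to thresholding $\Yhat$ from below.

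First I would compute $s$ explicitly. Writing $\Y = \Yhat + \epsilon$ with $\epsilon \sim \cN(0,\gamma^2)$ independent of $\Yhat$, conditioning on $\Yhat$ gives $s(\Yhat) = \Prob{\epsilon \leq \quant{\Y}{\beta} - \Yhat} = \cdf{(\quant{\Y}{\beta} - \Yhat)/\gamma}$. Since $\cdf{\cdot}$ is strictly increasing and its argument is strictly decreasing in $\Yhat$, the score $s$ is a strictly decreasing function of $\Yhat$. Strict monotonicity then lets me invert the threshold: the event $\{s(\Yhat) \geq c\}$ coincides with $\{\Yhat \leq s^{-1}(c)\}$, with matching probabilities $\Prob{s(\Yhat) \geq c} = \Prob{\Yhat \leq s^{-1}(c)}$. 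Choosing $c = \quant{s}{1-\alpha}$ so that the screened mass equals $\alpha$ forces $s^{-1}(c) = \quant{\Yhat}{\alpha}$, the $\alpha$-quantile of $\Yhat$. Hence $\pi^*(\Yhat) = 1\{s(\Yhat) \geq \quant{s}{1-\alpha}\} = 1\{\Yhat \leq \quant{\Yhat}{\alpha}\}$, as claimed. The value then follows by substituting this rule into the objective of \definitionref{Screening Problem}, yielding $V(\pi^*) = \Ex{\pi^*(\Yhat) = 1 \mid \Y \leq \quant{\Y}{\beta}} = \Prob{\Yhat \leq \quant{\Yhat}{\alpha} \mid \Y \leq \quant{\Y}{\beta}}$.

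The only delicate point, which I expect to be the main (if minor) obstacle, is the quantile bookkeeping in the inversion step. I must confirm that $\Yhat$ has a continuous, atomless distribution; this is guaranteed here because $\Yhat = \Y - \epsilon$ is Gaussian (the difference of the independent Gaussians $\Y$ and $\epsilon$), so $\quant{\Yhat}{\alpha}$ is well defined and the budget binds with equality, $\Ex{\pi^*(\Yhat)} = \alpha$. Continuity of $s(\Yhat)$ similarly makes $\quant{s}{1-\alpha}$ well defined and ensures $\Prob{s(\Yhat) \geq \quant{s}{1-\alpha}} = \alpha$. With atomlessness in hand, the monotone change of variables is exact, no ties need to be broken, and the equivalence between the two threshold rules is clean.
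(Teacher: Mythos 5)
Your core argument is the paper's own proof: use $\Yhat \indep \epsilon$ to compute the score from \propositionref{optimal policy} explicitly as $s(\Yhat) = \cdf{(\quant{\Y}{\beta} - \Yhat)/\gamma}$, observe it is strictly decreasing in $\Yhat$, and conclude that thresholding $s$ from above is the same as thresholding $\Yhat$ from below, after which the value formula follows by substitution into the objective.

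However, your handling of the ``delicate point'' contains a genuine error. You justify atomlessness of $\Yhat$ by asserting that $\Yhat = \Y - \epsilon$ is Gaussian ``as the difference of the independent Gaussians $\Y$ and $\epsilon$.'' This is wrong on two counts. First, this proposition does \emph{not} assume $\Y$ is Gaussian; only the error $\epsilon$ is assumed Gaussian here, and the marginal assumption $\Y \sim \cN(\mu, \eta^2)$ is introduced only later, for \propositionref{policy value gaussian}. Second, the independence hypothesis is $\Yhat \indep \epsilon$, not $\Y \indep \epsilon$ (indeed $\Y$ and $\epsilon$ are correlated), so even granting Gaussian $\Y$ your argument as stated does not apply; recovering Gaussianity of $\Yhat$ from ``$\Y$ Gaussian, $\epsilon$ Gaussian, $\Yhat \indep \epsilon$'' is Cram\'er's decomposition theorem, not elementary algebra of independent Gaussians. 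Under the actual hypotheses of the proposition, $\Yhat$ may have an arbitrary law, including atoms (e.g., $\Yhat$ constant), so your claimed guarantee fails. Fortunately the damage is confined to that side remark: the identity $\{s(\Yhat) \geq \quant{s}{1-\alpha}\} = \{\Yhat \leq \quant{\Yhat}{\alpha}\}$ needs only strict monotonicity and continuity of the deterministic map $s$, which hold regardless of the law of $\Yhat$; any residual tie/atom issue (the screened mass exceeding $\alpha$ at an atom of $\Yhat$) is already present in \propositionref{optimal policy} itself, and the paper tacitly resolves it by assuming continuous distributions and randomized tie-breaking (see the footnote in the case study). To repair your write-up, either state continuity of the law of $\Yhat$ at its $\alpha$-quantile as an explicit regularity assumption, or note that ties can be broken randomly to make the budget bind.
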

\begin{proof}
Since $\Y = \Yhat + \epsilon$ where $\epsilon \sim \cN(0, \gamma^2)$ and $\Yhat \perp\!\!\!\!\perp \epsilon$, it follows for the conditional distribution $\Y \mid \Yhat \sim \cN(\Yhat, \gamma^2)$. Since $\Y \mid \Yhat$ is Gaussian, we can express the conditional probability from \propositionref{optimal policy} in terms of the CDF of the standard normal distribution,
\begin{align*}
\Prob{\Y \leq \quant{\Y}{\beta} \mid \Yhat} = \cdf{\frac{\quant{\Y}{\beta} - \Yhat}{\gamma}}
\end{align*}
To reproduce the ranking induced by $\Prob{\Y \leq \quant{\Y}{\beta} \mid \Yhat}$, individuals can be ranked in ascending order of $\Yhat$. Thus, we can express the optimal policy (\propositionref{optimal policy}) in terms of a ranking of $\Yhat$,
\begin{align*}
   \pi^{*}(\Yhat_i) = 1\{\Yhat_i \leq \quant{\Yhat}{\alpha}\}
\end{align*}
where $\quant{\Yhat}{\alpha}$ is the $\alpha$-quantile of $\Yhat$. The value $V(\pi^*)$ that can by achieved by the optimal screening policy $\pi^*$ can then be expressed as: 
\begin{align*}
    V(\pi^*) = \Ex{\pi^*(\Yhat) = 1 \mid \Y \leq \quant{\Y}{\beta}} &=  \Ex{1\{\Yhat \leq \quant{\Yhat}{\alpha}\} \mid \Y \leq \quant{\Y}{\beta}} \\
    &= \Prob{\Yhat \leq \quant{\Yhat}{\alpha} \mid \Y \leq \quant{\Y}{\beta}}
\end{align*}
\end{proof}

\section{Proofs}

\subsection{Optimal Policy for Screening Problem: Proof of \propositionref{optimal policy}}
\begin{proof}
We rewrite the policy value,
\begin{align*}
&\Ex{\pi(\Yhat_i) = 1 \mid \Y \leq \quant{\Y}{\beta}}
= \tfrac{\Ex{\pi(\Yhat_i)1\{{\Y \leq \quant{\Y}{\beta}\}}}}{\Prob{\Y \leq \quant{\Y}{\beta}}} \\
&= \tfrac{1}{\beta} \Ex{\pi(\Yhat_i)\Ex{1\{\Y \leq \quant{\Y}{\beta}\} \mid \Yhat_i}} \\ &=  \tfrac{1}{\beta} \Ex{\pi(\Yhat_i)\Prob{\Y \leq \quant{\Y}{\beta} \mid \Yhat_i}}
\end{align*}
To maximize the objective, individuals $\Yhat_i$ with the largest scores $s(\Yhat_i) = \Prob{\Y \leq \quant{\Y}{\beta} \mid \Yhat_i}$ should be prioritized. Thus, the optimal policy is to intervene ($\pi(\Yhat_i) = 1$) for the top $\alpha$-fraction of the population ranked by $\Prob{\Y \leq \quant{\Y}{\beta} \mid \Yhat}$.
\end{proof}

\subsection{Optimal Policy Value in Gaussian Setting: Proof of \propositionref{policy value gaussian}}
Following \propositionref{optimal policy gaussian}, the value of the optimal screening policy $\pi^*$ can then be expressed as: 
\begin{align*}
    V(\pi^*) = \Prob{\Yhat \leq \quant{\Yhat}{\alpha} \mid \Y \leq \quant{\Y}{\beta}}
\end{align*}
We can rewrite the conditional probability in terms of the joint distribution of $\Y$ and $\Yhat$, and note that $\prob{\Y \leq \quant{\Y}{\beta}} = F_{\Y}(\quant{\Y}{\beta}) = \beta$,
\begin{align*}
  \Prob{\Yhat \leq \quant{\Yhat}{\alpha} \mid \Y \leq \quant{\Y}{\beta}} = \frac{1}{\beta} \Prob{\Yhat \leq \quant{\Yhat}{\alpha}, \Y \leq \quant{\Y}{\beta}}
\end{align*}
We then standardize $\Y \sim \cN(\mu, \eta^2)$ and $\Yhat \sim \cN(\mu, \eta^2 - \gamma^2)$ and make use that for a normal random variable with mean $\mu$ and variance $\sigma^2$ the quantile function is $\quant{}{p} = \mu + \sigma \nquant{p}$.
\begin{align*}
    \frac{1}{\beta} \Prob{\Yhat \leq \quant{\Yhat}{\alpha}, \Y \leq \quant{\Y}{\beta}} &= \frac{\prob{Z_1 \leq \frac{\quant{\Yhat}{\alpha} - \mu}{\sqrt{\eta^2 - \gamma^2}}, Z_2 \leq \frac{\quant{\Y}{\beta}-\mu}{\eta}}}{\beta} \\
    &= \frac{\prob{Z_1 \leq \nquant{\alpha}, Z_2 \leq \nquant{\beta}}}{\beta}
\end{align*}
$Z_1$ and $Z_2$ are standard Gaussian with $\Cov(Z_1, Z_2) = \Ex{Z_1 Z_2} = \frac{1}{\eta \sqrt{\eta^2 - \gamma^2}} \Cov(\Yhat, \Yhat + \epsilon) = \frac{\Cov(\Yhat, \Yhat)}{\eta \sqrt{\eta^2 - \gamma^2}}  = \frac{\sqrt{\eta^2 - \gamma^2}}{\eta}$. Thus, they are distributed according to a standard bivariate normal distribution with correlation $\rho = \Cov(Z_1, Z_2) = \frac{\sqrt{\eta^2 - \gamma^2}}{\eta}$. Thus,
\begin{align*}
      V(\pi^*) = \Ex{\pi^*(\Yhat) = 1 \mid \Y \leq \quant{\Y}{\beta}} =  \frac{1}{\beta} \jcdf{\nquant{\alpha}, \nquant{\beta}; \rho} 
\end{align*}
where 
\begin{align*}
    \jcdf{\nquant{\alpha}, \nquant{\beta}}  = \int_{-\infty}^{\nquant{\alpha}} \int_{-\infty}^{\nquant{\beta}} \jpdf{z_1, z_2; \rho} \rd z_1 \rd z_2
\end{align*}
and 
\begin{align*}
  \jpdf{z_1, z_2} =  \frac{1}{2 \pi \sqrt{1-\rho^2}} e^{-1/2(z_1^2 - 2\rho z_1 z_2 + z_2^2) / (1-\rho^2)}   
\end{align*}

\subsection{Prediction-Access Ratio for Small Screening Capacities: Proof of \theoremref{par small alpha}}

Using Taylor's theorem,
\begin{align*}
      V(\alpha, \beta, \rsq + \Delta_{\rsq}) -  V(\alpha, \beta, \rsq) = \Delta_{\rsq} \pder{\rsq} V(\alpha, \beta, \rsq + p_{\rsq} \Delta_{\rsq})
\end{align*}
where $ p_{\rsq} \in (0, 1)$. We know from \lemmaref{rsq upper bound},
\begin{align*}
    \pder{\rsq} V(\alpha, \beta, \rsq_*) &\leq \frac{1}{\beta\sqrt{8\pi\rsq_*(1-\rsq_*)}} \pdf{\frac{\nquant{\alpha} - \sqrt{\rsq_*} \nquant{\beta}}{\sqrt{1-\rsq_*}}}
\end{align*}
where $\rsq_* := \rsq + p_{\rsq}\Delta_{\rsq}$. For $\alpha < 0.5$ and $\beta \leq 0.5$, we know $\nquant{\alpha} < 0$ and $\nquant{\beta} \leq 0$. It follows, that for any $\epsilon_1 > 0$, $0 < \rsq_*$ and $0 < \beta$, there exists a threshold value $t_1 > 0$, such that for all $\alpha \leq t_1$, we have 
\begin{align*}
    (1+\epsilon_1) \frac{\nquant{\alpha}}{\sqrt{1-\rsq_*}} \leq \frac{ \nquant{\alpha} -\sqrt{\rsq_*}  \nquant{\beta}}{\sqrt{1-\rsq_*}} \leq (1 - \epsilon_1) \frac{ \nquant{\alpha}}{\sqrt{1-\rsq_*}}
\end{align*}
If $\alpha <\beta$ we find $ \nquant{\alpha} - \sqrt{\rsq_*}\nquant{\beta} < 0$. Since $\phi(x) \leq \phi(x')$ for $x \leq x' < 0$,
\begin{align*}
\frac{1}{\beta\sqrt{8\pi\rsq_*(1-\rsq_*)}} \pdf{\frac{\nquant{\alpha} - \sqrt{\rsq_*}\nquant{\beta}}{\sqrt{1-\rsq_*}}}&\leq \frac{1}{\beta\sqrt{8\pi\rsq_*(1-\rsq_*)}} \pdf{(1-\epsilon_1)\frac{\nquant{\alpha}}{\sqrt{1-\rsq_*}}} \\
  &= \frac{1}{\beta\sqrt{8 \pi \rsq_* (1-\rsq_*)} }\pdf{\kappa \nquant{\alpha}} \\
  &= \frac{1}{\beta\sqrt{8 \pi \rsq_* (1-\rsq_*)} }\pdf{\kappa \nquant{1 -\alpha}}
\end{align*}
where $\kappa := \frac{(1-\epsilon_1)}{\sqrt{1-\rsq_*}}$. For any $\epsilon_2 > 0$, there exists a threshold $t_2 > 0$, such that for all $\alpha \leq t_2$, we can apply \textit{Lemma B.5.} from \citet{pmlr-v235-perdomo24a} to arrive at the following inequality:
\begin{align*}
    \pdf{\kappa\nquant{1 -\alpha}} \leq \frac{1}{\sqrt{2\pi}} \left((1+\epsilon_2) \sqrt{2\pi}\alpha  \nquant{1-\alpha} \right)^{\kappa^2}
\end{align*}
Thus,
\begin{align*}
 V(\alpha, \beta, \rsq + \Delta_{\rsq}) -  V(\alpha, \beta, \rsq) &\leq \Delta_{\rsq} \frac{1}{\beta4\pi\sqrt{\rsq_* (1-\rsq_*)}} \left((1+\epsilon_2) \sqrt{2\pi}\alpha  \nquant{1-\alpha} \right)^{\kappa^2}
\end{align*}
We can use Taylor's theorem again and from \lemmaref{derivative alpha} we know that
\begin{align*}
      V(\alpha + \Delta_\alpha, \beta, \rsq ) -  V(\alpha, \beta, \rsq) &= \Delta_\alpha \pder{\alpha} V(\alpha + p_{\alpha} \Delta_\alpha, \beta, \rsq) \\
      &=\Delta_\alpha \frac{1}{\beta}\cdf{\frac{\nquant{\beta} - \sqrt{\rsq} \nquant{\alpha + p_\alpha \Delta_\alpha}}{\sqrt{1-\rsq}}}
\end{align*}
where $p_\alpha \in (0, 1)$. Since $0 < \beta$ and $0< \rsq$ there will always be a small enough $\alpha + \Delta_\alpha$ such that 
\begin{align*}
 \nquant{\beta} - \sqrt{\rsq} \nquant{\alpha + p_\alpha \Delta_\alpha} \geq 0   
\end{align*}
Since $\cdf{x} \geq 1/2$ for $x \geq 0$, it follows
\begin{align*}
 \frac{\Delta_\alpha}{2\beta} \leq V(\alpha + \Delta_\alpha, \beta, \rsq ) -  V(\alpha, \beta, \rsq)
\end{align*}
It follows for the prediction-access ratio,
\begin{align*}
  \frac{\Delta_\alpha}{\Delta_{\rsq}} 2\pi\sqrt{\rsq_* (1-\rsq_*)} \left((1+\epsilon_2) \sqrt{2\pi}\alpha  \nquant{1-\alpha} \right)^{-(1-\epsilon_1)^2\frac{1}{1-\rsq_*}}  \leq \frac{V(\alpha + \Delta_{\alpha}, \beta, \rsq) -  V(\alpha, \beta, \rsq)}{ V(\alpha, \beta, \rsq + \Delta_{\rsq}) -  V(\alpha, \beta, \rsq) }
\end{align*}
For small $\alpha$, $\nquant{1-\alpha}$ grows asymptotically like $\sqrt{\log{(1/\alpha})}$. Consequently, the polynomial growth of $\alpha^{-1/(1-\rsq)}$ drives the PAR to increase rapidly as $\alpha$ decreases. Since $\frac{1}{1-\rsq_*}$ increases with $\rsq_*$ and $\rsq \leq \rsq_*$, we can lower bound the PAR by inserting $\rsq$ instead of $\rsq_*$:
\begin{align*}
  \frac{\Delta_\alpha}{\Delta_{\rsq}} 2\pi\sqrt{\rsq (1-\rsq)} \left((1+\epsilon_2) \sqrt{2\pi}\alpha \nquant{1-\alpha} \right)^{-(1-\epsilon_1)^2\frac{1}{1-\rsq}}  \leq \frac{V(\alpha + \Delta_{\alpha}, \beta, \rsq) -  V(\alpha, \beta, \rsq)}{ V(\alpha, \beta, \rsq + \Delta_{\rsq}) -  V(\alpha, \beta, \rsq) }
\end{align*}
We can simplify the lower-bound by noting that $0 < \epsilon_1$ and $0 <\epsilon_2$ can be made arbitrarily small by selecting a sufficiently small threshold $t$ for $\alpha + \Delta_{\alpha}$. Specifically, $\epsilon_2 < 1$ holds for $\alpha \leq 0.05$ (see \textit{Lemma A.6} in \citet{pmlr-v235-perdomo24a}).
\begin{align*}
  \frac{\Delta_\alpha}{\Delta_{\rsq}} \sqrt{\rsq (1-\rsq)} \left(5.1 \cdot\alpha \nquant{1-\alpha} \right)^{-\frac{1}{1-\rsq} + o(1)}  \leq \frac{V(\alpha + \Delta_{\alpha}, \beta, \rsq) -  V(\alpha, \beta, \rsq)}{ V(\alpha, \beta, \rsq + \Delta_{\rsq}) -  V(\alpha, \beta, \rsq) }
\end{align*}

\subsection{Maximally Effective (Local) Prediction Improvements: Proof of \theoremref{max rsq improvements}}

We know from \lemmaref{derivative rsq},
\begin{align*}
    \lim_{\Delta \rightarrow 0} \frac{V(\alpha, \beta, \rsq + \Delta) - V(\alpha, \beta, \rsq)}{\Delta} &= \pder{\rsq} V(\alpha, \beta, \rsq) \\
    &=  \frac{1}{2\beta\sqrt{\rsq}} \jpdf{\nquant{\alpha}, \nquant{\beta}; \rho}
\end{align*}
We insert $\jpdf{\cdot}$ and arrive at
\begin{align*}
\pder{\rsq} V(\alpha, \beta, \rsq) &= \underbrace{\frac{1}{4\pi\beta\sqrt{\rsq (1-\rsq)}}}_{T_1} \\
&\times\underbrace{\exp\left(-\frac{1}{2(1-\rsq)} (\nquant{\alpha}^2 + \nquant{\beta}^2 - 2\sqrt{\rsq} \nquant{\alpha} \nquant{\beta})\right)}_{T_2}
\end{align*}
The prefactor $T_1$ diverges as $\rsq \rightarrow 1$ or $\rsq \rightarrow 0$.

If $\rsq \rightarrow 1$, the exponential term will generally suppress the polynomial growth of the prefactor. However for $\alpha = \beta$, we find for the exponent
\begin{align*}
 -\frac{1}{2(1-\rsq)} (\nquant{\alpha}^2 + \nquant{\beta}^2 - 2\sqrt{\rsq} \nquant{\alpha} \nquant{\beta}) &=   -\frac{1 - \sqrt{\rsq}}{1-\rsq} \nquant{\beta}^2  \\
 &=-\frac{1}{(1 + \sqrt{\rsq})} \nquant{\alpha}^2 \\
 &\overset{\rsq \rightarrow 1}{=}-\frac{1}{2}\nquant{\beta}^2
\end{align*}
which is finite. Therefore, $\pder{\rsq} V(\alpha, \beta, \rsq)$ becomes unboundedly large if $\alpha = \beta$ and $\rsq \rightarrow 1$.

If $\rsq \rightarrow 0$, the  prefector $T_1$ diverges again to $+\infty$. The exponent then simplifies to
\begin{align*}
 -\frac{1}{2(1-\rsq)} (\nquant{\alpha}^2 + \nquant{\beta}^2 - 2\sqrt{\rsq} \nquant{\alpha} \nquant{\beta}) &=  -\frac{1}{2} (\nquant{\alpha}^2 + \nquant{\beta}^2)
\end{align*}
If $\alpha$ and $\beta$ are not set arbitrarily small or large $\pder{\rsq} V(\alpha, \beta, \rsq)$ will diverge. The local PAR (\lemmaref{derivative alpha})
\begin{align*}
    \lim_{\Delta \rightarrow 0} \frac{V(\alpha + \Delta, \beta, \rsq) - V(\alpha, \beta, \rsq)}{V(\alpha, \beta, \rsq + \Delta) - V(\alpha, \beta, \rsq)} &= \frac{\pder{\alpha} V(\alpha, \beta, \rsq)}{\pder{\rsq} V(\alpha, \beta, \rsq)} \\
    = \frac{\cdf{\frac{\nquant{\beta} - \sqrt{\rsq} \nquant{\alpha}}{\sqrt{1-\rsq}}}}{\frac{1}{2\sqrt{\rsq}} \jpdf{\nquant{\alpha}, \nquant{\beta}; \rho}}
\end{align*}
approaches zero in both regimes.

\subsection{Prediction-Access Ratio for Local Improvements: Proof of \propositionref{nonquant bound}}
We know
\begin{align*}
    \lim_{\Delta \rightarrow 0} \frac{V(\alpha + \Delta, \beta, \rsq) -  V(\alpha, \beta, \rsq)}{ V(\alpha, \beta, \rsq + \Delta) -  V(\alpha, \beta, \rsq) } = \frac{\pder{\alpha} V(\alpha, \beta, \rsq) }{\pder{\rsq} V(\alpha, \beta, \rsq)}
\end{align*}
Using \lemmaref{derivative alpha} and \lemmaref{rsq upper bound} we find a lower bound for the PAR:
\begin{align*}
  \underbrace{\sqrt{8\pi\rsq(1-\rsq)}}_{T_1} \underbrace{\frac{\cdf{\frac{\nquant{\beta} - \sqrt{\rsq} \nquant{\alpha}}{\sqrt{1-\rsq}}}}{\pdf{\frac{\nquant{\beta} -  \sqrt{\rsq} \nquant{\alpha}}{\sqrt{1-\rsq}}}}}_{T_2}
 \leq   \frac{V(\alpha + \Delta_{\alpha}, \beta, \rsq) -  V(\alpha, \beta, \rsq)}{ V(\alpha, \beta, \rsq + \Delta_{\rsq}) -  V(\alpha, \beta, \rsq) }
\end{align*}
We then denote $z:= \frac{\nquant{\beta} - \sqrt{\rsq} \nquant{\alpha}}{\sqrt{1-\rsq}}$ and $T_2 = \frac{\cdf{z}}{\pdf{z}}$. We know from \lemmaref{cdf pdf ratio} that $\frac{\cdf{z}}{\pdf{z}}$ increases with $z$. It follows that we need to find the smallest possible $z$ to find a lower bound for $T_2$. Generally, $z$ decreases with $\alpha$ and increases with $\beta$. We treat both cases separately:
\begin{enumerate}
    \item For $\alpha \leq \beta$ we find $\frac{\nquant{\beta} (1-\sqrt{\rsq})}{\sqrt{1-\rsq}} \leq z$. Since $\frac{1-\sqrt{\rsq}}{\sqrt{1-\rsq}}$ decreases with $\rsq$ and $\beta \leq 0.5$ we can lower bound the expression by setting $\rsq = 0.15$ and $\beta = 0.03$. Thus, $-1.25 \leq z$ and $0.59 \leq T_2$. Since $0.15 \leq \rsq \leq 0.85$ we can lower bound the prefactor $  1.79 \leq T_1$.
    \item For $\alpha \leq 0.5$, it follows $\frac{\nquant{\beta}}{\sqrt{1-\rsq}} \leq z$ by setting $\nquant{\alpha = 0.5} = 0$. Since $0.15 \leq \beta$ and $0.2 \leq \rsq \leq 0.5$, it follows $0.52 \leq T_2$ and $ 2 \leq T_1$
\end{enumerate}
In both cases, we can combine the lower bounds of $T_1$ and $T_2$ to find 
\begin{align*}
 1
 \leq   \frac{V(\alpha + \Delta_{\alpha}, \beta, \rsq) -  V(\alpha, \beta, \rsq)}{ V(\alpha, \beta, \rsq + \Delta_{\rsq}) -  V(\alpha, \beta, \rsq) }   
\end{align*}

\subsection{Technical Lemmas}

\begin{lemma}[Derivative w.r.t. $\alpha$]
\lemmalabel{derivative alpha}
\begin{align}
     \pder{\alpha} V(\alpha, \beta, \rsq) = \frac{1}{\beta}\cdf{\frac{\nquant{\beta} - \sqrt{\rsq} \nquant{\alpha}}{\sqrt{1-\rsq}}}
\end{align}
\end{lemma}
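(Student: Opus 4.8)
The plan is to differentiate the closed-form policy value from \propositionref{policy value gaussian}, namely $V(\alpha, \beta, \rsq) = \tfrac{1}{\beta}\jcdf{\nquant{\alpha}, \nquant{\beta}; \rho}$ with $\rho = \sqrt{\rsq}$, directly with respect to $\alpha$. Abbreviating $a = \nquant{\alpha}$ and $b = \nquant{\beta}$, I would view the bivariate CDF as the iterated integral $\int_{-\infty}^{a}\int_{-\infty}^{b} \jpdf{z_1, z_2; \rho}\,\rd z_2\,\rd z_1$ and apply the chain rule, noting that among the two limits only the outer one, $a$, depends on $\alpha$ (the target fraction $\beta$, and hence $b$, is held fixed).

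First I would record the derivative of the quantile function via the inverse function theorem: $\der{\alpha}\nquant{\alpha} = 1/\pdf{\nquant{\alpha}}$. Next, differentiating the bivariate CDF with respect to its upper limit $a$ through Leibniz's rule collapses the outer integral, leaving $\pder{a}\jcdf{a, b; \rho} = \int_{-\infty}^{b} \jpdf{a, z_2; \rho}\,\rd z_2$. The key step is then to factor the bivariate normal density into its marginal–conditional form, $\jpdf{a, z_2; \rho} = \pdf{a}\cdot\tfrac{1}{\sqrt{1-\rho^2}}\,\pdf{\tfrac{z_2 - \rho a}{\sqrt{1-\rho^2}}}$, which exploits that $z_2 \mid z_1 = a \sim \cN(\rho a, 1-\rho^2)$. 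Integrating the conditional factor over $z_2$ up to $b$ recognizes a one-dimensional Gaussian CDF, yielding $\int_{-\infty}^{b} \jpdf{a, z_2; \rho}\,\rd z_2 = \pdf{a}\,\cdf{\tfrac{b - \rho a}{\sqrt{1-\rho^2}}}$.

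Combining these ingredients through the chain rule, the factor $\pdf{a}$ coming from the density cancels against the $1/\pdf{a}$ from the quantile derivative, so that $\pder{\alpha}\jcdf{\nquant{\alpha}, \nquant{\beta}; \rho} = \cdf{\tfrac{\nquant{\beta} - \rho\,\nquant{\alpha}}{\sqrt{1-\rho^2}}}$. Substituting $\rho = \sqrt{\rsq}$ and $1-\rho^2 = 1-\rsq$ and dividing by $\beta$ produces exactly the claimed expression. I do not anticipate a substantive obstacle here; the only real care is the bookkeeping of the chain-rule factors and the cancellation of $\pdf{a}$, together with confirming that the positive root $\rho = +\sqrt{\rsq}$ is the correlation fixed in \propositionref{policy value gaussian} (so that the conditional mean is $\rho a$ with the correct sign). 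A minor technical point worth verifying is the interchange of differentiation and integration justifying Leibniz's rule, which holds since the bivariate density is smooth and integrable.
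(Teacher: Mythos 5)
Your proposal is correct and follows essentially the same route as the paper's proof: differentiate the closed-form value $\tfrac{1}{\beta}\jcdf{\nquant{\alpha},\nquant{\beta};\rho}$ via Leibniz's rule together with $\der{\alpha}\nquant{\alpha} = 1/\pdf{\nquant{\alpha}}$, then evaluate $\int_{-\infty}^{\nquant{\beta}}\jpdf{\nquant{\alpha}, z_2;\rho}\rd z_2$ as $\pdf{\nquant{\alpha}}\cdf{\bigl(\nquant{\beta}-\rho\nquant{\alpha}\bigr)/\sqrt{1-\rho^2}}$ so the density factor cancels. The only cosmetic difference is that you invoke the marginal--conditional factorization of the bivariate Gaussian where the paper completes the square by an explicit substitution; these are the same computation.
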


\begin{proof}
In the Gaussian setting we find for the policy value (\propositionref{policy value gaussian}),
\begin{align*}
    V(\alpha, \beta, \rsq) = \frac{\jcdf{\nquant{\alpha}, \nquant{\beta}; \rho}}{\beta}
\end{align*}
We first apply Leibniz integral rule,
\begin{align*}
  \pder{\alpha} V(\alpha, \beta, \rsq) &= \pder{\alpha} \frac{\jcdf{\nquant{\alpha}, \nquant{\beta}; \rho}}{\beta} \\
 &= \frac{1}{\beta} \int_{-\infty}^{\nquant{\beta}} \jpdf{z_1, \nquant{\alpha}; \rho} \rd z_1 \pder{\alpha} \nquant{\alpha} \\
 &= \frac{1}{\beta\pdf{\nquant{\alpha}}}\int_{-\infty}^{\nquant{\beta}} \jpdf{ \nquant{\alpha}, z_2; \rho} \rd z_2
\end{align*}
We insert the bivariate density $\jpdf{\cdot}$ and substitute $z_2 - \rho\nquant{\alpha} = u \sqrt{1-\rho^2}$
\begin{align*}
   &\frac{1}{\beta\pdf{\nquant{\alpha}}}\int_{-\infty}^{\nquant{\beta}} \jpdf{\nquant{\alpha}, z_2; \rho} \rd z_2 \\
   &= \frac{1}{\beta\pdf{\nquant{\alpha}}} \frac{1}{2 \pi \sqrt{1-\rho^2}} \int_{-\infty}^{\nquant{\beta}} e^{-1/2(z_2^2 - 2\rho z_2 \nquant{\alpha} + \nquant{\alpha}^2) / (1-\rho^2)} \rd z_2 \\
   &= \frac{1}{2\pi\beta\pdf{\nquant{\alpha}}} \int_{-\infty}^{(\nquant{\beta} - \rho \nquant{\alpha}) / \sqrt{1-\rho^2}} e^{-1/2(u^2 (1-\rho^2) + \rho^2 \nquant{\alpha}^2 + (1-\rho^2) \nquant{\alpha}^2) / (1-\rho^2)} \rd u \\
   &= \frac{1}{2\pi\beta\pdf{\nquant{\alpha}}} e^{-1/2 \nquant{\alpha}^2} \int_{-\infty}^{(\nquant{\beta} - \rho \nquant{\alpha}) / \sqrt{1-\rho^2}} e^{-1/2u^2} \rd u \\
   &= \frac{1}{\beta} \cdf{\frac{\nquant{\beta} - \rho \nquant{\alpha}}{\sqrt{1-\rho^2}}} = \frac{1}{\beta} \cdf{\frac{\nquant{\beta} - \sqrt{\rsq} \nquant{\alpha}}{\sqrt{1-\rsq}}}
\end{align*}
\end{proof}

\begin{lemma}[Derivative w.r.t. $\rsq$]
\lemmalabel{derivative rsq}
\begin{align}
     \pder{\rsq} V(\alpha, \beta, \rsq) = \frac{1}{2\beta\sqrt{\rsq}} \jpdf{\nquant{\alpha}, \nquant{\beta}}
\end{align}
where $\jpdf{\cdot}$ is the standard bivariate density.
\end{lemma}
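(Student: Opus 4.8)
The plan is to reduce the claim to the classical Plackett identity for the bivariate normal CDF and then apply the chain rule. Starting from the closed form in \propositionref{policy value gaussian}, $V(\alpha, \beta, \rsq) = \frac{1}{\beta}\jcdf{\nquant{\alpha}, \nquant{\beta}; \rho}$ with $\rho = \sqrt{\rsq}$, only the correlation $\rho$ depends on $\rsq$, since $\nquant{\alpha}$ and $\nquant{\beta}$ are held fixed as the marginals do not vary with the prediction quality. Hence
\[
    \pder{\rsq} V(\alpha, \beta, \rsq) = \frac{1}{\beta}\, \frac{d\rho}{d\rsq}\, \pder{\rho}\jcdf{\nquant{\alpha}, \nquant{\beta}; \rho}, \qquad \frac{d\rho}{d\rsq} = \frac{1}{2\sqrt{\rsq}}.
\]
The content of the lemma is therefore entirely captured by showing $\pder{\rho}\jcdf{a, b; \rho} = \jpdf{a, b; \rho}$ for fixed $a = \nquant{\alpha}$, $b = \nquant{\beta}$, after which multiplying by $\frac{1}{2\beta\sqrt{\rsq}}$ reproduces the stated expression.

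To prove this identity, I would differentiate under the integral sign in $\jcdf{a, b; \rho} = \int_{-\infty}^{a}\int_{-\infty}^{b}\jpdf{z_1, z_2; \rho}\,\rd z_1\,\rd z_2$, which is justified by dominated convergence on any compact $\rho$-interval bounded away from $\pm 1$. The key step is to verify the diffusion-type identity
\[
    \pder{\rho}\jpdf{z_1, z_2; \rho} = \frac{\partial^2}{\partial z_1\, \partial z_2}\jpdf{z_1, z_2; \rho}.
\]
This follows by direct differentiation of the explicit density $\jpdf{z_1, z_2; \rho} = \frac{1}{2\pi\sqrt{1-\rho^2}}\exp\!\big(-\frac{z_1^2 - 2\rho z_1 z_2 + z_2^2}{2(1-\rho^2)}\big)$: writing $D = 1-\rho^2$ and $Q = z_1^2 - 2\rho z_1 z_2 + z_2^2$, both sides reduce to $\jpdf{z_1, z_2; \rho}$ times the same rational factor $\big(\rho D + (1+\rho^2)z_1 z_2 - \rho(z_1^2 + z_2^2)\big)/D^2$. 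Substituting this identity into the double integral and applying the fundamental theorem of calculus in each variable — integrating first over $z_2$ to produce $\partial_{z_1}\jpdf{\cdot}$ evaluated at $z_2 = b$, then over $z_1$ — collapses the integral to the corner value $\jpdf{a, b; \rho}$, using that $\jpdf{\cdot}$ and its first partials vanish as $z_1 \to -\infty$ or $z_2 \to -\infty$.

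The main obstacle is the diffusion-identity verification: it is the only nontrivial computation, since both the $\rho$-derivative and the mixed spatial partial produce bulky rational coefficients, and one must confirm algebraically that they coincide. In particular, after computing $\partial_\rho \jpdf{\cdot}$ one should expand $\rho Q$ and $z_1 z_2 D$ and collect terms to expose the cancellation against the mixed partial; the agreement hinges on the $D = 1-\rho^2$ bookkeeping. A secondary technical point is justifying the interchange of differentiation and integration together with the decay of $\jpdf{\cdot}$ at $-\infty$; both are standard for the Gaussian density but should be noted. Once the identity is established, the remainder is a one-line chain-rule assembly yielding $\pder{\rsq} V(\alpha, \beta, \rsq) = \frac{1}{2\beta\sqrt{\rsq}}\jpdf{\nquant{\alpha}, \nquant{\beta}}$.
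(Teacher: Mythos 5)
Your proposal is correct and follows essentially the same route as the paper's proof: both apply the chain rule with $\rho = \sqrt{\rsq}$ (so $\tfrac{d\rho}{d\rsq} = \tfrac{1}{2\sqrt{\rsq}}$) and reduce everything to the identity $\pder{\rho}\jcdf{a, b; \rho} = \jpdf{a, b; \rho}$. The only difference is that the paper invokes this identity as a known result (citing Drezner, 1990; it is classically due to Plackett), whereas you prove it from scratch via the diffusion identity $\pder{\rho}\jpdf{z_1, z_2; \rho} = \tfrac{\partial^2}{\partial z_1 \partial z_2}\jpdf{z_1, z_2; \rho}$ and integration with Gaussian decay at $-\infty$ --- your verification (both sides equal $\jpdf{z_1,z_2;\rho}$ times $\bigl(\rho D + (1+\rho^2)z_1 z_2 - \rho(z_1^2+z_2^2)\bigr)/D^2$ with $D = 1-\rho^2$) is accurate, so your argument is a correct, self-contained version of the same proof.
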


\begin{proof}

\begin{align*}
  \pder{\rsq} V(\alpha, \beta, \rsq) &= \pder{\rsq} \frac{\jcdf{\nquant{\alpha}, \nquant{\beta}; \rho}}{\beta} \\
 &= \frac{1}{\beta} \frac{\partial \rho}{\partial \rsq} \pder{\rho} \jcdf{\nquant{\alpha}, \nquant{\beta}; \rho} \\
 &= \frac{1}{2\beta\sqrt{\rsq}}  \pder{\rho} \jcdf{\nquant{\alpha}, \nquant{\beta}; \rho} \\
 &= \frac{1}{2\beta\sqrt{\rsq}} \jpdf{\nquant{\alpha}, \nquant{\beta}; \rho}
\end{align*}
where $\jpdf{\cdot}$ is the standard bivariate density. We utilized $\rsq = \rho^2$, and in the final step applied the partial derivative of the standard bivariate cumulative distribution with respect to its correlation $\rho$ \citep{drezner1990}.
\end{proof}

\begin{lemma}[Upper bound of $\rsq$ derivative]
\lemmalabel{rsq upper bound}
Let $0 \leq \sqrt{\rsq} \leq 1$. Then,
\begin{align}
    \pder{\rsq} V(\alpha, \beta, \rsq) &\leq \frac{1}{\beta\sqrt{8\pi\rsq(1-\rsq)}} \pdf{\frac{\nquant{\beta} -  \sqrt{\rsq} \nquant{\alpha}}{\sqrt{1-\rsq}}} \\
    \pder{\rsq} V(\alpha, \beta, \rsq) &\leq \frac{1}{\beta\sqrt{8\pi\rsq(1-\rsq)}} \pdf{\frac{\sqrt{\rsq} \nquant{\beta} -  \nquant{\alpha}}{\sqrt{1-\rsq}}}
\end{align}
\end{lemma}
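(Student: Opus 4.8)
The plan is to start from the exact expression for $\pder{\rsq}V$ given by \lemmaref{derivative rsq} and reduce both claimed inequalities to an elementary comparison of Gaussian exponents. Substituting the correlation $\rho=\sqrt{\rsq}$ and writing the standard bivariate density $\jpdf{\cdot}$ out explicitly yields
\begin{align*}
\pder{\rsq} V(\alpha, \beta, \rsq) = \frac{1}{4\pi\beta\sqrt{\rsq(1-\rsq)}} \exp\!\left(-\frac{\nquant{\alpha}^2 - 2\sqrt{\rsq}\,\nquant{\alpha}\nquant{\beta} + \nquant{\beta}^2}{2(1-\rsq)}\right).
\end{align*}
The first observation is that the leading constant already matches the target. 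Using $\pdf{x}=\tfrac{1}{\sqrt{2\pi}}e^{-x^2/2}$ together with $\sqrt{8\pi}\cdot\sqrt{2\pi}=4\pi$, the right-hand side of the first claimed bound equals
\begin{align*}
\frac{1}{4\pi\beta\sqrt{\rsq(1-\rsq)}}\exp\!\left(-\frac{(\nquant{\beta} - \sqrt{\rsq}\,\nquant{\alpha})^2}{2(1-\rsq)}\right),
\end{align*}
which has exactly the same prefactor. Hence the whole lemma collapses to a statement about the two quadratic forms in the exponents.

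The key algebraic step is then a completing-the-square identity. For the first bound I would verify directly that
\begin{align*}
\left(\nquant{\alpha}^2 - 2\sqrt{\rsq}\,\nquant{\alpha}\nquant{\beta} + \nquant{\beta}^2\right) - \left(\nquant{\beta} - \sqrt{\rsq}\,\nquant{\alpha}\right)^2 = (1-\rsq)\,\nquant{\alpha}^2 \geq 0,
\end{align*}
which holds because $0\leq\rsq\leq 1$. Since the two exponents share the denominator $2(1-\rsq)>0$ and both carry a minus sign, a larger quadratic form in the numerator produces a smaller exponential; thus $\pder{\rsq}V$ is bounded above by the target, giving the first inequality. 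For the second bound, the identical computation with the roles of $\nquant{\alpha}$ and $\nquant{\beta}$ swapped gives the slack $(1-\rsq)\,\nquant{\beta}^2\geq 0$, so the same quadratic form also dominates $(\sqrt{\rsq}\,\nquant{\beta}-\nquant{\alpha})^2$, and the monotonicity argument closes the second inequality.

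I do not expect a genuine obstacle: once the prefactors are seen to coincide, everything reduces to the nonnegativity of $(1-\rsq)\nquant{\alpha}^2$ and $(1-\rsq)\nquant{\beta}^2$. The only points that require care are bookkeeping — matching the constant via $\sqrt{8\pi}\cdot\sqrt{2\pi}=4\pi$, and tracking signs so that a \emph{larger} (negated) quadratic form yields a \emph{smaller} value. The conceptual crux, such as it is, is simply recognizing that the symmetric form $z_1^2 - 2\rho z_1 z_2 + z_2^2$ can be bounded below by completing the square in either variable, with the discarded term being $(1-\rho^2)$ times the square of the other coordinate.
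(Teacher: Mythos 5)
Your proposal is correct and follows essentially the same route as the paper's proof: both start from \lemmaref{derivative rsq}, write out the bivariate density explicitly, and bound the quadratic form $\nquant{\alpha}^2 - 2\sqrt{\rsq}\,\nquant{\alpha}\nquant{\beta} + \nquant{\beta}^2$ below by the completed square $(\nquant{\beta}-\sqrt{\rsq}\,\nquant{\alpha})^2$ (respectively $(\sqrt{\rsq}\,\nquant{\beta}-\nquant{\alpha})^2$), with slack $(1-\rsq)\nquant{\alpha}^2$ (respectively $(1-\rsq)\nquant{\beta}^2$). The paper phrases this step as $\nquant{\alpha}^2 \geq \rsq\,\nquant{\alpha}^2$ rather than computing the difference explicitly, but the algebra and the prefactor bookkeeping are identical.
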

\begin{proof}
We know from \lemmaref{derivative rsq},
\begin{align*}
     \pder{\rsq} V(\alpha, \beta, \rsq) &= \frac{1}{2\beta\sqrt{\rsq}} \jpdf{\nquant{\alpha}, \nquant{\beta}; \rho} \\
     &=\frac{1}{4\pi\beta\sqrt{\rsq (1-\rsq)}} \\ 
     &\quad \times\exp\left(-\frac{1}{2(1-\rsq)} (\nquant{\alpha}^2 + \nquant{\beta}^2 - 2\sqrt{\rsq} \nquant{\alpha} \nquant{\beta})\right)
\end{align*}
Since $0 \leq \sqrt{\rsq} \leq 1$,
\begin{align*}
    \nquant{\alpha}^2 + \nquant{\beta}^2 - 2\sqrt{\rsq} \nquant{\alpha} \nquant{\beta} &\geq  \rsq \nquant{\alpha}^2 + \nquant{\beta}^2 - 2 \sqrt{\rsq} \nquant{\alpha} \nquant{\beta} \\
    &= (\nquant{\beta} -  \sqrt{\rsq} \nquant{\alpha})^2 \geq 0
\end{align*}
Similarly, 
\begin{align*}
    \nquant{\alpha}^2 + \nquant{\beta}^2 - 2\sqrt{\rsq} \nquant{\alpha} \nquant{\beta} \geq  (\sqrt{\rsq} \nquant{\beta} -  \nquant{\alpha})^2 \geq 0
\end{align*}
Thus,
\begin{align*}
          \pder{\rsq} V(\alpha, \beta, \rsq) &\leq \frac{1}{4\pi\beta\sqrt{\rsq (1-\rsq)}} \exp\left(-\frac{1}{2(1-\rsq)} (\nquant{\beta} -  \sqrt{\rsq} \nquant{\alpha})^2\right) \\
          &= \frac{1}{\beta\sqrt{8\pi\rsq(1-\rsq)}} \pdf{\frac{\nquant{\beta} -  \sqrt{\rsq} \nquant{\alpha}}{\sqrt{1-\rsq}}}
\end{align*}
and
\begin{align*}
          \pder{\rsq} V(\alpha, \beta, \rsq) \leq \frac{1}{\beta\sqrt{8\pi\rsq(1-\rsq)}} \pdf{\frac{\sqrt{\rsq}\nquant{\beta} -   \nquant{\alpha}}{\sqrt{1-\rsq}}}
\end{align*}
\end{proof}

\begin{lemma}
\lemmalabel{cdf pdf ratio}
The ratio
\begin{align}
    \frac{\cdf{z}}{\pdf{z}}
\end{align}
is increasing in $z$.
\end{lemma}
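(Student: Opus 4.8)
The plan is to show directly that the logarithmic-style ratio $g(z) := \cdf{z}/\pdf{z}$ has a strictly positive derivative. Differentiating by the quotient rule and using the defining ODE of the standard normal density, $\phi'(z) = -z\,\pdf{z}$ together with $\Phi'(z) = \pdf{z}$, I would compute
\begin{align*}
  g'(z) = \frac{\pdf{z}^2 - \cdf{z}\,\phi'(z)}{\pdf{z}^2}
        = \frac{\pdf{z}^2 + z\,\cdf{z}\,\pdf{z}}{\pdf{z}^2}
        = \frac{\pdf{z} + z\,\cdf{z}}{\pdf{z}}.
\end{align*}
Since $\pdf{z} > 0$ everywhere, the sign of $g'(z)$ equals the sign of the auxiliary function $h(z) := \pdf{z} + z\,\cdf{z}$, so the entire lemma reduces to establishing that $h(z) > 0$ for all $z \in \R$.

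For $z \geq 0$ this is immediate, as both $\pdf{z} > 0$ and $z\,\cdf{z} \geq 0$; the content is entirely in the regime $z < 0$, where $z\,\cdf{z} < 0$. The cleanest route I would take is to observe that $h$ is itself an antiderivative of $\cdf{z}$: differentiating gives $h'(z) = \phi'(z) + \cdf{z} + z\,\pdf{z} = \cdf{z} > 0$, so $h$ is strictly increasing. Combining strict monotonicity with the limit $\lim_{z \to -\infty} h(z) = 0$ (established below) then forces $h(z) > 0$ for every finite $z$. Equivalently, once the limit is known one can write $h(z) = \int_{-\infty}^{z} \cdf{t}\,\rd t$, which is manifestly positive since the integrand is strictly positive and the improper integral converges.

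The one genuinely nontrivial step — and the part I expect to be the main obstacle — is verifying $\lim_{z\to -\infty} h(z) = 0$, which is an indeterminate $(-\infty)\cdot 0^{+}$ form coming from $z\,\cdf{z}$. I would resolve this with the standard Mills-ratio estimate: for $z < 0$ one has the two-sided bound $\tfrac{|z|}{z^2+1}\,\pdf{z} \leq \cdf{z} \leq \tfrac{1}{|z|}\,\pdf{z}$, so $z\,\cdf{z} = -|z|\,\cdf{z}$ is squeezed between $-\pdf{z}$ and $-\tfrac{z^2}{z^2+1}\pdf{z}$. Hence $h(z) = \pdf{z} + z\,\cdf{z}$ lies between $0$ and $\tfrac{1}{z^2+1}\pdf{z}$, both of which tend to $0$ as $z \to -\infty$; the squeeze theorem gives the claimed limit. (This is precisely the type of Gaussian tail asymptotic used elsewhere in the paper, e.g.\ the bounds invoked from \citet{pmlr-v235-perdomo24a}.) With the limit in hand, the monotonicity argument closes the proof, and therefore $g'(z) > 0$, i.e.\ $\cdf{z}/\pdf{z}$ is strictly increasing in $z$.
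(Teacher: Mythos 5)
Your proof is correct, and it opens exactly as the paper's does --- both compute $\pder{z}\,\frac{\cdf{z}}{\pdf{z}} = 1 + z\,\cdf{z}/\pdf{z}$ and dispose of $z \geq 0$ immediately --- but the handling of $z < 0$ genuinely differs. You work with the numerator $h(z) = \pdf{z} + z\,\cdf{z}$, exploit the clean identity $h'(z) = \cdf{z} > 0$, and then must resolve the indeterminate limit $\lim_{z\to-\infty} z\,\cdf{z} = 0$, for which you invoke Mills-ratio tail bounds. The paper instead factors $1 + z\,\cdf{z}/\pdf{z} = \frac{z}{\pdf{z}}\bigl(\frac{\pdf{z}}{z} + \cdf{z}\bigr)$ and studies $k(z) = \frac{\pdf{z}}{z} + \cdf{z}$, which is decreasing (derivative $-\pdf{z}/z^2 < 0$) and whose limit at $-\infty$ is \emph{trivially} zero, since both summands vanish separately with no indeterminate form; hence $k(z) < 0$ for $z < 0$, and the product of two negatives is positive. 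So the paper's choice of auxiliary function buys elementarity --- no Gaussian tail estimates are needed anywhere --- while your choice buys a cleaner derivative and the pleasant representation $h(z) = \int_{-\infty}^{z}\cdf{t}\,\rd t$, which makes positivity transparent. One remark on economy: your own Mills bounds already finish the proof by themselves. The upper bound $\cdf{z} \leq \pdf{z}/\abs{z}$ for $z<0$ gives $z\,\cdf{z} \geq -\pdf{z}$, i.e.\ $h(z) \geq 0$ directly --- this is precisely the lower end of your squeeze interval --- so the monotonicity-plus-limit scaffolding is redundant once those bounds are on the table. Either streamline to that one-line argument, or keep the monotonicity route and observe that it needs only the limit, not the two-sided bound.
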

\begin{proof}
 We compute the derivative of the ratio $\frac{\cdf{z}}{\pdf{z}}$,
\begin{align*}
\pder[]{z}{\frac{\cdf{z}}{\pdf{z}}} = \frac{\phi^2(z) + z \pdf{z} \cdf{z}}{\phi^2(z)} = 1 + z \frac{\cdf{z}}{\pdf{z}}
\end{align*}
For $z \geq 0$ the derivative is clearly positive. For $z < 0$ we start by rewriting,
\begin{align*}
    1 + z \frac{\cdf{z}}{\pdf{z}} = \frac{z}{\pdf{z}} \left(\frac{\pdf{z}}{z} + \cdf{z}\right)
\end{align*}
Since $\pder{z} \left(\frac{\pdf{z}}{z} + \cdf{z}\right) = \frac{-z^2\pdf{z} - \pdf{z}}{z^2}+ \pdf{z} =  \frac{- \pdf{z}}{z^2} < 0$  and $\frac{\pdf{z}}{z} + \cdf{z} \overset{z \rightarrow -\infty}{\rightarrow} 0$, it follows for $z < 0$ that $\left(\frac{\pdf{z}}{z} + \cdf{z}\right) < 0$. Thus for any $z$,
\begin{align*}
    \pder[]{z}{\frac{\cdf{z}}{\pdf{z}}} \geq 0
\end{align*}
\end{proof}
\end{document}